\documentclass[runningheads]{llncs}
\usepackage[T1]{fontenc}
 \usepackage{amsmath}
\usepackage{amssymb}
\usepackage{algorithm}
\usepackage{wrapfig}
\usepackage[noend]{algpseudocode}
\usepackage[pdftex,dvipsnames]{xcolor}
\usepackage{orcidlink}
\usepackage{lineno}
\usepackage{tikzsymbols}
\usepackage{tikz}
\usepackage{mathtools}
\usepackage{thmtools}
\usepackage{thm-restate}
\usepackage{hyperref}
\usepackage[capitalise]{cleveref}
\usepackage{xspace}
\usepackage[new]{old-arrows}
\usepackage{float}
\usepackage{wrapfig}
\usepackage{caption}
\usepackage{subcaption}
\usepackage{graphicx}
\usepackage{mathabx}
\usepackage{stmaryrd}
\usepackage{wasysym}
\usepackage{pgfplots}
\usepackage{graphicx}
\usepackage{tikz}
\usetikzlibrary{automata,positioning,arrows.meta}
\usepackage{array}
\usepackage{longtable}
\usepackage{multirow}
\usepackage{thm-restate}

\renewcommand{\epsilon}{\varepsilon}
\renewcommand{\phi}{\varphi}
\usepackage[most]{tcolorbox}
\newtcolorbox{softbox}[1][Definition]{ 
  colframe=blue!20!white,      
  colback=blue!5!white,        
  colbacktitle=blue!15!white,  
  
  coltitle=black,              
  coltext=black,               
  
  fonttitle=\bfseries,         
  arc=5mm,                     
  boxrule=0.5pt,               
  toptitle=1mm, bottomtitle=1mm, 
  
  title={#1}
}
\newtcolorbox{simplebox}{
  colback=red!5!white,
  colframe=red!20!white,
  coltext=black,
  arc=3mm,
  boxrule=0.5pt,
  left=1.5mm, right=1.5mm,
  top=1mm, bottom=1mm
}
\usepackage[framemethod=TikZ]{mdframed}

\newcommand{\Artoo}{\textsf{R2D2}}
\newcommand{\Ceethree}{\textsf{C3PO}}

\renewcommand{\Circle}{\mathbin{\raisebox{-0.34ex}{\scalebox{1.5}{$\circ$}}}}

\newcounter{quest}[section]

\renewcommand{\Game}{\mathcal{G}}
\newcommand{\Players}{\Pi}
\newcommand{\Vertices}{S}

\newcommand{\Actions}{A}
\newcommand{\Av}{\mathsf{Av}}

\newcommand{\Terminals}{T}
\newcommand{\Dist}{\mathsf{Dist}}

\newcommand{\player}{p}

\newcommand{\action}{a}

\newcommand{\history}{h}

\newcommand{\strat}{\sigma}
\newcommand{\threshold}{t}
\newcommand{\initial}{s_0}

\newcommand{\Atoms}{Q}

\newcommand{\PlayerSet}{\varSigma}

\newcommand{\until}{\mathsf{U}}

\newcommand{\bstrat}{{\bar{\strat}}}
\newcommand{\baction}{{\bar{\action}}}
\newcommand{\bxi}{\bar{\xi}}

\newcommand{\sure}{\mathsf{sure}}
\newcommand{\almost}{\mathsf{almost}}
\newcommand{\limit}{\mathsf{limit}}
\newcommand{\ind}{\mathsf{ind}}
\newcommand{\Pre}{\mathsf{Pre}}
\newcommand{\sh}{\mathsf{sh}}

\newcommand{\team}[2]{\ensuremath{\langle \! \langle  {#1}  \rangle \! \rangle_{#2}}}
\newcommand{\teami}[2]{\ensuremath{\langle \! \langle {#1}  \rangle \! \rangle_{#2}^{\ind}}}
\newcommand{\teamr}[3]{\ensuremath{\langle \! \langle {#1}  \rangle \! \rangle_{#2}^{#3}}}

\newcommand{\opp}{\mathsf{O}}
\newcommand{\prob}{\mathbb{P}}
\newcommand{\etr}{\exists\Rb}

\newcommand{\Nb}{\mathbb{N}}
\newcommand{\Rb}{\mathbb{R}}

\newcommand{\Qb}{\mathbb{Q}}

\newcommand{\NP}{\mathsf{NP}}
\newcommand{\EXPSPACE}{\mathsf{EXPSPACE}}
\newcommand{\SAT}{\mathsf{SAT}}

\newcommand{\SQRTSUM}{\mathsf{SQRTSUM}}
\renewcommand{\P}{\mathsf{P}}

\newcommand{\coNP}{\mathsf{coNP}}
\newcommand{\PSpace}{\mathsf{PSPACE}}

\newcommand{\Ocomplexity}{\mathcal{O}}

\usepackage{xargs} 

\begin{document}
\title{Randomise Alone, Reach as a Team}
%
%
\author{
Léonard Brice\inst{1}\orcidlink{0000-0001-7748-7716} 
\and
Thomas A. Henzinger\inst{1}\orcidlink{0000-0002-2985-7724} 
\and
Alipasha Montaseri\inst{1}\orcidlink{0000-0003-3262-9332}
\and
Ali Shafiee\inst{1}\orcidlink{0009-0008-5947-1946}
\and
K. S. Thejaswini\inst{2}\orcidlink{0000-0001-6077-7514}}
 \authorrunning{L. Brice, T. A. Henzinger, A. Montaseri, A. Shafiee, and K. S. Thejaswini}
 \institute{Institute of Science and Technology Austria, Austria 
 \email{\{leonard.brice,tah, alipasha.montaseri,ali.shafiee\}@ista.ac.at}\\
  \and
 Université libre de Bruxelles, Belgium\\
 \email{thejaswini.raghavan@ulb.be}
 }

\maketitle            
\begin{abstract}
We study concurrent graph games where $n$ players cooperate against an opponent to reach a set of target states. Unlike traditional settings, we study distributed randomisation: team players do not share a source of randomness, and their private random sources are hidden from the opponent and from each other. 

We show that memoryless strategies are sufficient for the threshold problem (deciding whether there is a strategy for the team that ensures winning with probability that exceeds a threshold), a result that not only places the problem in the Existential Theory of the Reals ($\exists\mathbb{R}$) but also enables the construction of value iteration algorithms. We additionally show that the threshold problem is $\NP$-hard. For the almost-sure reachability problem, we prove $\NP$-completeness.

We introduce Individually Randomised Alternating-time Temporal Logic (IRATL). This logic extends the standard ATL framework to reason about probability thresholds, with semantics explicitly designed for coalitions that lack a shared source of randomness. 
On the practical side, we implement and evaluate a solver for 
the threshold and almost-sure problem 
based on the algorithms that we develop.
\end{abstract}


\section{Introduction}
Multi-agent systems provide a robust mathematical framework for modelling complex interactions across a diverse array of domains. From the formal verification of reactive systems~\cite{FKNP11,MultAgentVerification96} and cyber-physical architectures~\cite{SEC16,LFB20} to the study of epidemic processes~\cite{Lef81} and distributed probabilistic programs~\cite{dAHJ01}, the interplay between autonomous entities is central to ensuring system correctness.
In these models, such entities are typically represented as \emph{players}, repeatedly and concurrently choosing among some set of \emph{actions}. While classical verification often considers a single controller against an adversarial environment, there are plenty of applications that demand multi-agent models with access to varying degrees of cooperation and information.

A traditional tool to capture what coalitions of players can achieve in such settings is Alternating Temporal Logic (ATL)~\cite{AHK02}.
ATL captures specifications of the following form: ``Can a coalition achieve a given temporal objective against all strategies of the other players?''
In such a setting, one need only consider deterministic strategies, since randomising among several actions does not bring any additional power.
Important extensions, however, are Randomised ATL (RATL)~\cite{AHK07} and Probabilistic ATL (PATL)~\cite{CL07}, both of which capture notions such as almost-sure and limit-sure winning---and the latter considers threshold winning, that is, guaranteeing that the probability of winning exceeds a given threshold $\threshold$. 
For those types of specifications, randomised strategies are strictly more powerful than deterministic ones.
Nevertheless, the semantics of RATL and PATL allow coalitions to randomise their actions jointly, implicitly assuming that they can resort to shared sources of randomness, or equivalently, that they can communicate about their random choices privately.
Such a hypothesis simplifies significantly the setting, since the coalition can then be seen as a single player; but it is not realistic in many real-life distributed and multi-agent systems, where challenges often lie, precisely, in the inability of the agents to communicate.
This paper aims at bridging this gap.

\paragraph{An example.}
Picture an object that needs to be moved to the other side of a sliding door. 
Two robots, named $\Artoo$ and $\Ceethree$, can
choose at each step to move the object either to the left or to the right,
while an adversarial environment moves the sliding door, such that one among the left and the right side is open at each time step. 
If both $\Artoo$ and $\Ceethree$ choose the same direction as the environment, then the object moves to the other side and they win. If  $\Artoo$ and $\Ceethree$ choose the same side, but the environment opens the other side
then the object remains stationary, and they have to try again. If they choose different sides,
the object breaks due to the opposing forces.
This situation is illustrated in \Cref{fig:firstexample}.
With what probability can $\Artoo$ and $\Ceethree$ ensure reaching $s_\mathsf{goal}$?

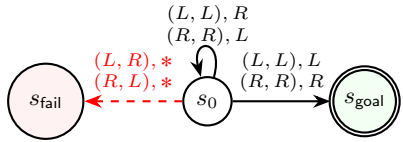
\begin{figure}
    \centering
\begin{tikzpicture}[
    >=Stealth, node distance=1.3cm, auto,
    state/.style={circle, draw, minimum size=0.5cm, thick, align=center},
    decision/.style={circle, draw, fill=blue!5, minimum size=0.7cm, thick},
    goal/.style={circle, draw, double, fill=green!5, minimum size=0.7cm, thick},
    fail/.style={circle, draw, fill=red!5, minimum size=1cm, thick}
]

    \node[state] (s0) {$s_0$};
    \node[goal, right =of s0] (sgoal) {$s_\mathsf{goal}$};
    \node[fail,  left=of s0] (sfail) {$s_{\mathsf{fail}}$};
    
    \draw[->, thick] (s0) -- node[sloped, above, font=\scriptsize, text width=4cm, align=center] {
     $(L,L),L$ \\$(R,R),R$
    }  (sgoal);

       \draw[->, thick, red, dashed] (s0) -- node[sloped, above, font=\scriptsize,text width=4cm, align=center] {$(L,R),*$ \\$(R,L),*$} (sfail);
       \draw[->, thick] (s0) to [out=75,in=105,looseness=8] (s0);
       \path[loop above, thick] (s0) to node [above, yshift=0.2cm, font=\scriptsize] {$(L,L),R$ } node[above,yshift=-0.1cm, font=\scriptsize] {$(R,R),L$} (s0);
\end{tikzpicture}
\caption{The two agents choose among actions $\{L,R\}$ and the environment chooses among actions $\{L,R\}$.}
    \label{fig:firstexample}
\end{figure}

Consider first the setting where $\Artoo$ and $\Ceethree$ have access to a shared source of randomness, which can't be observed by the adversary. Such games can be viewed as standard two-player concurrent games, where the team acts as a single meta-player.
By playing the joint actions $(L, L)$ or $(R, R)$ with probability $\frac{1}{2}$ each,
$\Artoo$ and $\Ceethree$ ensure that for any environmental choice, the probability of advancing
is $\frac{1}{2}$, and the probability of reaching the failure state is $0$. 
By repeating this, 
it is almost sure that the target will eventually be reached.

However, when $\Artoo$ and $\Ceethree$ do not share a source of randomness nor a private communication channel,
they have to randomise their actions independently.
One can show that against any such strategy profile, the environment can respond in such a way that the probability of eventually reaching the target does not exceed $\frac{1}{3}$. 
Observe here that the severity of this result relies on the order of the quantifiers: we assume that the team fixes a collective strategy first, and that the environment responds to it, hence the value $\frac{1}{3}$ is what is called in the literature the \emph{max-min value}.
If, instead, the environment must commit to a strategy first, then $\Artoo$ and $\Ceethree$ can respond deterministically and reach the target state almost surely (with probability $1$), which is called the \emph{min-max value}. It is because the team can always coordinate on a single deterministic action (e.g., both constantly playing $L$ or $R$) that the environment plays with positive probability. By this, they can avoid the failure state and ensure they eventually progress.
We assume the burden of pessimism and focus on the max-min value.


\paragraph{Our Results.}
We first start with the following problem for team games.
\begin{simplebox}
    \textbf{Threshold problem:} 
    Given a game, 
    given a rational threshold $t\in[0,1]$, does there exist a \emph{collective} strategy for the team 
    that ensures the target is reached with probability strictly greater than $t$?
\end{simplebox}
Our first main result establishes that when there exists such a collective strategy, there exists a memoryless one (\Cref{thm:memmless-optimal}).
This implies that these games can be fully characterised by local, state-dependent strategies. 
Therefore, given an instance of the threshold problem, we can construct a formula in the Existential Theory of the Reals (ETR) that is satisfiable if and only if we have a positive instance.
This proves that the threshold problem 
lies in the complexity class $\exists\Rb$ (itself contained in $\PSpace$~\cite{Can88,SCM24}). 
We also prove that the problem is $\NP$-hard using a reduction from $k$-clique problem. This is unlike the situation for two-player concurrent games, where no such hardness results are known. In fact, for two-player concurrent games (or for games where the team can share randomness, which reduces to the latter) the best known lower bound for the threshold problem is $\SQRTSUM$-hardness~\cite{EY15} and $\P$-hardness derived from turn-based reachability games~\cite{Imm81} and the best known upper bound is the complexity class $\exists\Rb$~\cite{CDH12}.
While our encoding into  ETR provides an algorithm, such large logical sentences are notoriously difficult for existing solvers to solve. 
To provide a more computationally accessible approach, we also develop a Value Iteration (VI) algorithm that converges to the optimal max-min value---even though it might never reach it.

Secondly, we prove $\NP$-completeness of almost-sure reachability.
\begin{simplebox}
    \textbf{Almost-sure problem:} 
    Given a game, does there exist a collective strategy for the team that ensures that the target is reached with probability~$1$?
\end{simplebox}

While this problem lies in $\P$ in a two-player setting,
we show that it is $\NP$-hard even with three players.
Membership in $\NP$ is obtained by proving that memoryless strategies also suffice in this setting---but the proof requires techniques different from that of the threshold problem.

Thirdly, we implement solvers for the threshold and almost-sure problems, and evaluate them on existing benchmarks that we modify to fit our setting. 
For the threshold problem, we first encode the entire game as a single formula in $\exists\Rb$ and use SMT solvers; while this offers strong theoretical guarantees, the formula's complexity often leads to timeouts. In contrast, we employ VI algorithms using various termination criteria and heuristics. These heuristics provide \emph{sound under-approximations}, and terminate with results close to the actual max-min values in our benchmarks. Because VI methods query the $\exists\Rb$ solver using smaller formulas, they terminate significantly faster than solving the whole encoding. For the almost-sure case, given its $\NP$-completeness, we utilise $\SAT$ solvers on the succinct $\SAT$ encodings that we construct.
We use state-of-the-art PRISM-games to solve the threshold and the almost-sure version of the strict subcase of the games with shared randomness as a baseline. 
Despite solving a computationally harder game, our solvers achieve runtimes comparable to PRISM-games~\cite{KN0S20}.

Finally, we propose a new logic called \emph{Individually Randomised ATL (IRATL)} to capture the inability of a coalition to randomise collectively. The syntax of IRATL extends ATL 
with operators capturing the fact that a coalition can enforce an objective without using any shared randomness.
For example, considering our robot team, we would write the formula $\teamr{\Artoo, \Ceethree}{\almost}{\mathsf{sh}} (\Diamond\{s_{\mathsf{goal}}\})$ to check if the team can reach $s_{\mathsf{goal}}$ almost-surely using \emph{shared} ($\mathsf{sh}$) randomness. In contrast, to query if the team can reach the target with a probability strictly greater than $0.3$ using \emph{independent} ($\mathsf{ind}$) sources of randomness, we write $\teami{\Artoo, \Ceethree}{> 0.3} (\Diamond \{s_{\mathsf{goal}}\})$. As established previously, both sentences hold true in their respective settings. We effectively solve the model-checking problem for a key fragment of this logic.

\subsubsection{Related work}
\paragraph{Logics for stochastic team games.} 
Alternating-time Temporal Logic (ATL) was first presented in the work of Alur et al.~\cite{AHK02}, and has inspired a rich ecosystem of extensions.
While the ATL semantics considers only deterministic moves, several of those extensions introduce randomness into the picture, such as Randomised ATL (RATL)~\cite{AHK07}, PATL~\cite{CL07}, PAMC~\cite{pamc}, rPATL~\cite{Chen2012}, and SGL~\cite{10.1007/s00236-012-0156-0}.
SGL and rPATL are defined only on turn-based games, where the distinction between shared and individual randomisation is not relevant.
RATL, PATL, and PAMC assume that players in a coalition can resort to shared randomisation, making it possible to reduce the setting to a standard two-player zero-sum game.\footnote{PATL and PAMC were actually defined without this hypothesis, but a careful reading of the proofs in the cited papers shows that it is then implicitly introduced, perhaps inadvertently. This ambiguity has been propagated across the subsequent literature: all results seem to assume shared randomisation, but the definitions are sometimes consistent with that hypothesis~\cite{10.1007/978-3-642-28756-5_22} and sometimes not~\cite{zhangpang2010}.
Let us also note that~\cite{CL07} claims that model-checking for PATL is in $\NP \cap \coNP$ by invoking a reduction from LTL objectives to parity ones; however, such a reduction is not polynomial-time.}

Strategy logic~\cite{CHP10} is another well-known extension of ATL, that considers only deterministic strategies.
Its extension Probabilistic Strategy Logic (PSL)~\cite{probstratlog} is very expressive, and since strategies are represented as variables, individual randomisation is implicitly implied.
But this expressivity comes at the cost of undecidability in the general case, and very high complexities even when the players are restricted to memoryless strategies: the easiest known fragment that subsumes our setting is only known to be in $\EXPSPACE$.

Finally, probabilistic Resource-Bounded ATL (pRB-ATL, \cite{pRBATL}) is another logic that subsumes our setting, with a broader semantics including limited resources for the agents.
No complexity upper bound is known for pRB-ATL model-checking.

\paragraph{Other multiplayer game settings.}
Several works including those on population games~\cite{BBLM25}, concurrent parameterised games~\cite{BBM19}, and synthesis of coalition strategies~\cite{BBM20}, work on settings with an unknown number of players as opposed to the fixed number of players in our case. Fixing the number of players enables us to obtain value-iteration-like algorithms, more amenable to implementation. A recent work on one-shot games with partially shared randomness (various sources of randomness, each of them being accessible to a given subset of players) shows that the threshold problem with a non-strict inequality (instead of strict, as in our setting) is $\exists\Rb$-complete~\cite{BHT26} for one-shot games. However, neither the hardness nor the easiness results extend to our setting.
\paragraph{Fixed environments and multi-agent learning.} 
As a side observation, we consider settings where the environment is memoryless (modelling fixed but unknown stochastic process). 
This  setting reduces to multi-agent reinforcement learning, where agents must learn to coordinate in a stochastic environment~\cite{Gue02,ZYB20,KV06}. 
\paragraph{Tools.} The landscape of automated verification for multi-agent systems is rich with tools, yet, to the best of our knowledge, none natively support the independent randomisation constraint central to our work. The most prominent tool, PRISM-games~\cite{CFKPS13,KN0S20}, extends probabilistic model checking to the two-player version. While PRISM-games can verify properties for a coalition of players (e.g., using the logic rPATL), it treats it as a single ``meta-player'' capable of correlated strategies. Tools such as MCMAS~\cite{LQR17} and EVE~\cite{EveTool} enable strategic reasoning via ATL and Nash equilibrium synthesis. However, MCMAS focuses primarily on qualitative (Boolean) verification and epistemic properties, while EVE and related Nash solvers (e.g., PRISM-Nash) target non-zero-sum equilibria.

\section{Preliminaries}

\subsection{Probabilities}
Given a finite set $S$, a \emph{probability distribution} over $S$ is a mapping $d: S \to [0,1]$ that satisfies the equality $\sum_{x \in S} d(x) = 1$.
The set of probability distributions over $S$ is denoted by $\Dist S$.

\subsection{Games}

We use the word \emph{game} for team concurrent stochastic reachability games.

\begin{definition}[Game structure, game]
    A \emph{game structure} is a tuple $G = (\PlayerSet, \Vertices, (\Actions_p)_{p \in \PlayerSet}, (\Av_p)_{p \in \PlayerSet}, \delta)$, 
    consisting of:
    \begin{itemize}
        \item a set $\PlayerSet$ of \emph{players};

        \item a set $\Vertices$ of \emph{states};

        \item for each player $p$, a set $\Actions_p$ of \emph{actions};

        \item for each player $p$ and state $s$, a set $\Av_p(s) \subseteq \Actions_p$ of actions available at state $s$ (always assumed to be nonempty);

        \item a \emph{transition function} $\delta: \Vertices \times \prod_{p \in \PlayerSet} \Actions_p \to \Dist(\Vertices)$ that maps each tuple $(s, (a_p)_p)$ with $a_p \in \Av_p(s)$ for each $p$ to a probability distribution over $\Vertices$.
    \end{itemize}
    
    A \emph{game} is a tuple $\Game = (G, \Players, T, \initial)$, that consists of a game structure equipped with a \emph{team} $\Players \subseteq \PlayerSet$, a set $T \subseteq \Vertices$ of \emph{target states}, and an \emph{initial state} $\initial \in \Vertices$.
\end{definition}







When referring to a game $\Game$ or a game structure $G$, we often use the notations $\Vertices$, $\delta$, etc., without recalling them.
Given a set $P \subseteq \PlayerSet$ and a vertex $s$, we often write $\Av_P(s)$ (or $\Av(s)$ if $P = \PlayerSet$) to denote the product $\Av_P(s) = \prod_{p \in P} \Av_p(s)$.
For technical reasons, we assume that the target set $T$ is \emph{absorbing}, i.e. that for every state $t \in T$ and action profile $\baction$ we have $\delta(t, \baction)(t) = 1$.

We call \emph{play} (resp. \emph{history}) in the game structure $G$ an infinite (resp. finite) word over the alphabet $S$.
We sometimes say that the team \emph{wins} the play $\pi$ if that play contains a target state $s \in T$, and \emph{loses} it otherwise.

\subsection{Strategies, and strategy profiles}

A \emph{strategy} for player $\player \in \PlayerSet$ is a mapping $\strat_\player$ that maps each history $\history s$ to a probability distribution $\strat_\player(\history s) \in \Dist\, \Av_\player(s)$.
A \emph{strategy profile} for the set $P \subseteq \PlayerSet$ is a tuple $\bstrat_P = (\strat_\player)_{\player \in P}$, where each $\strat_\player$ is a strategy for player $\player$.
A \emph{collective strategy} is a strategy profile $\bstrat_\Players$ for the team.
A \emph{complete} strategy profile is a strategy profile $\bstrat_{\PlayerSet}$ for $\PlayerSet$, often written $\bstrat$, without subscript.
Given two strategy profiles $\bstrat_P$ and $\bar{\tau}_Q$ with $P \cap Q = \emptyset$, we write $(\bstrat_P, \bar{\tau}_Q)$ for the strategy profile $\bar{\eta}_{P \cup Q}$, where $\eta_\player = \strat_\player$ if $\player \in P$, and $\tau_\player$ if $\player \in Q$.
Given two strategy profiles denoted by $\bstrat_P$ and $\bstrat_{\PlayerSet \setminus P}$, we simply write $\bstrat$ for the complete strategy profile $(\bstrat_P, \bstrat_{\PlayerSet \setminus P})$.
Given a strategy profile $\bstrat_P$ and a history $h$, we write $\bstrat(h)$ for the induced distribution over joint actions, i.e., for the distribution $\bstrat(h): \baction_P \mapsto \prod_{p \in P}  \strat_p(h)(\action_p)$.

A complete strategy profile $\bstrat$ defines a probability distribution $\prob_\bstrat$ over  plays, where for each history $hs$ and state $t$, we have:
$$\prob_\bstrat(hst S^\omega) = \sum_{\baction \in \Av(s)} \bstrat(hs)(\baction) \cdot \delta(s, \baction)(t).$$

A strategy $\strat_p$ is \emph{deterministic} if for every history $h$, there exists an action $a$ such that $\strat_p(h)(a) = 1$.
It is \emph{memoryless} if for every state $s$ and every history $h$, we have $\strat_p(hs) = \strat_p(s)$.

\subsection{Problems}

Throughout this paper, our goal is to design collective strategies for teams that guarantee a certain property against every strategy of the other players.
We always assume that the team players fix their strategies first, and the other players (the \emph{opponents}) respond to it.
From the opponents' perspective, responding to a collective strategy amounts to finding optimal strategies in a Markov decision process.
It is a well-known result that deterministic strategies are sufficient in such a setting.
Consequently, contrary to the team players, the fact that the opponents do not share a common source of randomness is not a restriction, and the opponents can be considered as a single player.
For convenience, we will therefore always assume that the set $\PlayerSet \setminus \Players$ is a singleton $\{\opp\}$, where player $\opp$ is called \emph{opponent}.
In such a setting, a natural question is to ask whether the team players can win the game with a probability greater than a given threshold.

\begin{problem}[Threshold problem]\label{problem:threshold}
    Given a game $\Game$ and a threshold $\threshold \in \Qb \cap [0, 1]$, does there exist a collective strategy $\bstrat_\Players$ such that for every strategy $\strat_\opp$, the set $\Terminals$ is reached with probability strictly greater than $t$?
\end{problem}

A less ambitious question is to ask whether they can win with probability~$1$.

\begin{problem}[Almost-sure problem]\label{problem:almostSure}
    Given a game $\Game$, does there exist a collective strategy $\bstrat_\Players$ such that for every strategy profile $\bstrat_\opp$, the set $\Terminals$ is reached with probability 1?
\end{problem}

In the two next sections, we design algorithms for those two problems.

\section{The threshold problem}\label{sec:Threshold}
 In this section, we analyse the structure of the team's strategies and establish the theoretical foundations for solving Problem~\ref{problem:threshold}. 
  The main result of this section is \cref{thm:memmless-optimal}, which establishes that memoryless strategies are sufficient. 

 Building on this result, we provide a decision procedure by encoding the threshold problem as an ETR formula.
 Finally, we end the section with a result on $\NP$-hardness for the threshold problem, which adds to the $\SQRTSUM$-hardness result that is already known from the two-player case~\cite{EY15}.

\subsection{Optimality of memoryless strategies} 

\label{sec:OptimalityofMemorylessStrategies}

In this subsection, we prove that memoryless strategies suffice to exceed a given threshold.
The structure of the proof is similar to that of the same result in standard two-player zero-sum concurrent games~\cite{CDH12}.
However, the latter heavily relies on the equality of the max-min and min-max values, which no longer holds in our setting.
Our proof requires therefore more technical effort.

\subsubsection{Value iteration.}
As a step toward our result, and as an algorithmic tool, we define a value iteration sequence.
A \emph{valuation} is a mapping \( v : \Vertices \to [0, 1] \).
We define the \emph{predecessor operator} based on this intuition: for a fixed valuation $v$ and a state $s$, we consider a \emph{local one-shot game} played in $s$. In this local game, the team players choose a joint distribution over their available actions, and the opponent responds with an action as well.
If $t$ is the subsequent state, then the team gets \emph{payoff} $v(t)$.
The team players want to maximise the expected payoff, while the opponent wants to minimise it. 
To avoid confusion, strategies in this one-shot game are \emph{selectors}.
The predecessor operator is defined in three steps: given fixed strategies for both the team and the opponent, given a fixed team strategy against an optimal opponent, and finally, the optimal one-step value.

First, given a valuation $v$, a collective selector $\bxi_{\Players}$, and an opponent selector $\xi_{\opp}$, we define the expected payoff at state $s$ as:
$$
    \mathsf{Pre}_{\bxi}(v)(s) 
    \;=\; 
    \sum_{\baction \in \Av_{\Players}(s)} \sum_{b \in \Av_{\opp}(s)} \sum_{t \in \Vertices} 
    \bxi_{\Players}(s)(\baction) \cdot \xi_{\opp}(s)(b) \cdot \delta(s, \baction, b)(t) \cdot v(t).
$$
We now define the value guaranteed by a fixed collective selector $\bxi_{\Players}$:
\[
    \mathsf{Pre}_{\bxi_{\Players}}(v)(s)
    \;=\;
    \inf_{\xi_{\opp} \in \Lambda_{\opp}(s)}
    \mathsf{Pre}_{\bxi}(v)(s)
\]
where $\Lambda_{\opp}(s) = \Dist(\Av_\opp)$ is the set of selectors for the opponent in state $s$.

Finally, we define the \emph{predecessor operator} $\mathsf{Pre}$, which represents the maximum value the team can guarantee in one step against the opponent:
\[
    \mathsf{Pre}(v)(s)
    \;=\;
    \sup_{\bxi_{\Players} \in \Lambda_{\Players}(s)}
    \inf_{\xi_{\opp} \in \Lambda_{\opp}(s)}
    \mathsf{Pre}_{\bxi}(v)(s)
\]
where $\Lambda_{\Players}(s)$ is the set of collective selectors for the team in state $s$.

We can then define a sequence $(v_n)_n$ of valuations as follows: the valuation $v_0$ maps each target state to the value $1$ and each other state to the value $0$, and for each $n$, we have $v_{n+1} = \Pre(v_n)$.
We then have the following result.

\begin{restatable}[App.~\ref{app:VIconvergencetovalue}]{lemma}{lmVIconvergencetovalue}\label{lm:VIconvergencetovalue}
    For every vertex $s$, the sequence $(v_n(s))_n$ converges to the max-min value of the game, when initialised in $s$.
\end{restatable}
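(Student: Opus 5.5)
We show that the limit $v^* = \lim_n v_n$ exists and coincides with the max-min value $\mathrm{val}(s) = \sup_{\bstrat_\Players}\inf_{\strat_\opp}\prob_{\bstrat}(\Diamond T)$ from $s$. We split the argument into monotonicity, a lower bound, and an upper bound.

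\textbf{Monotonicity and existence of the limit.} First, $\Pre$ is monotone: if $v \le v'$ pointwise, then $\Pre_{\bxi}(v)(s) \le \Pre_{\bxi}(v')(s)$ for all selectors, and this is preserved by $\inf$ and $\sup$. Since $T$ is absorbing and $v_0$ equals $1$ on $T$, we have $v_1 \ge v_0$: on $T$ both are $1$, and elsewhere $v_0 = 0$. By induction, $(v_n)$ is nondecreasing and bounded by $1$, so it converges pointwise to some $v^*$.

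\textbf{Lower bound $\mathrm{val}(s) \ge v_n(s)$ for every $n$.} We prove by induction on $n$ that for every $\epsilon>0$ the team has a strategy reaching $T$ within $n$ steps with probability at least $v_n(s)-\epsilon$, against every opponent strategy.
\begin{itemize}
\item For the step, pick in $s$ a collective selector $\bxi_\Players$ with $\inf_{\xi_\opp}\Pre_{\bxi}(v_{n})(s) \ge v_{n+1}(s)-\epsilon/2$.
\item After the successor $t$ is revealed, continue with the $\epsilon/2$-optimal strategy for $v_n(t)$ given by induction.
\end{itemize}
The product selector is itself a profile of independent strategies, and concatenation preserves independence, because each player's choice after a history depends only on that history. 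This is the point where the lack of shared randomness is harmless. The opponent's best response conditional on each history is then bounded below by $\Pre_{\bxi}(v_n)(s)-\epsilon/2$. Letting $n\to\infty$ gives $\mathrm{val}\ge v^*$.

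\textbf{Upper bound $\mathrm{val}(s) \le v^*(s)$.} This is the main obstacle, since max-min and min-max differ and we cannot invoke a min-max theorem. Fix any collective strategy $\bstrat_\Players$ and any $\epsilon>0$; we construct an opponent response $\strat_\opp$ that keeps the winning probability at most $v^*(s)+\epsilon$. The key point is that $v^*$ is a fixed point of $\Pre$. This follows by continuity: $\Pre$ is a sup over a compact set of infima of functions that are continuous in $v$, and $\Pre(v)$ is $1$-Lipschitz in sup-norm, so $\Pre(v^*) = \lim_n \Pre(v_n) = v^*$.

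The opponent plays as follows. After each history $hs$, the team uses some fixed (product) selector $\bstrat_\Players(hs)$. Since $\Pre(v^*)(s)=v^*(s)$ bounds the sup over all selectors, the opponent can pick an action $b$ with $\Pre_{(\bstrat_\Players(hs),b)}(v^*)(s) \le v^*(s)$. Under this response the process $v^*(s_k)$ is a supermartingale.

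A supermartingale alone only bounds reaching $T$ by $v^*$ if $v^*$ correctly assigns $0$ to states from which $T$ is avoided forever. To fix this, we use the finite-horizon approximation instead of the fixed point:
\begin{itemize}
\item Since $\prob(\Diamond T) = \lim_N \prob(\Diamond^{\le N} T)$ for the resulting Markov chain, the opponent needs $N$ large enough.
\item The subtlety is that $N$ must be chosen before the team's randomness plays out. The usual argument is that the opponent's best response to a fixed $\bstrat_\Players$ is an MDP problem, and MDP reachability values are limits of finite-horizon values.
\item Over a horizon $N$, the opponent plays greedily against $v_{N-k}$ at step $k$. This yields $\prob(\Diamond^{\le N}T) \le v_N(s) \le v^*(s)$ by backward induction, using only that $\Pre(v_{m})=v_{m+1}$ bounds the given team selector.
\end{itemize}
So for every $N$, $\inf_{\strat_\opp}\prob(\Diamond^{\le N}T)\le v^*(s)$.

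It remains to exchange $\inf_{\strat_\opp}$ with $\lim_N$. For a fixed $\bstrat_\Players$, the opponent faces an MDP with countably many histories, and the function $\strat_\opp \mapsto \prob(\Diamond T)$ is a monotone limit of the finite-horizon probabilities. We take a limit (via compactness of deterministic opponent strategies in the product topology) of the finite-horizon responses: the probability of $\Diamond^{\le M}T$ is continuous in the strategy for each fixed $M$, and $M\le N$, which gives $\inf_{\strat_\opp}\prob(\Diamond T)\le v^*(s)$. This compactness/exchange step is the delicate part, and we would check it carefully.
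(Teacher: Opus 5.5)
Your proof is correct, and its decomposition matches the paper's. Both arguments use monotone convergence of $(v_n)$. Both obtain the lower bound from a finite-horizon team strategy stitched together from one-step selectors; the paper uses exactly attained maximisers from its extreme-value lemma, while you use $\epsilon/2$-optimal ones, which works equally well. Both obtain the upper bound from an opponent who plays greedily against $v_{N-k}$ at step $k$, giving $\prob(\Diamond^{\le N} T)\le v_N(s)$.

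The real difference is the passage from horizon $N$ to $\Diamond T$. The paper rewrites $\forall n\,\forall\bstrat_\Players\,\exists\strat_\opp$ as $\forall\bstrat_\Players\,\exists\strat_\opp\,\forall n$, justified only by ``since $u_n\le w$''. That justification does not address the fact that the opponent's response depends on $n$. Your compactness argument supplies what is missing:
\begin{itemize}
\item deterministic opponent strategies form a compact metrisable product space;
\item $\prob(\Diamond^{\le M}T)$ depends only on the finitely many histories of length below $M$, so it is continuous;
\item taking a limit point of the horizon-$N$ responses $\strat_N$ and using monotonicity in $M$ then gives the bound for $\Diamond T$.
\end{itemize}
This makes that step of the paper rigorous.

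You dismissed your first route too quickly, though. Spurious fixed points only threaten lower bounds, not upper bounds. Since $v^*\ge 0$ and $v^*=1$ on the absorbing set $T$, the almost-sure limit of the bounded supermartingale $v^*(s_k)$ equals $1$ on the event $\Diamond T$. Hence $\prob(\Diamond T)\le \mathbb{E}[\lim_k v^*(s_k)]\le v^*(s)$ directly. Combined with your correct nonexpansiveness argument for $\Pre(v^*)=v^*$, this gives a single $N$-independent opponent response and avoids compactness altogether. Either route closes the gap the paper leaves at this step.
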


As we will show in \Cref{sec:experiments}, this value iteration sequence also entails a sound under-approximation algorithm. For now, let us use this tool to prove our result.

\subsubsection{Memoryless strategies are sufficient.}

\begin{restatable} [App.~\ref{app:memmless-optimal}]{theorem}{thmmemlessoptimal}\label{thm:memmless-optimal}
    Let $\Game$ be a game and let $\threshold$ be a threshold.
    If there is a collective strategy $\bstrat_\Players$ that guarantees victory with probability strictly greater than $\threshold$ against every opponent's strategy, then there is a memoryless one.
\end{restatable}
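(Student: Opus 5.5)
Our plan follows the structure of the proof for two-player concurrent reachability games~\cite{CDH12}, using the value iteration sequence of \Cref{lm:VIconvergencetovalue} in place of min-max equalities. Let $v^*$ denote the max-min value. If some collective strategy guarantees probability $>\threshold$ from $\initial$, then $v^*(\initial) > \threshold$. By \Cref{lm:VIconvergencetovalue}, there is therefore an index $n$ with $v_n(\initial) > \threshold$. We will build a memoryless collective strategy whose guaranteed value is at least $v_n$ at every state. This strategy cannot in general achieve $v^*$ itself, but it suffices because the inequality is strict.

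\textbf{Step 1: locally optimal selectors.} Since the values $v_k$ increase monotonically in $k$ (induction on $k$), each state $s$ has a least index $k(s) \le n$, the \emph{rank} of $s$, at which $v_k(s)$ first reaches its final value $v_n(s)$. For each non-target state $s$ with $v_n(s) > 0$, we pick a collective selector $\bxi_\Players(s)$ attaining $\Pre(v_{k(s)-1})(s) = v_{k(s)}(s)$. The supremum is attained: for a fixed collective selector, the infimum over opponent selectors is a minimum over finitely many pure actions, which is continuous in the product-of-simplices variable, and the domain is compact. Playing the memoryless strategy $\bstrat_\Players$ defined by these selectors (arbitrary elsewhere), every opponent action $b$ gives $\sum_t \delta(s,\bxi(s),b)(t)\, v_{k(s)-1}(t) \ge v_n(s)$. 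So in expectation the next state carries $v_{k(s)-1}$-value at least $v_n(s)$.

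\textbf{Step 2: the main obstacle, turning local guarantees into a global one.} With memory, one would play $v_{n-1}$-optimal selectors, then $v_{n-2}$-optimal ones, and so on, reaching $T$ within $n$ steps with probability $\ge v_n(\initial)$. Memorylessly, the selector at $s$ depends only on $s$. We will fix the memoryless profile and let the opponent best-respond. This yields an MDP in which the opponent minimises reachability, so a deterministic memoryless opponent strategy $\tau$ is optimal, and we get a Markov chain. We want to show that the reachability probability $x(s)$ in this chain satisfies $x \ge v_n$. The potential $\Phi(s) = v_{k(s)}(s) = v_n(s)$ is a supermartingale-type bound in the wrong direction, so we argue via ranks instead. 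Since $v_{k(s)-1} \le v_n$ pointwise, the chain satisfies $\mathbb{E}[v_n(\text{next}) \mid s] \ge v_n(s)$, so $v_n$ is sub-harmonic. To conclude $x \ge v_n$, we must also rule out the chain staying forever in non-target states with positive $v_n$ (the classical obstacle, since sub-harmonic functions need not bound reachability). This is where ranks help. Consider a bottom strongly connected component $C$ of the chain avoiding $T$, and a state $s \in C$ of minimal rank $k(s)$ with $v_n(s) > 0$. Its successors $t$ in $C$ satisfy $v_{k(s)-1}(t) \le v_{k(s)-1}(t)$ trivially, but we need $v_{k(s)-1}(t) < v_n(t)$ for $t$ of rank $\ge k(s)$. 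Summing the weights then contradicts the choice, unless every state in $C$ has $v_n = 0$. We expect this step to need care, possibly by choosing ranks via the maximum of $v_n$ over $C$ and comparing $v_{k-1}$ against that maximum. If it fails, we will instead perturb the target to $v_n - \epsilon$ with strict improvement, using ties broken by $k$.

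\textbf{Step 3: conclusion.} A sub-harmonic bounded function that vanishes on every bottom component outside $T$ and is at most $1$ on $T$ is dominated by the reachability probability (by the martingale convergence theorem). Hence $x(\initial) \ge v_n(\initial) > \threshold$ against the optimal opponent, and therefore against all opponent strategies, which gives the theorem.
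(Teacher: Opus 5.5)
Your construction matches the paper's: the rank $k(s)$ is the least index with $v_k(s)=v_n(s)$, and the selector attains $\Pre(v_{k(s)-1})(s)$. Steps 1 and 3 are sound. But Step 2, the only place where the ranks do any work, is not established, and the version you actually write fails.

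If $s$ is chosen of minimal rank among the states of $C$ with $v_n>0$, then for a successor $t\in C$ you only get $v_{k(s)-1}(t)\le v_{k(t)-1}(t)<v_n(t)$. Since $v_n(t)$ may exceed $v_n(s)$, the average $\sum_t \delta(s,\bxi(s),\tau(s))(t)\,v_{k(s)-1}(t)$ can still be at least $v_n(s)$. For example, take $v_n(s)=0.3$, $k(s)=2$, and every successor $t$ with $v_n(t)=0.9$, $k(t)=3$, $v_1(t)=0.5$. Then no contradiction arises at $s$. (The sentence ``$v_{k(s)-1}(t)\le v_{k(s)-1}(t)$ trivially'' says nothing.)

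The repair you hint at is the right one, and no perturbation is needed: order by value first, then by rank. Let $m=\max_{t\in C}v_n(t)$ and suppose $m>0$. Pick $s\in C$ with $v_n(s)=m$ and $k(s)$ minimal among such states; $k(s)\ge 1$ because $s\notin\Terminals$ and $v_0(s)=0<m$. Since $C$ is a bottom SCC, every successor $t$ of $s$ under $(\bxi(s),\tau(s))$ lies in $C$. There are two cases:
\begin{itemize}
\item if $v_n(t)<m$, then $v_{k(s)-1}(t)\le v_n(t)<m$;
\item if $v_n(t)=m$, then $k(t)\ge k(s)$, so $v_{k(s)-1}(t)\le v_{k(t)-1}(t)<v_{k(t)}(t)=m$.
\end{itemize}
The average is therefore $<m=v_n(s)$, contradicting Step 1. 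Hence $v_n\equiv 0$ on every bottom SCC avoiding $\Terminals$; the remaining bottom SCCs are target singletons because $\Terminals$ is absorbing. This is exactly the hypothesis Step 3 needs.

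With this fix your route is complete, and it differs from the paper's only in the final step. The paper runs a submartingale argument against arbitrary opponents. It claims that inside a level set of $v_n$ every successor must satisfy $v_{\ell(s)-1}(t)=v_n(s)$, which overlooks successors that leave the level set. Your reduction to a memoryless deterministic opponent and a bottom SCC supplies the closedness that makes the rank-decrease argument valid.
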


\begin{proof}[sketch]
Using \Cref{lm:VIconvergencetovalue}, there exists an index $n$ such that the $n$th valuation, on $\initial$, takes a value that is greater than $t$.
We assign a ``rank'' to each state corresponding to the iteration index where its value was determined. 
The memoryless strategy is then constructed to ensure that at each step, the next state has either a higher valuation or a lower rank, which guarantees convergence to the states with valuation $1$ and with rank $0$: the target states.
\qed\end{proof}


\subsubsection{A variant: playing against a memoryless opponent.}\label{sec:memlessOpp}

Before proceeding to the study of the threshold problem, we make a remark about a variant of our problem: the case where the opponent is restricted to memoryless strategies.
This modelling choice is well-motivated by the physical reality of autonomous systems, 
whose environments are often fixed (but unknown) stochastic processes rather than strategic adversaries. 
We show here that \Cref{thm:memmless-optimal} does not extend to this setting: surprisingly, restricting the opponent to memoryless strategies can force the team players to use memory to play optimally.
The following example illustrates that solving this variant would require significantly different techniques from the field of multi-agent learning. 


\begin{restatable}[App.~\ref{app:memlessOpp}]{theorem}{memlessOpp}\label{thm:memlessOpp}
    There exists a game $\Game$ 
    such that against an opponent restricted to memoryless strategies, the team can achieve a strictly higher probability of reaching the target using a strategy with memory than with any memoryless collective strategy.
\end{restatable}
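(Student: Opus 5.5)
The plan is to exhibit a small explicit game. It exploits the one thing a memoryless opponent cannot do: vary its behaviour between visits to the same state. A team with memory can therefore \emph{learn} the opponent's fixed local distribution from the history and then exploit it. A memoryless collective strategy cannot, because the current state carries no trace of what the opponent did on earlier visits. Since the phenomenon is about learning and not about coordination, I would aim for a team consisting of a single player; any dummy extra team members are irrelevant.

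\emph{Candidate game.} There is a central state $s$, a target $s_\mathsf{goal}$, a sink $s_\mathsf{fail}$, and two ``observation'' states $m_L, m_R$, each of which returns deterministically to $s$. At $s$ the opponent chooses $x \in \{L,R\}$ and the team player chooses $g \in \{L,R,w\}$.
\begin{itemize}
\item If $g = x$, the play goes to $s_\mathsf{goal}$.
\item If $g \in \{L,R\}$ and $g \neq x$, the play goes to $s_\mathsf{fail}$.
\item If $g = w$ (``wait''), the play goes to $m_x$, which reveals $x$ in the history but not in the current state.
\end{itemize}
A memoryless opponent is described by a single number $q$, the probability that it plays $L$ at $s$; its choices at $m_L, m_R$ are irrelevant.

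\emph{Step 1 (memoryless team).} Let the team play the fixed distribution $(p_w, p_L, p_R)$ at $s$. Absorption probabilities give a value of $(p_L q + p_R(1-q))/(p_L+p_R)$. The opponent, choosing $q\in\{0,1\}$ after seeing the strategy, drives this down to $\min(p_L,p_R)/(p_L+p_R) \le 1/2$. So the memoryless max-min value is $1/2$.

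\emph{Step 2 (team with memory).} Wait once, then guess the letter just observed. Against $q$ this wins with probability $q^2+(1-q)^2$. That quantity equals $1$ for deterministic opponents, but its infimum over $q$ is again $1/2$, attained at $q = 1/2$.

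\emph{Main obstacle: breaking the tie.} The naive game produces a tie, so the main step is to add an asymmetry that penalises the opponent for mixing, while keeping the memoryless analysis easy. The first modification I would try is to make the wait action itself rewarding for one of the opponent's choices. For example, if $g = w$ and $x = R$, the play goes to $s_\mathsf{goal}$ with probability $\tfrac12$ and to $m_R$ otherwise. At the same time I would unbalance the guessing payoffs by a parameter $\epsilon$; for example, a correct guess of $R$ reaches the target only with probability $1-\epsilon$ and otherwise returns to $s$.

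With this modification, the analysis proceeds in two parts.
\begin{enumerate}
\item For memoryless strategies, compute the closed-form value as a rational function of $(p_w,p_L,p_R,q)$. Then optimise: first minimise over $q\in[0,1]$, which is linear-fractional in $q$, so the minimum is attained at $q \in \{0,1\}$; then maximise over the simplex.
\item For the memory strategy ``wait until the first visit to $m_L$ or $m_R$, then guess according to it'', compute the value as a polynomial-type expression in $q$ and minimise it over $q\in[0,1]$.
\end{enumerate}
The intended conclusion is a strict inequality between the two resulting constants, obtained by tuning $\epsilon$. I expect the tuning to be the delicate part: it must be done so that the opponent's best memoryless response against the memory strategy is forced to be pure, where learning is perfect, rather than mixed. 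If one round of observation is not enough, the fallback is a memory strategy that waits $k$ rounds and guesses the majority. The value of that strategy tends to $\max(q,1-q)$ plus the bonus collected from waiting, and the bonus strictly exceeds what any memoryless mixture can secure once waiting itself is rewarding.
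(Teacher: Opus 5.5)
Your choice of a single team player makes the plan impossible. Your computations for the naive game are correct. The $1/2$ versus $1/2$ tie, however, is not an artefact of symmetry: with one team player the two constants you want to separate always coincide.

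To see why, let $v^*$ be the max-min value. By \Cref{lm:VIconvergencetovalue} it is the limit of the value iteration. Since $\Pre$ is non-expansive in the sup norm, $v^*$ is a fixed point of $\Pre$. With one team player, each local one-shot game is an ordinary matrix game. So by von Neumann's minimax theorem the opponent has, at each state $s$, a selector $\zeta_s$ against which every team action gives an expected next-state $v^*$-value of at most $v^*(s)$. The memoryless opponent who plays $\zeta_s$ at $s$ makes $v^*(S_k)$ a supermartingale against every team strategy, with or without memory. Since $T$ is absorbing and $v^*=1$ on $T$, this gives $\prob(S_k\in T)\le\mathbb{E}[v^*(S_k)]\le v^*(s_0)$ for all $k$. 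Hence no team strategy beats $v^*(s_0)$ against memoryless opponents. Conversely, memoryless team strategies guarantee $v^*(s_0)-\epsilon$ for every $\epsilon>0$ (\Cref{thm:memmless-optimal}, or the classical two-player result). Restricting the opponent to memoryless responses does not weaken her against such strategies, because she then faces a finite MDP. So the two suprema are equal, and neither the $\epsilon$-asymmetry nor the $k$-round majority fallback can separate them.

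Your modified game shows this concretely. Wait with probability $1-\eta$ and guess $L$ with probability $\eta$. The winning probability is linear-fractional in $q$, and at $q\in\{0,1\}$ it is at least $\frac{1-\eta}{1+\eta}$. So the memoryless value of that game is already $1$, and nothing can beat it.

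The missing idea is the one you explicitly set aside: coordination. A strict gap can only come from the team's lack of shared randomness, which is exactly what breaks the minimax property of the local one-shot games. The paper's example uses two team players $P_1,P_2$. $P_1$ may wait, in which case the play visits $S_a$ or $S_b$ according to the opponent's letter and returns to $S$. Otherwise the target is reached only if both team players play the opponent's letter. Playing memorylessly, they randomise independently, so against the worse letter they win with probability at most $\min(p_1p_2,(1-p_1)(1-p_2))\le 1/4$. With memory, they wait once and then both play the letter just revealed. The memoryless opponent's random move thus becomes a public coin on which the team correlates, which yields $q^2+(1-q)^2\ge 1/2$. At the worst case $q=1/2$ nothing is learned at all, so the entire gain from $1/4$ to $1/2$ is coordination, not learning.
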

\begin{proof}[sketch]
We give an example in the game in \cref{fig:memory}. 
Intuitively, the team players can agree to wait arbitrarily long to \emph{learn} the probability distribution during the waiting stage and then play the appropriate action to maximise the probability of reaching the target state.
\qed\end{proof}

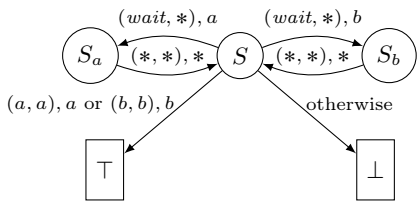
\begin{figure}
    \centering
    \begin{tikzpicture}[>=latex, node distance=1.3cm, auto,
      state/.style={draw, circle, minimum width=0.5cm, font=\small},
      terminal/.style={draw, rectangle, minimum width=0.5cm, minimum height=0.8cm, font=\small},
      lab/.style={font=\scriptsize}]
    
      \node[state] (s) {$S$};
      \node[state, left=of s] (sa) {$S_a$};
      \node[state, right=of s] (sb) {$S_b$};
    
      \node[terminal, below=of s, xshift=-1.8cm,yshift=0.5cm] (t) {$\top$};
      \node[terminal, below=of s, xshift=1.8cm,yshift=0.5cm] (trap) {$\perp$};
    
      \draw[->] (s) to[bend right=20] node[lab, swap] {$(\textit{wait},*), a$} (sa);
      \draw[->] (s) to[bend left=20] node[lab] {$(\textit{wait},*), b$} (sb);
      
      \draw[->] (sa) to[bend right=20] node[lab, above] {$(*, *), *$} (s);
      \draw[->] (sb) to[bend left=20] node[lab, above] {$(*, *), *$} (s);
    
      \draw[->] (s) -- node[lab, left, pos=0.4] {$(a,a), a \text{ or } (b,b), b$} (t);
    
      \draw[->] (s) -- node[lab, right, pos=0.4] {$\text{otherwise}$} (trap);
    
    \end{tikzpicture}
    \caption{An example game for opponent with no memory.}
    \label{fig:memory}
\end{figure}

\subsection{Complexity bounds} 
\label{sec:ETReasiness}

After a short detour, we return to the setting where the opponent has (potentially) unbounded memory.
We establish complexity  bounds.

\subsubsection{Membership in the class $\etr$.}
We first show that the problem is in $\exists\Rb$ by constructing an equivalent ETR formula. Our construction relies on \cref{thm:memmless-optimal}.
\begin{restatable}[App.~\ref{app:ETRComplete}]{theorem}{thmETRComplete}\label{thm:thmETRComplete}
    The threshold problem lies in the class $\exists\Rb$.
\end{restatable}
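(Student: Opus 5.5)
By \Cref{thm:memmless-optimal}, it suffices to decide whether some memoryless collective strategy $\bstrat_\Players$ guarantees reaching $\Terminals$ with probability strictly greater than $\threshold$. We therefore build a polynomial-size existential sentence over the reals whose variables are $x_{p,s,a}$, the probability that team player $p$ plays action $a \in \Av_p(s)$ in state $s$, and $y_s$, one per state, meant to represent the value of state $s$ against a best-responding opponent. The constraints $x_{p,s,a}\ge 0$ and $\sum_{a} x_{p,s,a}=1$ state that the $x$ variables form a memoryless profile. The induced joint probability $\prod_{p\in\Players} x_{p,s,a_p}$ of a joint action $\baction$ is a monomial of degree $|\Players|$. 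Such a product can be written with polynomially many symbols: either directly, or by introducing auxiliary variables for partial products. Either way this avoids enumerating the exponentially many joint actions as separate variables.

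Once $\bstrat_\Players$ is fixed and memoryless, the opponent faces a finite MDP with state space $\Vertices$. Its minimal reachability probability is attained by a memoryless deterministic strategy. It is characterised as the \emph{greatest} fixpoint $y$ with $y_s=1$ for $s\in\Terminals$ and
\[
y_s \le \sum_{\baction\in\Av_\Players(s)} \Big(\prod_{p\in\Players} x_{p,s,a_p}\Big)\sum_{t\in\Vertices}\delta(s,\baction,b)(t)\, y_t \quad\text{for every } b\in\Av_\opp(s).
\]
The obstacle is that an existential formula can only assert that some $y$ satisfies these inequalities. Satisfying them means $y$ is a sub-fixpoint, which does not bound the true min-value from below: for example, $y\equiv 1$ satisfies them on a losing cycle. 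Fixing this is the main difficulty.

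I would fix it by additionally requiring that every state with $y_s>0$ be able to make progress toward the target. Concretely, I would add rank variables, or simply guess existentially with Boolean-encoded variables, a ranking $r:\Vertices\to\{0,\dots,|\Vertices|\}$ with $r(s)=0$ iff $s\in\Terminals$. The added requirement is that for every state $s\notin\Terminals$ with $y_s>0$ and every opponent action $b$, some successor $t$ with $r(t)<r(s)$ is reached with positive probability. An alternative is to split $\Vertices$, via a polynomial disjunction, into the set $Z$ of states forced to value $0$ and its complement. This rules out the end components that make spurious sub-fixpoints possible. Then the standard argument for MDPs applies: any sub-fixpoint that is supported only on states that cannot be trapped outside $\Terminals$ is below the true value. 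The positivity conditions ``$\prod_p x_{p,s,a_p}\,\delta(\cdot)(t)>0$'' are polynomial strict inequalities, so they are expressible in ETR. Guessing $r$ is handled by encoding the integer rank values as real variables constrained to $\{0,\dots,|\Vertices|\}$ through polynomial equations $\prod_{k}(r_s-k)=0$.

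Finally, I add $y_{\initial}>\threshold$. Soundness: from a satisfying assignment, the ranking guarantees that under every opponent strategy the play either reaches $\Terminals$ or gets absorbed into $y=0$ states. A martingale argument on $y$ along the play then shows that the reachability probability is at least $y_{\initial}>\threshold$. Completeness: take the memoryless strategy given by \Cref{thm:memmless-optimal}, and let $y$ be the opponent's exact min-reachability values. The optimal opponent is memoryless deterministic, and states with positive value admit the required ranking, since from them the target is reached with positive probability under every opponent choice. The sentence has size polynomial in the game description, which places the threshold problem in $\exists\Rb$. The step I expect to need the most care is the correctness of the ranking constraint in the soundness direction.
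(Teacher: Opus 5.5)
Your argument is correct, but it handles the spurious-fixpoint problem differently from the paper. The paper makes the discount factor an existential variable $\lambda\in(0,1)$ and imposes $v(s)\le\lambda\cdot(\text{expected } v \text{ at the successor})$ for each opponent action. Because the discounted operator is a contraction, any such sub-solution lies below the $\lambda$-discounted value of the memoryless profile, which in turn lies below its reachability probability. Soundness is therefore a one-line payoff comparison, and no ranking or support information is needed. Completeness uses the strictness of the threshold once more: for the MDP induced by the profile of \Cref{thm:memmless-optimal}, the discounted values tend to the reachability values as $\lambda\to1^-$, so some $\lambda_0<1$ already exceeds $\threshold$.

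You instead keep the undiscounted inequalities and add a ranking certificate on $\{y>0\}$, which folds the \emph{guess the positive-value set} step into the existential block. Both directions go through. For soundness, the cleanest proof takes a memoryless deterministic optimal opponent. The set where $y-v^*$ attains a positive maximum is closed under that opponent's transitions, avoids $\Terminals$, and has $y>0$, so its rank-minimal state contradicts the ranking constraint. For completeness, $y$ is the exact min-reachability value and $r$ is the layer index of the positive attractor, whose complement is a trap for the opponent.

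What each route buys: yours certifies the exact value rather than a discounted under-approximation, needs no limit argument, and mirrors the ranking of the almost-sure encoding in \Cref{sec:almostSure}. It also shows that the paper's remark, that identifying the positive-value set would only give $\NP^{\exists\mathbb{R}}$, is too pessimistic. The paper's route gives a pure conjunction of polynomial inequalities with a single extra variable and an almost trivial soundness proof.

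Three small corrections. First, the min-reachability value is the \emph{least} fixpoint of the Bellman operator, not the greatest; the greatest solution of your inequalities is exactly the spurious one from your loop example. Second, the ranking alone does not make plays get absorbed in $\{y=0\}$, since nothing constrains moves out of zero-valued states. What the ranking gives is a uniform positive probability of leaving $\{y>0\}\setminus\Terminals$ within $|\Vertices|$ steps. Combine this with the fact that the bounded submartingale $y(S_k)$ takes finitely many values and is therefore eventually constant; this yields $\lim_k y(S_k)\le\mathbf{1}[\text{reach }\Terminals]$ almost surely, which is what your martingale step needs. Third, the integrality constraint $\prod_k(r_s-k)=0$ is unnecessary: real-valued ranks with strict comparisons are already well-founded on a finite state set.
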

\begin{proof}[sketch]
To help our reduction, we introduce $\lambda$-discounted games. 
These games share the same structure as concurrent games but employ a discounted payoff function parameterised by a \emph{discount factor} $\lambda \in (0,1)$.
These games are \emph{quantitative}: if the team reaches the target set $\Terminals$ after $k$ steps, the team gets the payoff $\lambda^k$ (instead of $1$).
We construct a polynomial-size ETR formula $\Psi(\lambda)$ which is satisfiable if and only if there exists a collective strategy such that the $\lambda$-discounted value exceeds $\threshold$. Since this only captures the discounted version of the problem, we then show that it is equivalent to its qualitative counterpart, in the sense that there exists $\lambda$ such that the formula $\Psi(\lambda)$ is satisfiable if and only if we have a positive instance of the threshold problem.
 
This intermediary reduction to the $\lambda$-discounted setting is necessary. A direct encoding of the reachability problem faces a hurdle: the Bellman optimality equations for reachability generally admit multiple fixed points. Consider, for example, a simple loop $s\rightarrow s$ with no path to the target. The true reachability probability is $0$.
However, the assignment $v(s) = 1$ trivially satisfies the Bellman equations one would write, 
making it a valid but spurious fixed point. One standard solution is to impose the restriction that it is the least fixed point.
However, enforcing a ``least fixed point'' constraint would require the $\min$ operator, which is not permitted in the standard $\exists\Rb$ framework. 
Alternatively, identifying the non-zero support set requires guessing (if there exists a subset of vertices from which the target can be reached with nonzero probability), which puts the problem in a larger complexity class  $\NP^{\exists\Rb}$. This makes our intermediary detour via $\lambda$-discounted settings necessary.
\qed\end{proof}

\subsubsection{$\NP$-hardness.} 
We finally end with a proof of $\NP$-hardness for this problem. Note that it is also $\SQRTSUM$-hard, following from the two-player case~\cite{EY15}.
For the sake of completeness, we also note that the threshold problem with non-strict inequalities was recently shown to be complete for the class $\exists\Rb$~\cite[Lemma 5.7]{BHT26}.

\begin{restatable}[App.~\ref{app:NPhardnessThreshold}]{theorem}{thmNPhardnessThreshold}\label{thm:NPhardnessThreshold}
    The threshold problem is $\NP$-hard, even with three states and three players.
\end{restatable}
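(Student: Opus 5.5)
The plan is to reduce from the $k$-clique problem to a \emph{one-shot} instance of the threshold problem. The game has exactly three states: an initial state $\initial$, an absorbing target $\top$ and an absorbing sink $\bot$. There are two team players $p_1,p_2$ and the opponent $\opp$. Given a graph $H=(V,E)$ and an integer $k$, both team players have action set $V$: they each ``name a vertex''. Every action profile at $\initial$ goes (with a probability we choose) to $\top$ or $\bot$. The game is therefore one-shot, and a collective strategy is simply a pair of independent distributions $x,y\in\Dist V$. The max-min value is
\[
\max_{x,y\in \Dist V}\ \min_{b\in \Av_\opp(\initial)} \sum_{u,v\in V} x_u\, y_v\, W_b(u,v),
\]
where $W_b(u,v)=\delta(\initial,(u,v),b)(\top)$. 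Since the game is one-shot, memory is irrelevant, and the only difficulty is the polynomial optimisation; the other results of this section are not needed here.

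The opponent's actions are chosen so that this value is a Motzkin--Straus-type quantity.
\begin{itemize}
\item An \emph{edge test} $b_E$ with $W_{b_E}(u,v)=1$ if $\{u,v\}\in E$ and $0$ otherwise (in particular $0$ when $u=v$). It yields $x^\top A y$, where $A$ is the adjacency matrix.
\item For each vertex $w$, a \emph{concentration test} $b_w$ with $W_{b_w}(u,v)=1-\alpha([u=w]+[v=w])/2$, for a rational $\alpha$ to be tuned. It yields $1-\alpha(x_w+y_w)/2$, which is linear in each player and punishes the team for putting much mass on any single vertex.
\end{itemize}
The team thus faces a trade-off. The edge test alone is maximised by a pure edge, which concentrates mass; the concentration tests force the marginals to be spread. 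The threshold is $\threshold=\min(1-1/k,\ 1-\alpha/k)-\epsilon$ for a suitable small rational $\epsilon>0$, or alternatively the penalty is calibrated so that both terms coincide at the uniform distribution on a $k$-set.

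Completeness is the easy direction. If $C$ is a $k$-clique, letting both players play uniformly on $C$ gives edge-test payoff $1-1/k$ and concentration-test payoff $1-\alpha/k$, which exceeds $\threshold$.

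Soundness is the main obstacle. Suppose independent $x,y$ exceed $\threshold$ against every test. The concentration tests give $\max_w (x_w+y_w)/2 < 1/k+\epsilon'$. The edge test gives $x^\top A y > 1-1/k-\epsilon$. I would first symmetrise. Writing $z=(x+y)/2$, we have
\[
z^\top A z = \tfrac14\bigl(x^\top A x + y^\top A y\bigr)+\tfrac12\, x^\top A y,
\]
which does not directly dominate $x^\top A y$. I would therefore instead apply a bipartite Motzkin--Straus argument directly.

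The mass satisfying $u\ne v$ with $\{u,v\}\notin E$ must be below $1/k+\epsilon$. Since diagonal mass counts as failure, $\sum_u x_u y_u$ is included in that loss. Together with the bound $\max_w z_w < 1/k+\epsilon'$, this should force the support of $z$ restricted to heavy vertices to contain $k$ pairwise adjacent vertices. Concretely, I would take the $k$ heaviest vertices of $z$ and show by an averaging, Turán-type argument that any missing edge among them costs at least a fixed amount $\Omega(1/k^2)$ of loss. Choosing $\epsilon$ polynomially small, say $1/(4k^2)$, then gives a contradiction.

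If this averaging turns out to be too lossy, the fallback is to add a \emph{consistency test}: an action $b_=$ in which the team wins iff $u=v$, scaled appropriately. Its purpose is to force $x\approx y$, so that the classical Motzkin--Straus theorem, which states that $\max_{z\in\Dist V} z^\top A z = 1-1/\omega(H)$, applies with a continuity margin. In either case, the remaining step is to check that all transition probabilities and $\threshold$ are rationals of polynomial size, so the reduction runs in polynomial time with three states and three players.
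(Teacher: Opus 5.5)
\paragraph{The primary soundness argument fails.} Your completeness direction is fine. The soundness direction of your main construction, however, is false, not merely lossy. With independent $x,y$ and only the edge and concentration tests, the form $x^\top A y$ under spread constraints detects large complete \emph{bipartite} subgraphs rather than cliques.

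Take $H=C_4$ on vertices $1,2,3,4$ with $k=3$, which is a no-instance. Let $x$ be uniform on $\{1,3\}$ and $y$ uniform on $\{2,4\}$. Every pair in the support of $x\otimes y$ is an edge, so the edge test pays $1$. Also $(x_w+y_w)/2=1/4<1/3$ for all $w$, so each concentration test pays $1-\alpha/4>\min(1-1/k,1-\alpha/k)$. The team therefore beats your threshold for every admissible $\alpha$ and $\epsilon$. The same happens on $K_{m,m}$ for any $m>k/2$.

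This is exactly where the proposed Tur\'an-type averaging breaks. A missing edge between two heavy vertices of $z=(x+y)/2$, such as $\{1,3\}$ here, costs nothing when both endpoints carry only $x$-mass, because that pair is never sampled by $x\otimes y$.

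\paragraph{The fallback is the missing idea.} Your consistency test has to be part of the main construction; it plays the role of the paper's check that $i^*=j^*$ forces $u=v$. It also needs a quantitative argument, which you did not give. One way to do it:
\begin{itemize}
\item Take $\alpha=1$, $\threshold=1-1/k-\epsilon$, and $W_{b_=}(u,v)=1-\frac{1}{k-1}+\frac{1}{k-1}[u=v]$. This pays exactly $1-1/k$ on the uniform distribution over a $k$-clique.
\item Beating $\threshold$ then gives $x\cdot y>1/k-(k-1)\epsilon$ and $x_w+y_w<2/k+2\epsilon$ for all $w$.
\item Hence $\|x-y\|_2^2=\sum_w(x_w+y_w)^2-4\,x\cdot y<4k\epsilon$.
\item With $d=(x-y)/2$ you have $x^\top Ay=z^\top Az-d^\top Ad$ and $|d^\top Ad|\le n\|d\|_2^2$. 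So $z^\top Az>1-1/k-nk\epsilon$.
\item Motzkin--Straus with the margin $\frac{1}{k-1}-\frac1k$ then yields $\omega(H)\ge k$ once $\epsilon=1/(2nk^3)$.
\end{itemize}

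\paragraph{Comparison with the paper.} Once repaired this way, your route genuinely differs from the paper's. The paper uses a looping game in which the opponent's index query tests a single vertex pair, conditional on the team matching the indices. It works with a constant threshold $3/4$ and a dominant-vertex argument, and invokes \Cref{thm:memmless-optimal} to assume memorylessness. Your one-shot version needs no memory argument, but it requires a polynomially small margin. It also cannot yield \Cref{thm:almostsurehardness}, since its yes-instances have value $1-1/k<1$.
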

\begin{proof}[sketch]
    We establish $\NP$-hardness via a reduction from the \emph{clique problem}. Given a graph $G$ and an integer $k$, we construct a game with two team players and three states: an initial state, a target state, and a trap state.
The team players select actions consisting of a tuple $(u, i) \in V \times \{1, \dots, k\}$, claiming that they have a clique whose $i$-th vertex is $u$. Simultaneously, the opponent selects a pair of indices $(i, j)$ to verify. Transitions from $\initial$ are defined as follows:
the play reaches the target if the team plays $((u, i), (v, j))$ and the opponent selects $(i, j)$ such that the selection is consistent with a clique ($u=v$ if $i=j$, and $(u, v) \in E$ if $i \neq j$).
 The game transitions to the trap state if the team matches the opponent's indices $(i, j)$ but fails the clique consistency check ($u \neq v$ when $i=j$, or $(u, v) \notin E$ when $i \neq j$).
 If the team's chosen indices do not match the opponent's query, the game loops back to repeating the round.
We show that if $G$ contains a clique of size $k$, the team can reach the target state with probability $1$. Otherwise, the maximum probability of reaching the target is bounded by $\frac{3}{4}$. 
    The threshold problem is therefore $\NP$-hard, even with threshold $\frac{3}{4}$.
\qed\end{proof}

\section{The almost-sure problem}\label{sec:almostSure}
In this section, 
we focus on Problem~\ref{problem:almostSure}: deciding whether the team can win almost surely.  
We show that this problem is $\NP$-complete.
First, in \Cref{sec:almost-sure-memoryless}, we show that memoryless strategies are also optimal for this problem. 
 Second, in \Cref{sec:almost-sure-complexity}, we prove the upper and the lower bound.

\subsection{Optimality of memoryless strategies}
\label{sec:almost-sure-memoryless}

In this subsection, 
we show that if the team can guarantee reaching the target almost-surely, 
it can do so using a memoryless collective strategy. Formally, we show the correctness of the following theorem. Note that it is not a consequence of \cref{thm:memmless-optimal}, since the threshold problem is defined with strict inequalities.

\begin{restatable}[App.~\ref{app:almostsurememless}]{theorem}{almostsurememless}\label{thm:almostsurememless}\label{thm:almost-sure-memless}
    Let $\Game$ be a game.
    If there is a collective strategy $\bstrat_\Players$ that guarantees victory with probability~$1$ against every opponent's strategy, then there is a memoryless one.
\end{restatable}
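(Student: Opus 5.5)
The plan is a qualitative attractor argument over \emph{product supports}, exploiting the fact that almost-sure winning depends only on which joint actions are played with positive probability. Let $W \subseteq \Vertices$ be the set of states from which the team has a (possibly history-dependent) collective strategy that guarantees reaching $\Terminals$ with probability~$1$. By assumption $\initial \in W$, and $\Terminals \subseteq W$ since $\Terminals$ is absorbing.

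At a state $s$, call a tuple $\bar B = (B_p)_{p \in \Players}$ of nonempty sets $B_p \subseteq \Av_p(s)$ a \emph{product support}. Call it \emph{safe} if for every $\baction \in \prod_p B_p$, every $b \in \Av_\opp(s)$ and every $t$ with $\delta(s,\baction,b)(t) > 0$, we have $t \in W$.

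\textbf{Step 1: residual safety.} Let $\bstrat_\Players$ be almost-surely winning from $s$, and let $hs'$ be a history that has positive probability under $\bstrat_\Players$ and some opponent strategy. I claim that the residual strategy after $hs'$ is almost-surely winning from $s'$. Otherwise the opponent could first play so as to reach $hs'$ with positive probability, and from there use a response under which the residual strategy loses with positive probability; the overall winning probability would then be $<1$.

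It follows that every state reached with positive probability lies in $W$. Moreover, the support of $\bstrat_\Players(hs')$ is exactly the product of the supports of the individual $\strat_p(hs')$, because team players randomise independently. Hence this support is a safe product support at $s'$: otherwise the opponent could pick an action $b$ that leads outside $W$ with positive probability.

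\textbf{Step 2: attractor with safe supports.} Define $R_0 = \Terminals$ and
\[ R_{k+1} = R_k \cup \{ s \in W : \exists \text{ safe } \bar B \text{ at } s,\ \forall b \in \Av_\opp(s),\ \exists \baction \in \textstyle\prod_p B_p,\ \delta(s,\baction,b)(R_k) > 0 \}. \]
Let $R_\infty$ be the limit of this increasing sequence; it stabilises after at most $|\Vertices|$ steps. I claim $R_\infty = W$.

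Suppose instead that $U = W \setminus R_\infty$ is nonempty. By the definition of $R_\infty$, at every $s \in U$ and for every safe $\bar B$ at $s$, there is an opponent action $b$ such that all outcomes of $\prod_p B_p \times \{b\}$ avoid $R_\infty$. Since $\bar B$ is safe, these outcomes lie in $W$, and therefore in $U$.

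Now fix a strategy $\bstrat_\Players$ that is almost-surely winning from some $s \in U$. Define a deterministic, history-dependent opponent strategy as follows: at each history $hs'$ with $s' \in U$, the opponent looks at the support of $\bstrat_\Players(hs')$. By Step~1 this support is safe, so the opponent can choose the corresponding action $b$. By induction, every history of positive probability stays in $U$, and $U \cap \Terminals = \emptyset$. The target is therefore reached with probability $0$, a contradiction.

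\textbf{Step 3: the memoryless strategy.} For each $s \in R_{k+1} \setminus R_k$, fix a witnessing safe $\bar B$ and let each $p \in \Players$ play the uniform distribution on $B_p$. On states outside $W$, and on $\Terminals$, play arbitrarily. This strategy is memoryless and uses only independent randomisation; its joint support at $s$ is exactly $\prod_p B_p$.

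Safety ensures that, from any state of $W$ and against any opponent, the play never leaves $W$. From any state of rank $k+1$, whatever action the opponent chooses, some joint action in the support leads to $R_k$ with positive probability. Let $\eta > 0$ be the minimum, over all such states and opponent actions, of the resulting one-step probability of entering $R_k$; this minimum is over finitely many positive quantities. Then from any state of $W$, the target is reached within $|\Vertices|$ steps with probability at least $\eta^{|\Vertices|}$. Repeating this bound over successive blocks of $|\Vertices|$ steps gives winning probability~$1$ from $\initial$.

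The main obstacle is Step~1 together with its use in Step~2. The argument needs the residual-strategy property, so that every positive-probability history keeps the team's actual support safe. It also needs the observation that, under independent randomisation, this support is a \emph{product} set. This is precisely what makes the per-player uniform distributions realise the same support as the witness, which fails for general correlated supports.
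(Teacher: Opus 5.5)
Your proof is correct and follows essentially the paper's route: the same stratification of the almost-sure winning set $W$ into ranks via a ``positive probability of progress while staying in $W$'' condition, the same contradiction argument that this stratification exhausts $W$, and the same memoryless witness strategy combined with a blockwise bounded-steps argument. Your Step~1 shows that residual strategies remain almost-surely winning, so the team's actual support at every positive-probability history is a safe product set. Together with realising witnesses by uniform distributions on product supports, this makes explicit details that the paper's proof of $W^* = W$ leaves implicit.
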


\begin{proof}[sketch]
    We assign a rank (a value from $\Nb\cup \{\infty\}$) to each vertex, that captures ``distance'' from the target: the target set $T$ has rank $0$,
and the states that have rank $k+1$ are exactly those where the team can force a transition to states with rank $k$ with positive probability, without visiting a vertex of infinite rank. 
We construct a memoryless collective strategy as follows.
At every state, the team plays to guarantee a positive probability of moving to a lower rank, while staying within the winning set (the states with finite rank).
Since the probability of reducing the rank is always strictly positive, 
the game is forced to eventually progress and reach rank 0 (the target) with probability~1.
\end{proof}

\subsection{Complexity bounds}\label{sec:almost-sure-complexity}
\subsubsection{SAT Encoding for $\NP$ membership.}
\label{sec:almost-sure-sat}
We show the following theorem.

\begin{restatable}[App.~\ref{app:almostsuresat}]{theorem}{almostsureeasiness}\label{thm:almostsureeasiness}
    The almost-sure problem is in $\NP$.
\end{restatable}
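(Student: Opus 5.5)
Following \Cref{thm:almost-sure-memless}, we may restrict attention to memoryless collective strategies. The key observation is that almost-sure winning with a memoryless collective strategy $\bstrat_\Players$ depends only on its \emph{supports}. Fixing $\bstrat_\Players$ turns the game into an MDP for the opponent. In an MDP, whether the opponent can avoid $T$ with positive probability depends only on which transitions have positive probability, so on the qualitative graph of the MDP. With independent randomisation, the joint action $\baction$ has positive probability at $s$ iff each $\action_p$ lies in $\mathrm{supp}(\strat_p(s))$. The support of the induced distribution is therefore the product $\prod_{p\in\Players} \mathrm{supp}(\strat_p(s))$, and the MDP graph is determined by the tuple of supports $(\mathrm{supp}(\strat_p(s)))_{p,s}$. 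Any two memoryless collective strategies with the same supports are thus equivalent for the almost-sure problem. In particular, it suffices to consider, for each player and state, the uniform distribution over a chosen nonempty subset $B_p(s)\subseteq\Av_p(s)$.

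The $\NP$ algorithm follows directly. The certificate is a choice of nonempty sets $B_p(s)\subseteq \Av_p(s)$ for each $p\in\Players$ and $s\in\Vertices$, which has polynomial size. To verify it, build the graph $H$ on $\Vertices$ with an edge from $s$ to $t$ labelled by opponent action $b$ whenever $\delta(s,\baction,b)(t)>0$ for some $\baction\in\prod_p B_p(s)$. We cannot enumerate this product explicitly, since it is exponential in $|\Players|$. Instead, for each $s$, $b$, $t$ we ask whether some $\baction$ in the product gives $t$ positive probability. I would handle this by assuming $\delta$ is given explicitly as a table over all action profiles, as in the definition, so the product is bounded by the input size. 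If $\delta$ were succinct, I would instead add a witness action profile to the certificate for each edge needed. Checking almost-sure reachability in the resulting MDP is standard and polynomial. Compute the set $W$ of states from which the opponent can avoid $T$ forever with positive probability, equivalently the complement of the almost-sure attractor. This uses the usual nested fixpoint: repeatedly delete states from which some opponent action leads, with positive probability, outside the current set, and states that cannot reach $T$ within the current set. Accept iff $\initial$ is almost-surely winning.

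For correctness, soundness is immediate: if the check passes, the uniform memoryless strategy on the sets $B_p(s)$ wins almost surely. Completeness combines \Cref{thm:almost-sure-memless} with the support-invariance argument above. The main obstacle is stating the support-invariance lemma carefully. One must justify that the qualitative analysis of the opponent's MDP ignores actual probability values, including against history-dependent opponent strategies. This holds because almost-sure reachability in finite MDPs depends only on the transition supports, which follows from the attractor characterisation. I would prove this once and then invoke it.

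A Cook-style SAT encoding, as the section title suggests, can also be given. It uses Boolean variables $x_{p,s,a}$ for membership in $B_p(s)$ together with variables for the ranks of \Cref{thm:almost-sure-memless} in unary or binary. The constraints state that ranks strictly decrease along some positive-probability transition for every opponent action, and that the winning set is closed under all positive-probability transitions.
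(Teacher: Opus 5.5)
Your proof is correct. It rests on the same two ingredients as the paper's: sufficiency of memoryless strategies (\Cref{thm:almost-sure-memless}) and support invariance for a fixed memoryless collective strategy (the paper's \Cref{lemm:almost-sure-support}). You rightly point out that with independent randomisation the support of the joint distribution is the product of the individual supports.

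The difference lies in how a guess is verified. The paper builds a propositional formula whose variables encode the supports \emph{together with} a winning set $W$ and a rank function. Checking then reduces to local constraints: every successor under a supported profile stays in $W$, and for every opponent action some supported profile reaches a strictly smaller rank. This is the encoding SAT-Direct uses. Its cost is that one must exhibit ranks compatible with the guessed supports. The paper takes them from the layers $W_k$ of the memorylessness proof, which works for the supports of the strategy built from those layers. Your certificate is the supports alone, and you delegate the check to the polynomial-time qualitative analysis of the opponent's MDP. This is more modular, since correctness of the verifier is an off-the-shelf MDP fact. Your last paragraph essentially recovers the paper's encoding.

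There are two slips, neither affecting the conclusion.

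\textbf{The fixpoint for the MDP check.} The fixpoint you sketch is wrong if ``cannot reach $T$'' means graph reachability. Suppose that in $H$ the only successor of $s$ under $b_1$ is $s$ itself, and $b_2$ leads only into $T$. Then $\{s\}\cup T$ is closed under all opponent actions and $s$ reaches $T$, so $s$ is never deleted. Yet the opponent avoids $T$ surely by always playing $b_1$. Since the minimiser chooses the actions, the reachability step must be a universal attractor. Concretely, let $C\subseteq S\setminus T$ be the largest set in which every state has an opponent action all of whose $H$-successors lie in $C$. Equivalently, $C$ is the complement of the least set containing $T$ and every state $s$ such that, for each opponent action, some $H$-successor of $s$ already lies in the set. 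The opponent avoids $T$ with positive probability exactly from the states from which $C$ is reachable in $H$, and this is computable in polynomial time.

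\textbf{The aside on succinct $\delta$.} This does not work. Per-edge witnesses certify which edges of $H$ are present, but almost-sure winning also depends on which edges are absent: an edge the verifier does not see may lead into $C$. Since $\delta$ is an explicit table in the paper, this does not affect your main argument.
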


\begin{proof}[sketch]
    Based on \cref{sec:almost-sure-memoryless}, we know that if an almost-sure winning collective strategy exists, there exists a memoryless one. We then give an explicit $\SAT$ formula $\Phi$ that is satisfiable if and only if there exists 
    a memoryless strategy that guarantee almost-sure reachability.
The construction of $\Phi$ 
relies on the fact that almost-sure reachability is a qualitative property: 
it depends only on the \emph{support} of the strategies, not on the precise probability values. 
The formula $\Phi$ encodes the search for a winning set $W$ and the set of support of strategy
for each player. 
The constraints enforce the existence of the ranking function as defined in the proof of \cref{app:almostsurememless}. We also apply some optimisations to reduce the number of variables, like using a binary encoding of the rank. 
\qed\end{proof}

We end this section with a hardness result.
The following theorem is an immediate consequence of the construction presented in the proof of \Cref{thm:NPhardnessThreshold}.

\begin{corollary}[of \cref{thm:NPhardnessThreshold}]\label{thm:almostsurehardness}
     The almost-sure  problem is $\NP$-hard.
\end{corollary}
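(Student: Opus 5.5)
The plan is to reuse the reduction from the $k$-clique problem given in the proof of \Cref{thm:NPhardnessThreshold} unchanged. The game $\Game_{G,k}$ has three states: $\initial$, the target, and the trap. Team actions are pairs $(u,i) \in V \times \{1,\dots,k\}$, the opponent queries a pair $(i,j)$, and the transitions are the ones described in that sketch. This is a polynomial-time construction. It therefore suffices to show that $G$ has a $k$-clique if and only if the team has a collective strategy in $\Game_{G,k}$ that reaches the target with probability~$1$.

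For the forward direction, fix a clique $\{c_1,\dots,c_k\}$. Take the memoryless collective strategy where each team player independently plays $(c_i,i)$ with $i$ uniform in $\{1,\dots,k\}$.
\begin{itemize}
\item Every action profile of the team is clique-consistent: if the indices are equal, the vertices are equal; if they differ, the vertices are adjacent. So the trap is never entered.
\item For any opponent query $(i,j)$, the team's indices match it with probability $1/k^2$.
\end{itemize}
Each round therefore moves to the target with probability at least $1/k^2$ and otherwise loops back to $\initial$. The target is reached almost surely against every opponent strategy. This is the same argument as the ``probability~$1$'' case of \Cref{thm:NPhardnessThreshold}.

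For the converse, suppose the team wins almost surely. By \Cref{thm:almost-sure-memless}, or simply because the team then wins with probability strictly greater than $\frac34$, the value is $1 > \frac34$. The threshold analysis of \Cref{thm:NPhardnessThreshold} states that when $G$ has no $k$-clique the maximum reaching probability is at most $\frac34$. Hence $G$ must have a $k$-clique. Put differently, a positive instance of the almost-sure problem is a positive instance of the threshold problem with $t = \frac34$, and the latter implies a clique by the earlier theorem.

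The only step needing care is making sure the earlier bound applies to arbitrary (history-dependent) collective strategies, not just memoryless ones. The threshold theorem quantifies over all collective strategies, so this is immediate. If one wanted a self-contained argument, the alternative is a direct qualitative one. Take a memoryless almost-sure strategy given by \Cref{thm:almost-sure-memless}. Its supports must avoid every profile that the opponent can route to the trap. Then, for each index $i$, the supports of the two players on that index must be a single common vertex $c_i$, and the vertices $c_i$ must be pairwise adjacent, which yields a clique. Since the reduction is polynomial, $\NP$-hardness follows.
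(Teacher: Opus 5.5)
Your proof is correct and follows the paper's own argument. You reuse the clique reduction from \Cref{thm:NPhardnessThreshold} unchanged: its probability-$1$ lemma gives the forward direction, and since an almost-sure win in particular exceeds $\frac34$, its second lemma (which covers arbitrary strategies via \Cref{thm:memmless-optimal}) yields a clique; your optional support-based argument through \Cref{thm:almost-sure-memless} is also sound but not needed.
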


   




\section{Experimental results}\label{sec:experiments}


We implement two algorithms to compute the max-min value and one for the almost-sure case. The algorithm for the value problem is a global reduction that maps the game to a single ETR formula. The second algorithm uses a value iteration (VI) scheme to compute the value. 
For the almost-sure case, we implement one algorithm based on its SAT encoding. 
To establish a baseline and contextualise our runtime, we also compare our approach with the state-of-the-art PRISM-games~\cite{KN0S20}, which solves the two-player version of concurrent games. 
PRISM-games is particularly relevant as it handles both the quantitative (max-min value) and qualitative (almost-sure) versions of this problem, but where the team players have access to shared randomness. PRISM-games, therefore, solves a strict and computationally easier subcase of our setting. The source code is available at \href{https://github.com/alipashamontaseri/Team-Concurrent-Game}{https://github.com/alipashamontaseri/Team-Concurrent-Game}.

\subsection{Algorithms}

We evaluate three algorithms for the max-min value. The first is a global reduction. The second is a value iteration framework, for which we provide three distinct implementations. Each implementation provides a \emph{sound under-approxima\-tion} of the value.  
The third is a baseline based on PRISM-games~\cite{KN0S20}.

\begin{description}
    \item[Algorithm 1: ETR-Direct.] This method implements the reduction from Section~\ref{sec:ETReasiness}. It translates the entire game into a single ETR formula. It then computes the game value using binary search combined with SMT solvers.

    \item[Algorithm 2: Value iteration (VI).] ETR-Direct scales poorly on games with many states. Therefore, we use the value iteration scheme from Section~\ref{sec:OptimalityofMemorylessStrategies}. This method updates values iteratively by solving local one-shot games. In these local games, we calculate the highest probability the team can guarantee to gain in a single step, based on the current value estimates of the states. Based on this, we update the estimates of state values.
    The value iteration algorithm is a  lower bound for the max-min value of the game, and improves it every step. Therefore, the values act as a sound under-approximation.   
    We propose three implementations for solving these local instances:
    
    \begin{enumerate}
        \item \textbf{VI-ETR:} This implementation uses an SMT solver to compute each local instance reliably. This guarantees precision but may take longer on computationally difficult instances.
        
        \item \textbf{VI-OPT:} This implementation treats each one-shot game as a non-linear optimisation problem to be faster. We use Sequential Least Squares Programming (SLSQP)~\cite{kraft1988software} to solve them. Because the constraints are smooth (in $C^\infty$), the solver uses analytical gradients. This method yields an under-approximation of the value because the solver may converge to a local optimum rather than the global one. However, our experiments show that the computed value remains close to the true value.
        
        \item \textbf{VI-Hybrid:} This implementation combines SLSQP optimisation with exact solving. For each local game, we first generate a candidate value $v^*$ using SLSQP. We then query the SMT solver to verify if a value $v \ge v^* + \epsilon$ is feasible. If the query is unsatisfiable, we accept $v^*$ as the global optimum. If satisfiable (indicating $v^*$ is a local optimum), we solve the instance with binary search using $v^*$ as a lower bound. Thus, this reliably computes the value of the game.
    \end{enumerate}

    \item[Algorithm 3: Quantitative PRISM (Baseline).] We include the state-of-the-art algorithm from PRISM-games ~\cite{KN0S20}. It assumes that team players share randomness (whereas our setting assumes that team players do not share randomness). Our setting is more general, as shared randomness can be modelled by a single player.
\end{description}

The optimisation strategy used in VI-OPT and VI-Hybrid (SLSQP) is explained in detail in \cref{app:slsqp}.

For the almost-sure reachability problem, we evaluate two algorithms:

\begin{description}
    \item[Algorithm 4: SAT-Direct.] This method implements the encoding described in ~\cref{sec:almost-sure-sat}. It translates the almost-sure problem into a SAT. We then use a SAT solver to determine if an almost-surely winning collective strategy exists. This algorithm always gives a verification guarantee. 
    
    \item[Algorithm 5: Qualitative PRISM (Baseline).] We use the qualitative verification engine of PRISM-games~\cite{KN0S20} as a baseline. Similar to Algorithm 3, this tool solves the almost-sure problem under the assumption of shared randomness, which is only a comparison point for our general setting.
\end{description}









\subsubsection{Implementation.}

We implemented a prototype solver in Python 3.11.14 to evaluate these algorithms. We conducted all experiments on a machine running Ubuntu 24.04 LTS, equipped with an Intel Core Ultra 5 225U processor and 16 GB of RAM. We set a timeout of 600 seconds for each algorithm on each instance. For the max-min problem, we perform binary search until we achieve a precision of $\epsilon = 10^{-4}$.

\begin{enumerate}
    \item \textbf{SMT solver:} We use the Z3 SMT solver~\cite{Zthree} for all ETR queries. This handles Algorithm 1, and the verification steps in the value iteration variants.

    \item \textbf{SLSQP optimiser:} We use the SLSQP algorithm provided by SciPy~\cite{Scipy}. This optimiser solves the local games in VI-OPT and generates the candidate values for VI-Hybrid.

    \item \textbf{Stopping criteria:} We terminate the value iteration when the values stabilise. We stop the process when the maximum change across all states drops below $\epsilon' = 10^{-4}$ and report the final values.

    \item \textbf{SAT solver:} For the almost-sure reachability problem (Algorithm 4), we use the \texttt{minisat22} SAT solver \cite{sorensson2010minisat} provided by the PySAT library.
\end{enumerate}

\begin{figure*}[h!]
    \centering
    \includegraphics[scale=0.32,alt={A grid of six line charts plotting execution time in seconds versus the number of states on logarithmic scales. The charts are arranged in three rows corresponding to different benchmarks: Pursuit-Evasion with Rendezvous, Robot Coordination on a Grid, and Jamming Multi-Channel Radio Systems. The left column shows results for Individual Randomness, and the right column shows Shared Randomness. Lines represent the performance of ETR-Direct, VI-ETR, VI-OPT, VI-Hybrid, and PRISM-Quant algorithms. The charts demonstrate that ETR-Direct frequently hits timeouts, while VI-OPT generally scales the best for large state spaces.}]{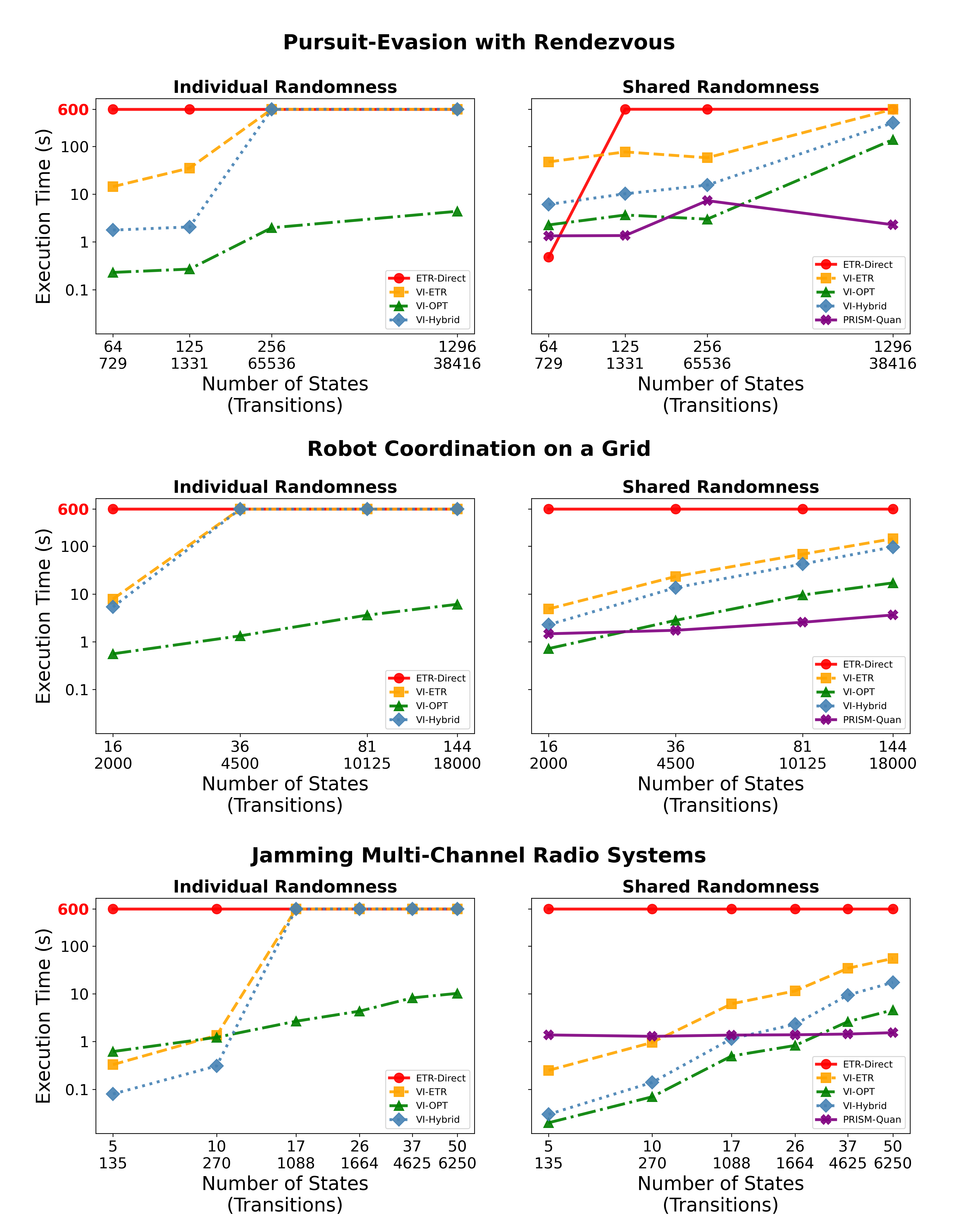}
    \caption{Execution time of the algorithms across three benchmarks. The plots report execution time in seconds vs.\ the state space size: both in log scale.}
    \label{fig:scalability}
\end{figure*}

\subsection{Benchmarks}

We evaluate our algorithms on three distinct benchmarks. The first is Pursuit-Evasion with Rendezvous, a variant of the game from \cite{parsons2006pursuit}. In this setting, a cooperative team of agents must meet at \emph{one vertex} in a directed graph while avoiding a pursuer. The second is Robot Coordination, adapted from PRISM-games \cite{kwiatkowska2021automatic}, where a team of robots navigates a grid against adversarial wind conditions to reach a target configuration. The third is Jamming Multi-Channel Radio Systems, also adapted from PRISM-games \cite{kwiatkowska2021automatic}, which models sensors transmitting packets over frequency channels subject to adversarial jamming and collisions. We provide formal definitions, dynamics, and objective functions for all three benchmarks in Appendix~\ref{app:experiments_benchmarks}.

\begin{figure*}[h!]
    \centering
    \includegraphics[scale=0.25,alt={A line chart displaying execution time in seconds versus the number of states for the Almost-sure Pursuit-Evasion with Rendezvous benchmark. The axes are logarithmic. Three lines compare the performance of SAT-Direct with Individual Randomness, SAT-Direct with Shared Randomness, and PRISM-Qual with Shared Randomness. The chart shows both SAT-Direct methods scaling efficiently, staying highly competitive with the PRISM-Qual baseline across increasing state space sizes.}]{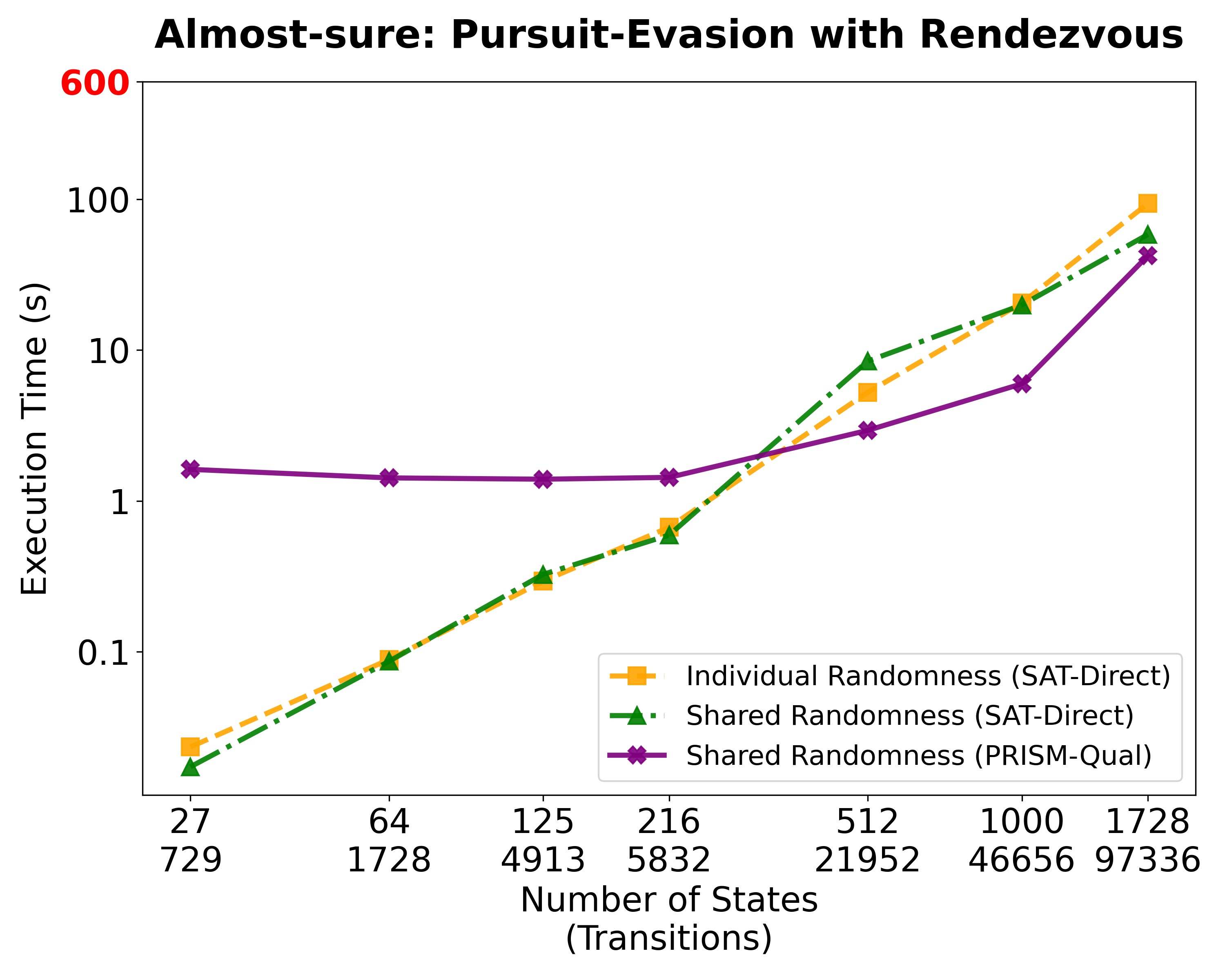}
    \caption{The plot compares the execution time (in seconds, logarithmic scale) of SAT-Direct under individual and shared randomness constraints against the baseline qualitative solver (which assumes shared randomness)}
    \label{fig:alomst-scalability}
\end{figure*}

\subsection{Results}

We evaluate our algorithms in two settings. The first is the general case of \textit{individual randomness}. The second is the special case of \textit{shared randomness}. In the latter, we model the team as a single ``meta-player'' with joint actions, which allows us to compare our results against PRISM-games. Detailed numerical data is provided in Appendix \ref{app:experiment_results}.

\subsubsection{Max-min problem.} Across both settings, \textbf{ETR-Direct} fails to solve even small instances within the timeout. In the case of individual randomness (Figure~\ref{fig:scalability}, left), the exact and hybrid VI approaches \textbf{VI-ETR}, \textbf{VI-Hybrid} solve small instances, but their execution time increases drastically as the state space grows. Conversely, the optimisation-based variant \textbf{VI-OPT} scales effectively, solving the largest instances across all benchmarks within the time limit.

In terms of precision, \textbf{VI-OPT} yields tight under-approximations in both settings, consistently converging to values very close to the exact solutions. While PRISM outperforms our general-purpose solvers in the special case of shared randomness (Figure~\ref{fig:scalability}, right), \textbf{VI-OPT} stays competitive.

\subsubsection{Almost-sure problem.}  We analyse the Pursuit-Evasion benchmark. We selected this instance because it is the only one where the team has non-trivial strategies to win with probability $1$. Figure~\ref{fig:alomst-scalability} presents the results.

First, we look at the case of \textit{shared randomness}. Here, our \textbf{SAT-Direct} algorithm performs well compared to PRISM-games. On smaller instances (Scenarios 1 to 4), our method is actually faster than PRISM because it has less overhead. In the largest instances (Scenarios 6 and 7), PRISM scales slightly better, but our solver remains competitive, despite solving a more difficult problem.

Second, we look at the case of \textit{individual randomness}. This problem is much harder because the players cannot coordinate their actions. As expected, the time needed to find a solution grows much faster than in the shared case. Despite this added difficulty, \textbf{SAT-Direct} is successful. It manages to solve large games with over 97,000 transitions within the time limit.

\section{Towards Individually Randomised ATL}\label{sec:logic}
We conclude this paper by proposing a new logic to capture our concept of individual randomisation.
A classical tool to define specifications over reactive systems is the \emph{Alternating Temporal Logic} (ATL)~\cite{AHK02}, in which one can encode the fact that a coalition of players can guarantee surely an objective that is expressible in Linear Temporal Logic (LTL).
Since only sure guarantees are considered here, the semantics of ATL need not consider randomised strategies.
This changed with the introduction of \emph{Randomised ATL} (RATL) \cite{AHK07} (and PATL~\cite{CL07}).
There, sure guarantees are extended with almost-sure and limit-sure ones, and the semantics allows players to use randomness.
For example, the formula $\team{C}{\almost} \Box q$ means that the coalition $C$ has a ``collective strategy'' to ensure that every state in the play satisfies $q$  with probability $1$.
However, the RATL semantics defines a collective strategy for the coalition $C$ as a mapping from histories to distributions over the whole set of tuples of actions available for the players in $C$, implicitly assuming that those players can randomise based on a secret common source of randomness---which can then be reduced to a $2$-player setting.
Our notion of collective strategy, where players in a coalition can design their strategy together but randomisation is an individual process, cannot be captured by this logic.
In this section, we introduce \emph{Individually Randomised ATL} (or IRATL), whose semantics bridge this gap.

\subsubsection{Syntax.}
Our logic is defined to extend the logic RATL.

\begin{definition}[IRATL formula]
    Let $\Atoms$ be a set of atoms.
    Let $\PlayerSet$ be a set of players.
    An \emph{IRATL formula} is either:
    \begin{itemize}\item a proposition $q \in Q$;
        
        \item or a formula $\neg \phi$ or $\phi \vee \psi$, where $\phi$ and $\psi$ are IRATL formulas;

        \item or a formula $\teamr{C}{w}{r} \Circle \phi$, $\teamr{C}{w}{r} \Box \phi$,   or $\teamr{C}{w}{r} \phi \until \psi$, where $C \subseteq \PlayerSet$ is a coalition of players, where $r \in \{\sh, \ind\}$ is a \emph{randomisation type}, and where $w \in \{\sure, \almost, \limit\} \cup \{> t, \geq t \mid t \in [0,1)\}$ is a \emph{winning condition}, and where $\phi$ and $\psi$ are IRATL formulas.
    \end{itemize}
\end{definition}

The operators $\teamr{}{w}{r}$ are called \emph{path quantifiers}, the operators $\Circle$, $\Box$, and $\until$ are called \emph{temporal operators}, and subformulas of the form $\Circle \phi$, $\Box \phi$ or $\phi \until \psi$ are called \emph{path formulas}.

\subsubsection{Semantics.}

    Given a game structure $G$, a set of atomic propositions $Q$, and a labelling function $L: S \mapsto 2^Q$, we define the semantics of IRATL by cross-induction as follows.
    Given a state $s \in S$, we define:
    \begin{itemize}
        \item $s \models q$ if we have $q \in L(s)$;
        
        \item $s \models \neg \phi$ if we do not have $s \models \phi$, and $s \models \phi \vee \psi$ if we have $s \models \phi$ or $s \models \psi$;
        
        \item $s \models \teami{C}{\sure} \theta$ if in the game structure $G$, from the state $s$, there exists a strategy profile $\bstrat_C$ such that for every strategy profile $\bstrat_{\PlayerSet \setminus C}$ and every play $\pi$ compatible with the strategy profile $\bstrat$, we have $\pi \models \theta$;
        
        \item $s \models \teami{C}{\almost} \theta$ if in the game structure $G$, from the state $s$, there exists a strategy profile $\bstrat_C$ such that for every strategy profile $\bstrat_{\PlayerSet \setminus C}$, we have $\prob_{\bstrat}\{\pi \in S^\omega \mid \pi \models \theta\} = 1$;
        
        \item $s \models \teami{C}{\limit} \theta$ if in the game structure $G$, from the state $s$, for every $\epsilon > 0$, there exists a strategy profile $\bstrat_C$ such that for every strategy profile $\bstrat_{\PlayerSet \setminus C}$, we have $\prob_{\bstrat}\{\pi \in S^\omega \mid \pi \models \theta\} > 1-\epsilon$;
        
        \item $s \models \teami{C}{>t} \theta$ (resp. $\teami{C}{\geq t} \theta$) if in the game structure $G$, from the state $s$, there exists a strategy profile $\bstrat_C$ such that for every strategy profile $\bstrat_{\PlayerSet \setminus C}$, we have $\prob_{\bstrat}\{\pi \in S^\omega \mid \pi \models \theta\} > t$ (resp. $\geq t$);
        
        \item $s \models \teamr{C}{w}{\sh} \theta$ is defined analogously, but in the game structure $G_C$, in which all $C$ players have been merged into one player who can pick an action profile at each state.
    \end{itemize}
    
    On the other hand, given a play $\pi = \pi_0 \pi_1 \dots$, we define:
    
    \begin{itemize}
        \item $\pi \models \Circle \phi$ if we have $\pi_1 \pi_2 \dots \models \phi$;
        
        \item $\pi \models \Box \phi$ if for every $n \in \Nb$, we have $\pi_n \models \phi$;
        
        \item $\pi \models \phi \until \psi$ if there exists $n \in \Nb$ such that we have $\pi_n \models \psi$, and $\pi_k \models \phi$ for every $k < n$.
    \end{itemize}

Other classical symbols can be derived: for example, the formula $\top$ is short for $q \vee \neg q$ for some $q$, and the formula $\Diamond \phi$ for $\top \until \phi$.

\subsubsection{Model-checking.}

The algorithms presented in the previous sections enable us to define a decidable fragment of IRATL.

\begin{theorem}
    Given a game structure $G$, a state $s$ and a formula $\phi$ that does not contain the symbols $\teami{C}{\limit}$, $\teami{C}{\geq t}$, or $\teami{C}{>t}\Box$, deciding whether in the game $G$ we have $s \models \phi$ can be done in $\PSpace$, and in polynomial time if the size of the game structure is fixed.
\end{theorem}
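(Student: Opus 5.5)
We proceed by the standard bottom-up labelling algorithm for ATL-like logics. For each subformula of $\phi$, in order of increasing size, we compute the set of states satisfying it, and then replace that subformula by a fresh atom labelling exactly those states. Boolean connectives and atoms are handled trivially. It then suffices to show that for a single path quantifier applied to atoms $q,q'$, the set of states satisfying it can be computed within the claimed bounds. The overall algorithm then makes at most $|\phi|\cdot|S|$ such calls, and labelling is polynomial, so both the $\PSpace$ bound and the polynomial bound for fixed game size carry over, since $\PSpace$ is closed under polynomially many sequential $\PSpace$ subroutine calls.

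\textbf{Shared randomisation.} For $\teamr{C}{w}{\sh}$, the game $G_C$ is a standard two-player concurrent stochastic game. For $\Circle$, $\Box$ and $\until$ objectives with sure, almost-sure, limit-sure, or threshold conditions, the RATL/PATL literature gives the required procedures. Qualitative conditions are handled in polynomial time, and threshold conditions in $\exists\Rb\subseteq\PSpace$ via \cite{CDH12}. The size of $G_C$ is exponential in $|C|$ only through its action set, whereas the number of players is bounded by the size of the game structure. We should check that $\exists\Rb$ over the merged action set remains polynomial in the input. I expect this to hold because the input transition table already lists all action profiles.

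\textbf{Individual randomisation.} The sure case requires no randomness, so it coincides with $\teamr{C}{\sure}{\sh}$. For $\Circle$, the question is a one-shot game. Winning requires reaching a state labelled $q$ with probability $>t$, resp. $=1$, and this is directly expressible as an $\exists\Rb$ formula over the selectors of $C$. For the almost-sure condition only supports matter, so the question can alternatively be settled by guessing supports in $\NP$. For $q\until q'$, we make all states labelled $q'$ into absorbing targets, and all states not labelled $q$ (and not $q'$) into absorbing non-targets. The problem then becomes exactly the threshold problem, decided in $\exists\Rb$ by \Cref{thm:thmETRComplete}, or the almost-sure problem, decided in $\NP$ by \Cref{thm:almostsureeasiness}. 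For $\teami{C}{\almost}\Box q$, we use a dual trick: we make the states violating $q$ absorbing losing states. Almost-sure safety then holds iff the team has a strategy under which $\neg q$ is reached with probability $0$, which is sure-type: it depends only on supports. The answer can be found by iteratively removing states from which the team cannot keep all supports inside the current set. Each such removal test is polynomial after guessing or enumerating supports, and the whole procedure is in $\NP\subseteq\PSpace$. All of these subroutines lie in $\PSpace$ by \cite{Can88}. When the game structure has fixed size, every formula passed to $\exists\Rb$ has a fixed number of variables and fixed degree, and the thresholds are the only varying input. Deciding such formulas takes time polynomial in the bit size of the constants, by Renegar-style bounds, and support enumeration takes constant time.

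\textbf{Main obstacle.} I expect the main obstacle to be the polynomial bound for fixed game size together with the precise reduction for $\until$. In the latter, states that are neither $q$ nor $q'$ must be made losing-absorbing without breaking the target-absorbing assumption used by \Cref{thm:thmETRComplete}, which is fine since a losing trap is an ordinary non-target state. For the fixed-size claim, I would invoke the fact that the existential theory of the reals with a constant number of variables is decidable in polynomial time in the coefficient bit length. The excluded operators ($\limit$, $\geq t$, threshold $\Box$) are exactly those for which no memoryless or ETR characterisation is available, so no further cases arise.
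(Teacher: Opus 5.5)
Your proposal is correct and follows the paper's overall architecture. You use recursive labelling and delegate the shared quantifiers and $\teami{C}{\sure}$ to the RATL/PATL procedures. You reduce $\until$ to the threshold or almost-sure reachability problem by making the $q'$-states absorbing targets and the states satisfying neither $q$ nor $q'$ absorbing traps, then invoke \Cref{thm:thmETRComplete} or \Cref{thm:almostsureeasiness}.

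Your route departs from the paper in three local places.

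\textbf{The $\Circle$ case.} You encode the one-shot condition directly over the first-step selectors. The paper instead reduces to reachability in the game where all states other than $s$ are made absorbing. Yours is the more reliable choice, because that reduction lets the play return to $s$ and retry, and it is trivially won when $s\models\psi$. In the introductory robot game, the one-shot max-min probability of moving from $s_0$ to $s_{\mathsf{goal}}$ is $1/4$. Yet in the modified game the uniform memoryless profile reaches $s_{\mathsf{goal}}$ with probability $1/3$ against every opponent. So $\teami{\Artoo,\Ceethree}{>0.3}\Circle\{s_{\mathsf{goal}}\}$ would be misjudged by the paper's reduction.

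\textbf{The $\teami{C}{\almost}\Box q$ case.} You compute a support-based greatest fixed point rather than invoking the equivalence with $\teami{C}{\sure}\Box q$. However, your assertion that almost-sure safety ``is sure-type'' is exactly what the paper proves in \Cref{thm:safetyAS}, and you leave it unjustified. The missing step is as follows. Suppose $\bstrat_C$ witnesses almost-sure safety from $s$. Then for every opponent action $b$ and every profile in the support of $\bstrat_C(s)$, every possible successor $t$ again admits almost-sure safety, via the continuation strategy; otherwise the opponent could play $b$ and then spoil from $t$. Hence the almost-surely safe region is a post-fixed point of your removal operator. Once this is in place, singleton supports suffice and each removal test is polynomial. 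Your loosely argued $\NP$ bound for an iterated fixed point then becomes unnecessary.

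\textbf{The fixed-size bound.} Your Renegar-type argument gives the polynomial bound for a fixed game structure: a fixed number of variables and fixed degree, with only the threshold constants varying. The paper's proof states this bound but never justifies it.
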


\begin{proof}
    As a preliminary remark, let us remind that when a coalition $C$ seeks to achieve an objective $\theta$ against all strategies of the other players, those can be modelled by a single opponent, as they do not need any form of randomisation to respond to the coalition's strategy.
    We can then apply the following recursive algorithm.
    Given an IRATL formula~$\phi$:
    \begin{itemize}
        \item if $\phi = q, \neg \psi$,  or $\psi \vee \chi$, treat it recursively as expected.

        \item If $\phi = \teamr{C}{w}{\sh} \theta$, then apply the same recursive operations as for RATL~\cite{AHK07} and PATL~\cite{CL07}.

        \item If $\phi = \teami{C}{\sure} \theta$: since the coalition must surely reach the target, randomising is useless, and the coalition can then be seen as a single player.
        We can therefore apply the algorithms cited at the previous point.

        \item If $\phi = \teami{C}{\almost} \Box \psi$: this formula is semantically equivalent to $\teami{C}{\sure} \Box \psi$ (see Appendix \ref{app:logic} for a proof).
        We can therefore apply the previous point.
        
        \item If $\phi = \teami{C}{\almost} \Circle \psi$ (resp. $\teami{C}{>t} \Circle \psi$), first compute the set $T$ of states that satisfy the formula $\psi$.
        Then, consider the game in which the team $C$ wants to reach the set $T$, and all states except $s$ are made absorbing.
        Then, decide whether the team can guarantee reaching the set $T$ almost surely (resp. with probability greater than $t$), using \Cref{thm:thmETRComplete} (resp. \Cref{thm:almostsureeasiness}).

        \item If $\phi = \teami{C}{\almost} \psi \until \chi$ (resp. $\teami{C}{>t} \psi \until \chi$), first compute the sets $U$ and $T$ of states that satisfy the formulas $\psi$ and $\chi$, respectively.
        Consider the game in which all states $s' \in S \setminus U$ are made absorbing.
        Then, decide whether in this game, the team $C$ can guarantee reaching the set $T$ almost surely (resp. with probability greater than $t$), using \Cref{thm:thmETRComplete} (resp. \Cref{thm:almostsureeasiness}).
    \end{itemize}

    The size of the recursion stack is bounded by the size of the formula, and each recursion consists in a polynomial-space algorithm, hence the conclusion.
\qed\end{proof}

The decidability of the full logic IRATL will be obtained if one can handle the two following problems, which we leave as open questions.

\begin{problem}[Limit-sure problem]\label{problem:limit}
    Given a game $\Game$, does it hold that for every $\epsilon > 0$, there exists a collective strategy $\bstrat_\Players$ such that for any strategy $\strat_\opp$, the probability of reaching the set $\Terminals$ under the strategy profile $\bstrat$ is greater than~$1-\epsilon$?
\end{problem}

\begin{problem}[Safety problems]\label{problem:safety}
    Given a game $\Game$, can the team guarantee (with probability greater than a given threshold, or limit-surely) that the set $T$ is \emph{not} eventually reached?
\end{problem}

The reader may already be convinced that the limit-sure problem requires new techniques, but believe that the safety threshold problem can be treated in a way analogous to the reachability threshold problem.
This is not the case, because \Cref{thm:memmless-optimal} does not extend to safety games: as shown in the work of de Alfaro et al.~\cite[Theorem~8]{AHK07}. Even in a two-player setting, infinite memory may be required to guarantee that a safety objective is satisfied with positive probability.
However, in that setting, this problem can be decided by taking the perspective of the opponent---by asking whether the opponent has a strategy to make the player lose with probability at least $1-t$. Such arguments no longer work in our setting, since the max-min value does not equal the min-max value.
Hence, the safety threshold problem seems to be more challenging than its reachability counterpart. 

\section{Discussion}
\label{sec:discussion}
We introduced the logic IRATL, that extends the well-studied ATL to support cases where players randomise independently without a common coin. We showed that this distinction is not just semantic but requires new algorithms.

IRATL describes games where players form teams, but must randomise individually.
The exact complexity of solving these games is left as an open problem. We show $\exists\Rb$-easiness and $\NP$-hardness for the threshold problem, which is also $\SQRTSUM$-hard, 
following from the two-player case~\cite{EY15}. It would be interesting to know the exact complexity of this problem.

Second, from a practical perspective, our current value iteration method relies on asymptotic convergence without a certified stopping criterion. Even in the simpler two-player concurrent setting, it is known that a double-exponential number of iterations might be required to approximate the value to within an $\epsilon$-approximate value~\cite{HIM11}. A promising avenue for future work is to extend recent results on stopping criteria for concurrent games~\cite{KMW23}. By developing techniques to approximate the value from both above and below, we could rigorously guarantee $\epsilon$-optimality upon termination. We leave as future work the integration into established model checkers such as PRISM-games~\cite{KN0S20} or STORM~\cite{STORM22} enabling model checking tools that accurately model decentralised systems.

 \begin{credits}
 \subsubsection{\ackname} This work is a part of project VAMOS that has received funding from the European Research Council (ERC), grant agreement No 101020093.
 Part of this work was realised when the first author was an FNRS aspirant at Université libre de Bruxelles.
 \end{credits}
%
%
%

\bibliographystyle{splncs04}
\bibliography{biblio}

\appendix

\section{The existential theory of the reals}\label{app:additionalDefinitions}
\begin{definition}[Existential theory of the reals]
    A \emph{formula of the existential theory of the reals}, or \emph{ETR} for short, is a formula of the form 
    $$\exists x_1 \dots \exists x_k~\phi(x_1, \dots, x_k)$$ where $\phi$ is a quantificator-free formula written with the symbols $0$, $1$, $=$, $\leq$, $<$, $+$, $-$, $\times$, $\wedge$, $\Leftrightarrow$, $\neg$, and parentheses, with the expected syntactic rules and semantics.
\end{definition}

We write $\exists\Rb$ for the complexity class of problems that can be reduced in polynomial time to the problem of deciding the validity of an ETR formula. 
That class is known to be included in $\PSpace$~\cite{Can88}. It is also easy to see that $\exists\Rb$ contains the class $\NP$.

\section{Appendix for~\cref{sec:Threshold}}\label{app:threshold}
\subsection{A first lemma}
We prove a simple lemma that we use in the proof of \cref{thm:thmETRComplete}.
\begin{restatable}{lemma}{thresholdsupattain}{\label{lemm:sup-attain}}
    For any valuation $x$ and any state $s$, the supremum in the definition of $\mathsf{Pre}(x)(s)$ is attained.
    Formally, there exists a memoryless collective strategy $\bar{\sigma}_{\Players}^* \in \Lambda_{\Players}$ 
    such that:
    $$
        \mathsf{Pre}(x)(s) = \inf_{\sigma_\opp \in \Lambda_\opp} \mathsf{Pre}_{\bar{\sigma}_{\Pi}^*, \sigma_\opp}(x)(s)
    $$
\end{restatable}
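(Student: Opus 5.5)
The plan is a compactness argument: show that the function maximised in the definition of $\mathsf{Pre}(x)(s)$ is continuous on a compact domain, and then apply the extreme value theorem.

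First, the domain. A collective selector at $s$ is a tuple $(\xi_p(s))_{p \in \Players}$ with each $\xi_p(s) \in \Dist\,\Av_p(s)$. So $\Lambda_\Players(s)$ is identified with the product of simplices $\prod_{p \in \Players} \Delta_{|\Av_p(s)|}$. This is a compact subset of a finite-dimensional Euclidean space, because each simplex is closed and bounded and finite products preserve compactness.

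Next, the opponent side. For a fixed collective selector $\bxi_\Players$, the map $\xi_\opp \mapsto \mathsf{Pre}_{\bxi}(x)(s)$ is linear in $\xi_\opp(s)$. Its infimum over $\Dist(\Av_\opp(s))$ is therefore attained at a vertex. Hence
$$f(\bxi_\Players) := \mathsf{Pre}_{\bxi_\Players}(x)(s) = \min_{b \in \Av_\opp(s)} g_b(\bxi_\Players),$$
where
$$g_b(\bxi_\Players) = \sum_{\baction} \Big(\prod_{p \in \Players} \xi_p(s)(a_p)\Big) \sum_{t} \delta(s,\baction,b)(t)\, x(t).$$
Each $g_b$ is a polynomial in the coordinates of the selectors, hence continuous. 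A minimum of finitely many continuous functions is continuous, so $f$ is continuous on the compact set $\Lambda_\Players(s)$.

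By the extreme value theorem, $f$ attains its supremum at some $\bxi^*_\Players$. Now define the memoryless collective strategy $\bar{\sigma}^*_\Players$ by $\sigma^*_p(s) = \xi^*_p(s)$. At other states it may take any value; if we want a single strategy that is optimal at every state, we pick a maximiser at each state independently, since $\Lambda_\Players = \prod_s \Lambda_\Players(s)$. With this strategy the displayed equality holds.

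The main point of care is continuity of $f$, and specifically that the infimum over the opponent does not break it. I would settle this with the reduction to finitely many pure opponent actions described above. If one prefers to avoid that reduction, continuity also follows from the fact that $\mathsf{Pre}_{\bxi}(x)(s)$ is jointly continuous on the product of compact sets, so its infimum over the compact opponent set is continuous by a standard argument (Berge's maximum theorem). Everything else is routine.
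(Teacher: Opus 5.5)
Your proposal is correct and follows the paper's argument: the team's selectors form a compact product of simplices, the guaranteed-value map is continuous on it, and the extreme value theorem gives a maximiser. Your reduction of the opponent's infimum to a minimum over finitely many pure actions, with each $g_b$ a polynomial, supplies the continuity justification that the paper only asserts on the grounds that $\mathsf{Pre}$ is built from ``sums and products of probabilities''.
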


\begin{proof}
   First, the set $\Lambda_{\Pi}$ of selectors for the team 
    is defined by the product of probability distributions 
    over finite action sets for each player. 
    In mathematical terms, this set is compact 
    (it is closed and bounded).

    Second, consider the function $f$
    that maps a team selector to its guaranteed value:
    $$
        f: \bstrat_{\Players} \mapsto \inf_{\strat_{\opp}} \mathsf{Pre}_{\bstrat}(x)(s)
    $$
    The calculation of $\mathsf{Pre}$ involves adding and multiplying the probabilities defined in 
    $\bstrat_{\Players}$ and $\strat_{\opp}$. 
    Because of this, the function $f$ is continuous.

    Finally, we apply the extreme value theorem \cite{WR76}. 
    This theorem states that a continuous function defined on a compact set must achieve a maximum value. 
    Therefore, there is a specific strategy 
    $\bstrat_{\Players}^*$ that maximises the value.
\qed\end{proof}

\subsection{Proof of \Cref{lm:VIconvergencetovalue}} \label{app:VIconvergencetovalue}

\lmVIconvergencetovalue*

\begin{proof}
We first prove that the value iteration sequence converges.

    \begin{restatable}{proposition}{thresholdconvergence}{\label{lem:convergence}}
    There exists $w = \lim_{k \to \infty} u_k$.
\end{restatable}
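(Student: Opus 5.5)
The statement to prove is Proposition~\ref{lem:convergence}: the value iteration sequence (written $u_k$ here, i.e.\ the sequence $v_n$ of Section~\ref{sec:Threshold}) converges pointwise. The plan is to show that the sequence is pointwise non-decreasing and bounded in $[0,1]$, and then invoke the monotone convergence theorem for real sequences state by state. Since $\Vertices$ is finite, pointwise convergence gives a limit valuation $w$.

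\textbf{Step 1: boundedness.} Every valuation $v_n$ takes values in $[0,1]$, by induction on $n$. For a fixed pair of selectors, $\Pre_{\bxi}(v)(s)$ is a convex combination of the values $v(t)$, since the weights $\bxi_\Players(s)(\baction)\,\xi_\opp(s)(b)\,\delta(s,\baction,b)(t)$ are nonnegative and sum to $1$. Taking an infimum and then a supremum preserves the bounds $0$ and $1$.

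\textbf{Step 2: monotonicity of the operator.} If $v \le v'$ pointwise, then $\Pre(v) \le \Pre(v')$ pointwise. The weights above are nonnegative, so $\Pre_{\bxi}(v)(s) \le \Pre_{\bxi}(v')(s)$ for every pair of selectors. The inequality is preserved under $\inf_{\xi_\opp}$ and then under $\sup_{\bxi_\Players}$.

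\textbf{Step 3: the sequence increases.} We show $v_0 \le v_1$ and conclude $v_n \le v_{n+1}$ by induction using Step~2. For a target state $s \in T$, absorption gives $\delta(s,\baction,b)(s)=1$, so $v_1(s) = v_0(s) = 1$. For $s \notin T$, we have $v_0(s)=0$, and $v_1(s)\ge 0$ by Step~1. Hence $v_0 \le v_1$. Then $v_n \le v_{n+1}$ implies $v_{n+1} = \Pre(v_n) \le \Pre(v_{n+1}) = v_{n+2}$.

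\textbf{Step 4: conclusion.} Each real sequence $(v_n(s))_n$ is non-decreasing and bounded above by $1$, so it converges; call its limit $w(s)$, which is the supremum of the sequence.

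None of these steps is hard. The only point that needs care is the base case in Step~3, which relies on the standing assumption that $T$ is absorbing. Without it, $v_1(t)$ could drop below $1$ at a target state $t$, and the sequence would not be monotone there.
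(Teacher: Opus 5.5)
Your proof is correct and follows essentially the same route as the paper: monotonicity of $\Pre$ via nonnegative weights, the base case $v_0 \le v_1$ using that $T$ is absorbing, induction to get a non-decreasing sequence, and boundedness by $1$ to conclude pointwise convergence. The one small addition is your explicit induction showing that $\Pre$ maps $[0,1]$-valued valuations to $[0,1]$-valued valuations, which the paper leaves implicit.
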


\begin{proof}
    First, we show that the predecessor operator 
    $\mathsf{Pre}$ is monotone. 
    Let $x$ and $y$ be two valuations such that 
    $x \le y$ (i.e., $x(s) \le y(s)$ for all $s \in \Vertices$).
    Consider any fixed collective selector $\bxi_\Players$ 
   and any selector $\xi_\opp$ for the opponent.
   Given a state $s$, the corresponding expected payoff is:
   $$
    \mathsf{Pre}_{\bxi}(x)(s) 
    \;=\; 
    \sum_{\baction \in \Av_{\Players}(s)} \sum_{b \in \Av_{\opp}(s)} \sum_{t \in \Vertices} 
    \bxi_{\Players}(s)(\baction) \cdot \xi_{\opp}(s)(b) \cdot \delta(s, \baction, b)(t) \cdot x(t).
    $$

    This expression is a weighted average of the values $x(t)$ with non-negative weights 
    (i.e., $\bxi_\Players, \xi_{\opp}, \delta$ have non-negative values).
    Therefore, if $x(t) \le y(t)$ for all states $t$, the weighted average for $x$ cannot exceed the weighted average for $y$. Thus:
    \[
        \mathsf{Pre}_{\bxi}(x)(s) \;\le\; \mathsf{Pre}_{\bxi}(y)(s).
    \]
    Since taking the infimum (over the opponent strategies)
    and the supremum (over team strategies) preserves the order, we have:
    \[
        \mathsf{Pre}(x) \le \mathsf{Pre}(y).
    \]

    Now we consider the sequence $(u_n)_{n\ge 0}$ and show the convergence in three steps.
    
    \paragraph{Step 1: $u_0 \le u_1$.}
    If $s \in T$, then $\mathsf{Pre}(u_0)(s) = 1 = u_0(s)$ because the set $T$ is assumed to be absorbing. 
    If $s \notin T$, then $u_0(s) = 0 \le \mathsf{Pre}(u_0)(s)$, where the inequality holds because $\mathsf{Pre}$ is defined by
    non-negative probabilities.
    
    \paragraph{Step 2: Monotonicity of $(u_n)_n$.}
    Assume $u_n \le u_{n+1}$ for some $n \ge 0$. Since the operator $\mathsf{Pre}$ is monotone, applying it to both sides gives:
    \[
        u_{n+1} = \mathsf{Pre}(u_n) \;\le\; \mathsf{Pre}(u_{n+1}) = u_{n+2}.
    \]
    By induction, the sequence is non-decreasing: $u_0 \le u_1 \le u_2 \le \dots$.
    
    \paragraph{Step 3: Boundedness.}
    All valuations in the sequence are in $[0,1]^{\Vertices}$.  
    Thus for each $s\in \Vertices$, the real sequence $(u_n(s))_{n\ge 0}$ is
    non-decreasing and bounded above by $1$, hence it converges.
\qed\end{proof}

 Now, 
 define $\prob(\Diamond \Terminals)(s)$ as the maximum probability 
 with which the team \emph{can} guarantee reaching $T$ starting from $s$:

 $$
 \prob(\Diamond \Terminals)(s) = \sup_{\bstrat_{\Players}}
   \inf_{\strat_{\opp}}
   \prob_{\bstrat}(\Diamond \Terminals)(s)
 $$
 Where $\prob_{\bstrat}(\Diamond T)(s)$
 denotes the probability of reaching $T$, starting from $s$, 
 if the team plays with collective strategy $\bstrat_{\Players}$, 
 and the opponent plays with strategy~$\strat_{\opp}$.
 By the same approach, we define 
 $\prob(\Diamond^n T)(s)$ as the maximum probability that the team can guarantee of reaching $\Terminals$ within at most $n$ steps, and
 $\prob_{\bstrat}(\Diamond^n T)(s)$
 as the same probability under the strategy profile
 $\bstrat = (\bstrat_{\Players},\strat_{\opp})$. 
 
 We want to prove that for all $s$, we have 
 $w(s) = \prob(\Diamond \Terminals)(s)$.
 This will be a consequence of the two following propositions.

\begin{proposition}\label{lemm:w-greater-value}
    For all $\epsilon > 0$, the team players have a collective
    strategy $\bstrat_{\Players}$ that guarantees reaching $T$
    from every state $s$ with probability $w(s) - \epsilon$.
\end{proposition}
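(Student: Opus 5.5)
We prove the statement by a finite-horizon argument that uses the monotone convergence of $(u_n)_n$ from \Cref{lem:convergence}. Fix $\epsilon > 0$. Since $\Vertices$ is finite and $u_n(s) \to w(s)$ for every $s$, we can choose a single index $N$ such that $u_N(s) \geq w(s) - \epsilon$ for all states $s$. It then suffices to build a collective strategy that, from every state $s$, reaches $\Terminals$ within $N$ steps with probability at least $u_N(s)$ against every opponent strategy. We will show that $u_n(s) \le \prob(\Diamond^n \Terminals)(s)$, with the bound attained by a concrete strategy. Since $\prob(\Diamond^N \Terminals) \le \prob(\Diamond \Terminals)$, this gives the claim.

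The strategy is a non-stationary, step-counter strategy $\bstrat_\Players^{N}$. At step $k$ with $0 \le k < N$, if the current state is $s$, the team plays a selector $\bxi^{(N-k)}_\Players(s)$ that attains the supremum in $\Pre(u_{N-k-1})(s)$. Such a selector exists by \Cref{lemm:sup-attain}. Each team player plays their own component independently, which is legitimate because the selectors are product distributions. After step $N$ the team plays arbitrarily. The strategy depends only on the current state and the number of elapsed steps, so it is well defined as a history-dependent strategy. Memorylessness is not required here, since it is established separately in \Cref{thm:memmless-optimal}.

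The key claim is proved by induction on $n$. For every state $s$ and every opponent strategy $\strat_\opp$, playing the last $n$ selectors from $s$ reaches $\Terminals$ within $n$ steps with probability at least $u_n(s)$.
\begin{itemize}
\item Base case $n=0$: this is immediate from the definition of $u_0$.
\item Inductive step: condition on the first joint action and on the successor state $t$. The opponent's first move is some distribution $\xi_\opp$ that may depend on the history. The continuation is then a strategy from $t$ facing the residual opponent strategy, so by the induction hypothesis the conditional success probability is at least $u_{n-1}(t)$. The overall probability is therefore at least $\Pre_{\bxi}(u_{n-1})(s) \ge \inf_{\xi_\opp}\Pre_{\bxi}(u_{n-1})(s) = u_n(s)$, by the choice of $\bxi^{(n)}_\Players$.
\end{itemize}
Absorption of $\Terminals$ guarantees that a target reached earlier is still counted, so $u_0$ correctly scores target states as $1$.

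The main obstacle is making the conditioning step rigorous against an arbitrary history-dependent, possibly randomised, opponent. We handle this by quantifying the induction hypothesis over \emph{all} opponent strategies from every state. Conditioning on the first step then yields a residual opponent strategy, and the independence of the team players' private randomness gives exactly the product form used in $\Pre_{\bxi}$. The infimum over opponent selectors in one step is over mixed actions, which covers any randomised opponent move. The uniform choice of $N$ across states is where finiteness of $\Vertices$ is used.
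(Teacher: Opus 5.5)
Your proposal is correct and follows essentially the same route as the paper. The paper also builds the step-counter strategy $\bar{\tau}^n_{\Players}$ that plays the sup-attaining selectors $\bxi^n_{\Players}, \bxi^{n-1}_{\Players}, \dots$ given by \Cref{lemm:sup-attain}, proves $\prob(\Diamond^n \Terminals)(s) \ge u_n(s)$ by induction on $n$ against every opponent strategy, and then uses finiteness of the state space to pick one $n$ with $u_n \ge w - \epsilon$ everywhere; your explicit treatment of the residual opponent strategy in the conditioning step is slightly more careful than the paper's write-up, but the argument is the same.
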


\begin{proof}
    For each $n\ge 0$, 
    We define $\bxi^n_{\Players}(s)$ as the collective selector that realises the supremum in the definition of $u_n(s)$. 
    Based on Lemma~\ref{lemm:sup-attain}, 
    such a strategy exists.
    Then, we have:
    \[
    u_n(s)
    =
    \inf_{\xi_\opp}
    \mathsf{Pre}_{\bar{\xi}^n_{\Players},\xi_
    \opp}(u_{\,n-1})(s)
    \qquad\text{for all }s.
    \]
    
    For a fixed $n\ge 0$ we define a collective strategy $\bar{\tau}^n_{\Players}$ 
    that plays
    $\bar{\xi}^n_{\Players}$ in the first step, 
    then 
    $\bar{\xi}^{\,n-1}_{\Players}$ 
    in the second step and so on.
    More formally, for all $1 \leq i \leq n$, in the $i$-th step, 
    the team plays according to the $\bar{\xi}^{n-i+1}_{\Players}$.
    In the following steps, it behaves arbitrarily.     

    We show by induction on $n$ that for every state $s$ and all strategies
    $\xi_\opp$ of the opponent, we have:
    \[
    \prob_{\bar{\tau}^n_{\Players},\xi_\opp}(\Diamond^{n}\Terminals)(s) \ge u_n(s).
    \]
    For $n=0$ the claim is correct, since $\Diamond^{0}T$ simply means that
    $s\in T$, 
    and by definition we know that $u_0(s)=1$ 
    if and only if $s \in \Terminals$.  
    Assume the claim holds for $n-1$.  
    Starting from a state $s$, the first step is played using $\bar{\xi}^n_{\Players}$.
    Conditioning on the successor $t$, if $t\in T$ the objective is already
    satisfied, and if $t\notin T$ then by the induction hypothesis the probability
    of reaching $T$ within the next $n-1$ steps is at least $u_{n-1}(t)$.
    Therefore, we have:
    \[
    \prob_{\bar{\tau}^n_{\Players},\xi_\opp}(\Diamond^{n}\Terminals)(s)
    \;\ge\;
    \mathsf{Pre}_{\bar{\xi}^n_{\Players},\xi_\opp}(u_{\,n-1})(s).
    \]
    Taking the infimum over all opponent strategies and using the definition of
    $\bar{\xi}^n_{\Players}$, we obtain:
    \[
    \forall \xi_\opp,
    \prob_{\bar{\tau}^n_{\Players},\xi_\opp}(\Diamond^{n}\Terminals)(s)
    \;\ge\;
    u_n(s).
    \]
    This completes the induction.
    
    Since $u_n\to w$ and $S$ is finite, there exists $n$ such that
    $|u_n(s)-w(s)|<\varepsilon$ for all $s$.  
    For this $n$, the strategy $\bar{\tau}^n_{\Players}$ satisfies, for all $s$ and all
    $\sigma_\opp$ we have:
    \[
    \prob_{\bar{\tau}^n_{\Players},\sigma_\opp}(\Diamond^{n}\Terminals)(s)
    \;\ge\;
    u_n(s)
    \;\ge\;
    w(s)-\varepsilon.
    \]
    This proves the proposition.
\qed\end{proof}

\begin{proposition}\label{lemm:w-less-value}
    For every state $s$,
    team players do not have a collective strategy
    $\bstrat_{\Players}$ that guarantees reaching $T$
    with probability greater than $w(s)$.
\end{proposition}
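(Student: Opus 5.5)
The plan is a supermartingale argument. First show that the limit $w$ is a fixed point of $\Pre$. Then, against an arbitrary (possibly history-dependent) collective strategy, build an opponent response under which $w$ evaluated along the play never increases in expectation. This yields the bound directly, without any $\epsilon$.

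\textbf{Step 1: $\Pre(w)=w$.} For each fixed pair of selectors, $\Pre_{\bxi}(\cdot)(s)$ is a weighted average of the values $x(t)$ with weights summing to $1$. Hence $|\Pre_{\bxi}(x)(s)-\Pre_{\bxi}(y)(s)|\le \|x-y\|_\infty$. Taking the infimum over $\xi_\opp$ and then the supremum over $\bxi_\Players$ preserves this bound, so $\Pre$ is $1$-Lipschitz in the sup norm. Since $u_n\to w$ pointwise on the finite set $\Vertices$, the convergence is uniform. Passing to the limit in $u_{n+1}=\Pre(u_n)$ therefore gives $w=\Pre(w)$. Also $w(t)=1$ for $t\in\Terminals$, because $u_n(t)=1$ for all $n$ (absorption), and $w\ge 0$ everywhere.

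\textbf{Step 2: the opponent's counter-strategy.} Fix any collective strategy $\bstrat_\Players$ and any history $hs$. The team's behaviour at $hs$ is the product distribution $\bstrat_\Players(hs)$, which is a collective selector at $s$. The opponent sees the history but not the realised random choices, and $\bstrat_\Players(hs)$ depends only on the history. Hence the opponent can compute this selector. By definition of $\Pre$,
\[
\inf_{\xi_\opp}\Pre_{\bstrat_\Players(hs),\xi_\opp}(w)(s)\;\le\;\Pre(w)(s)=w(s).
\]
The map $\xi_\opp\mapsto \Pre_{\bstrat_\Players(hs),\xi_\opp}(w)(s)$ is linear on the simplex $\Dist(\Av_\opp(s))$. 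The infimum is therefore attained at some pure action $b_{hs}$. Let $\strat_\opp$ play $b_{hs}$ at $hs$.

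\textbf{Step 3: supermartingale and conclusion.} Let $X_n$ be the $n$-th state of the play under $\bstrat=(\bstrat_\Players,\strat_\opp)$ starting from $s$. Step 2 gives
\[
\mathbb{E}[w(X_{n+1})\mid X_0\dots X_n]\le w(X_n),
\]
so $\mathbb{E}[w(X_n)]\le w(s)$ for all $n$. Because $\Terminals$ is absorbing, reaching $\Terminals$ within $n$ steps is the event $X_n\in\Terminals$. Since $w\ge 0$ and $w=1$ on $\Terminals$, we get $\prob_\bstrat(\Diamond^n\Terminals)\le \mathbb{E}[w(X_n)]\le w(s)$. Letting $n\to\infty$ by continuity of measure yields $\prob_\bstrat(\Diamond\Terminals)\le w(s)$. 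Hence no collective strategy guarantees probability greater than $w(s)$.

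The main obstacle is Step 1: the fixed-point property is needed for the local inequality in Step 2, and it only holds because $\Pre$ is nonexpansive and the convergence on the finite state space is uniform. The rest uses only that the opponent moves second in each one-shot game. This is exactly the max-min order, so the lack of a minimax equality in our setting causes no problem.
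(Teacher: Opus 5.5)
Your proof is correct, and it takes a genuinely different route from the paper's. The paper never evaluates $\Pre$ at the limit $w$. Instead it proves by induction on $n$ that, for every collective strategy, the opponent can keep $\prob_\bstrat(\Diamond^n \Terminals)$ at most $u_n(s)$. The opponent does this by playing a one-step minimiser for $\Pre(u_{n-1})$, then one for $\Pre(u_{n-2})$ against the residual strategy, and so on. This needs only the definition of $\Pre$ and induction, with no fixed-point property.

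The price is that the opponent's response depends on the horizon $n$. The paper then passes from ``$\forall n\,\exists \strat_\opp$'' to ``$\exists \strat_\opp\,\forall n$'' with only the justification $u_n \le w$, which does not license the quantifier swap. The step can be repaired, e.g.\ by compactness: the sets of opponent strategies achieving $\prob_\bstrat(\Diamond^n\Terminals)\le w(s)$ are nonempty, closed in the product topology, and nested, so they share a common element. The paper, however, does not make this argument.

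Your route pays up front by showing $\Pre(w)=w$ via nonexpansiveness and uniform convergence on the finite state space. In exchange you get a single deterministic, horizon-independent opponent response, greedy with respect to $w$, under which $w(X_n)$ is a supermartingale. This bounds every finite horizon at once, which is exactly the uniform statement the paper's final step needs. Only the inequality $\Pre(w)\le w$ is actually used, and monotonicity alone gives only the opposite direction $w\le\Pre(w)$. So your Step 1 is doing real work.
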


\begin{proof}
    We show that for any collective strategy 
    $\bstrat_{\Players}$ chosen by the team, 
    the opponent has a counter-strategy $\strat_{\opp}$ 
    that guarantees the probability of reaching the target $T$ never exceeds $w(s)$. 
    We prove this by showing that the opponent can keep the
    probability of reaching $\Terminals$ within any number of turns $n$ 
    ($\Diamond^n T$ objective) below the valuation $u_n$.

    We use induction on $n$. For $n=0$, reaching the target in zero steps is only possible if we are already there, so the probability is $1$ for $s \in T$ and $0$ otherwise.
    This matches the definition of $u_0$ since $u_0$ assigns $1$ only to state $s \in \Terminals$.

    Now, assume that the claim is true for some $n$.
Let $s$ be a state, and let $\bstrat_\Players$ be a collective strategy.
By definition of the predecessor operator $\Pre$, we have:
$$\inf_{\xi_{\opp} \in \Lambda_{\opp}}
        \Pre_{\bxi}(u_n)(s) \leq u_{n+1}(s)$$
Moreover, there is a selector $\xi_\opp$ which realises this infimum, and therefore such that $\Pre_{\bxi}(u_n)(s) \leq u_{n+1}(s)$.
Now, let us consider the following strategy for the opponent.
First, she plays the selector $\xi_\opp$.
Then, let $t$ be the state that is reached: from $t$, by induction hypothesis, the opponent can respond to the team's strategy to ensure that the probability of reaching the set $T$ during the next $n$ steps does not exceed $u_n(t)$.
The opponent then plays that strategy.
Then, from $s$, the probability of reaching $T$ is at most $u_{n+1}(s)$.

    As $n$ grows, the values $u_n$ converge to the limit $w$. 
    Since the opponent can keep the team's success after $n$ steps below $u_n$
    for every specific $n$, the opponent can also keep the total probability of eventually reaching $T$ from exceeding the limit $w(s)$. This shows that no collective strategy can guarantee a payoff strictly greater than $w(s)$.

    More formally:

    \begin{align*}
        &\forall n, \forall {\bstrat_{\Players}}, \exists {\strat_{\opp}}, 
        \prob_{\bstrat} (\Diamond^n T) \leq u_n 
        && \text{(proven by induction)}
        \\[1em]
        \implies \quad 
        &\forall {\bstrat_{\Players}}, \exists {\strat_{\opp}}, \forall n,
        \prob_{\bstrat} (\Diamond^n T) \leq w 
        && \text{(since $u_n \le w$)}
        \\[1em]
        \implies \quad 
        &\forall {\bstrat_{\Players}}, \exists {\strat_{\opp}}, \sup_n 
        \prob_{\bstrat} (\Diamond^n T) \leq w 
        && \text{(supremum over $n$)}
        \\[1em]
        \implies \quad 
        &\forall {\bstrat_{\Players}}, \exists {\strat_{\opp}}, 
        \prob_{\bstrat} (\Diamond T) \leq w
        && \text{(monotonicity of reach probability)}
        \\[1em]
        \implies \quad 
        &\sup_{\bstrat_{\Players}} \inf_{\strat_{\opp}} 
        \prob_{\bstrat} (\Diamond T) \leq w
        && \text{(properties of sup/inf)}
    \end{align*}

    The last sentence shows that for any collective strategy $\bstrat_{\Players}$ and any state $s$, 
    the opponent has a counter strategy that prevents the team from reaching
    $T$ with probability greater than $w(s)$.
\qed\end{proof}

This proves our lemma.\end{proof}

\subsection{Proof of \Cref{thm:memmless-optimal}} \label{app:memmless-optimal}

\thmmemlessoptimal*


\begin{proof}
We use the same approach as in the work of Chatterjee, de Alfaro and Henzinger~\cite{CDH12} for two-player concurrent game, where they show that memoryless strategies suffice for two-player zero-sum concurrent games.

\begin{proposition} \label{lemm:rank-definition}
    Let $n$ be the index such that the valuation $u_n$ satisfies 
    $\|w - u_n\|_\infty < \epsilon/2$ 
    (which exists since $(u_n)_n$ converges to $w$).
    For each non-terminal state $s$, 
    let the rank $\ell(s)$ be the smallest 
    $k \in \{1, \dots, n\}$ such that $u_k(s) = u_n(s)$. 
    There exists a collective selector $\bxi^s_\Pi$
    such that for any opponent action $b \in A_{\opp}$, 
    the expected value of the valuation $u_{\ell(s)-1}$ at the subsequent state $t$ satisfies:
    \[
        \sum_{\baction_\Players \in \Av_\Players(s)}
          \left(
             \bxi_\Players^s(\baction_\Players)
             \cdot \sum_{t \in \Vertices} \delta(s, \baction_\Players, b)(t) \cdot u_{\ell(s)-1}(t)
          \right)
        \geq u_{\ell(s)}(s).
    \]
\end{proposition}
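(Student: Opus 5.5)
The plan is to take $\bxi^s_\Players$ to be a selector attaining the supremum in the definition of $u_{\ell(s)} = \Pre(u_{\ell(s)-1})$ at $s$, and then pass from opponent selectors to pure opponent actions.

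\textbf{Step 1: choice of $\bxi^s_\Players$.} By definition of the value iteration sequence, $u_{\ell(s)}(s) = \Pre(u_{\ell(s)-1})(s)$. This is well defined because $\ell(s) \geq 1$: since $s$ is non-terminal, $u_0(s)=0$, and ranks are taken in $\{1,\dots,n\}$. By \Cref{lemm:sup-attain} applied to the valuation $x = u_{\ell(s)-1}$, there is a collective selector $\bxi^s_\Players$ attaining the supremum. That is,
\[
  \inf_{\xi_\opp \in \Lambda_\opp(s)} \Pre_{\bxi^s_\Players,\xi_\opp}(u_{\ell(s)-1})(s) = u_{\ell(s)}(s).
\]

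\textbf{Step 2: specialise to pure opponent actions.} For any $b \in \Av_\opp(s)$, take the Dirac selector $\xi_\opp = \mathbf{1}_b$. Then $\Pre_{\bxi^s_\Players,\mathbf{1}_b}(u_{\ell(s)-1})(s)$ is exactly the left-hand side of the displayed inequality in the statement. Here the team's joint distribution is the product $\bxi^s_\Players(s)(\baction_\Players)$, consistent with how $\bxi$ weights profiles. Since this value is bounded below by the infimum over all selectors, the inequality follows immediately.

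The statement quantifies over $b \in A_\opp$ rather than $\Av_\opp(s)$. I will read it as ranging over available actions, since $\delta$ is only defined on available ones.

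There is no real obstacle. The only subtle point is attainment of the supremum, which \Cref{lemm:sup-attain} already gives. Note that the specific rank $\ell(s)$ plays no role in this proposition: the same argument works for any index $k\ge1$, giving
\[
  \min_b \Pre_{\bxi^{k}_\Players,\mathbf{1}_b}(u_{k-1})(s) \geq u_k(s).
\]
The rank only matters afterwards, when combining these selectors into a memoryless strategy.
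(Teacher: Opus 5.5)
Your proof is correct and is essentially the paper's argument: use $u_{\ell(s)}(s) = \Pre(u_{\ell(s)-1})(s)$ and take $\bxi^s_\Players$ to be a selector attaining the supremum, which exists by \Cref{lemm:sup-attain}. The only difference is that you explicitly specialise to Dirac opponent selectors $\mathbf{1}_b$ to obtain the per-action inequality, a step the paper leaves implicit.
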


\begin{proof}
    By the construction of the value iteration sequence,
    we have 
    $u_{\ell(s)}(s) = \mathsf{Pre}(u_{\ell(s)-1})(s)$. 

    We recall that $\mathsf{Pre}$ operator is defined by a supremum
    over team memoryless strategies
    and an infimum over opponent memoryless strategies:
    \[
        \mathsf{Pre}(u_{\ell(s)})(s) = 
        \sup_{\bxi_{\Players} \in \Lambda_{\Players}}
        \inf_{\xi_{\opp} \in \Lambda_{\opp}}
        \mathsf{Pre}_{\bxi}(u_{\ell(s)-1})(s)
    \]
    
    From \cref{lemm:sup-attain}, 
    there must exist a memoryless collective strategy $\bxi^s_\Pi$ that secures the value $u_{\ell(s)}(s)$ relative to the previous valuation $u_{\ell(s)-1}$. 
    This choice of action ensures that the team maintains its guaranteed success probability while accounting for the step-count progress required to reach $\Terminals$.
\qed\end{proof}

\begin{lemma} \label{lemm:memoryless-epsilon}
    There exists a memoryless collective strategy $\bstrat$ for the team that guarantees reaching $T$ from state $s$ with probability at least $w(s) - \epsilon$ for every $\epsilon > 0$.
\end{lemma}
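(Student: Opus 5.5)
Fix $\epsilon > 0$ and take $n$ and the ranks $\ell(s)$ as in \Cref{lemm:rank-definition}. The memoryless collective strategy $\bstrat$ plays, at each non-target state $s$, the selector $\bxi^s_\Players$ given by \Cref{lemm:rank-definition}, and plays arbitrarily on target states. Since $\bxi^s_\Players$ is a product of independent per-player distributions, $\bstrat$ is a legitimate profile of individual memoryless strategies. The plan is to show that against every opponent strategy $\strat_\opp$ (which may use memory), reaching $T$ from $s$ has probability at least $u_n(s) \geq w(s) - \epsilon/2$.

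The first step is to fix $\strat_\opp$ and view the random sequence of states $X_0 = s, X_1, \dots$ through the potential $V(X_k) = u_n(X_k)$, with $V = 1$ on $T$. Every state has $u_{\ell(s)}(s) = u_n(s)$. Also $u_{\ell(s)-1} \leq u_n$ pointwise, because the sequence is monotone by \Cref{lem:convergence}. So \Cref{lemm:rank-definition} gives $\mathbb{E}[u_n(X_{k+1}) \mid X_k = s, \text{history}] \geq u_n(s)$, and $u_n(X_k)$ is a bounded submartingale. The same inequality holds for $u_{\ell(s)-1}$. Consequently, whenever the next state $t$ does not strictly increase the potential, the team still has a fallback: either $u_n(t) > u_n(s)$, or the one-step bound is realised through states $t$ with $u_{\ell(s)-1}(t) \geq u_n(s) = u_{\ell(s)}(s)$.

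The second step, and the main obstacle, is to exclude the possibility that the play stays forever in non-target states while keeping $u_n$ high. This is the spurious fixed point issue already noted in the sketch of \Cref{thm:thmETRComplete}. I would follow \cite{CDH12} and use a lexicographic measure $(u_n(s), -\ell(s))$ for the finitely many values involved. Consider any end component in which the opponent can keep the play with positive probability. Inside it, the submartingale is constant almost surely, so all states share one value $c$. The inequality at the minimal-rank state $s$ of the component then forces a positive-probability move to a state $t$ with $u_{\ell(s)-1}(t) \geq c$. Such a $t$ is either outside the component or a state of rank $< \ell(s)$ with value $c$, which contradicts minimality unless $\ell(s) = 1$. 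When $\ell(s) = 1$, we get $u_0(t) \geq c > 0$, which forces $t \in T$. Making this precise requires care with equalities and with $c = 0$, the latter being trivial. I would also check that "positive-probability move" survives an opponent that mixes over actions; this follows because the inequality holds for each opponent action $b$ separately.

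The conclusion then comes from submartingale convergence. Almost surely the play either reaches $T$ or converges to a set where $u_n$ is constant, and the previous step rules out the latter outside $T$ except on value-$0$ states. Hence $\prob(\Diamond T) = \mathbb{E}[\lim_k u_n(X_k)] \geq u_n(s) \geq w(s) - \epsilon$. Because $\bstrat$ does not depend on $\strat_\opp$, this bound is uniform over all opponents. If necessary, I would finally apply this with $\epsilon$ smaller than $w(\initial) - \threshold$ to recover \Cref{thm:memmless-optimal}.
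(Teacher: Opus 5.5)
Your proposal is correct and follows the paper's proof. It uses the same memoryless strategy playing $\bxi^s_\Players$ at each state, and the same bounded submartingale $u_n(X_k)$ obtained from $u_{\ell(s)}(s)=u_n(s)$ and $u_{\ell(s)-1}\le u_n$. It then uses a rank-decrease argument to show that the play cannot remain forever in a non-target level set of positive value, which yields $\prob(\Diamond T)=\mathbb{E}[\lim_k u_n(X_k)]\ge u_n(s)$.

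Your end-component and minimal-rank formulation is a more careful version of the paper's informal step ``conditional on staying in the same valuation level, the rank strictly decreases''. Two points should be stated explicitly. First, the ``outside the component'' case is excluded because the end-component actions at $s$ keep all successors inside. Second, the claim that the set of states visited infinitely often is almost surely an end component is the standard MDP fact valid against history-dependent opponent strategies.
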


\begin{proof}
    Consider the memoryless strategy $\bstrat$ where the team plays $\bxi^s_\Pi$ at each state $s$ as defined in \cref{lemm:rank-definition}. 

    Let $S_k$ be the random variable denoting the state at step $k$ of a play. We define the sequence of valuations along the play as $M_k = u_n(S_k)$.
    From \cref{lemm:rank-definition}, for any opponent action $b \in A_\opp$, the expected valuation in the next state is bounded by the following:
    \[
        \mathbb{E}[u_{\ell(S_k)-1}(S_{k+1}) \mid S_k = s] \geq u_{\ell(s)}(s)
    \]
    By the definition of rank $\ell$, we have $u_{\ell(s)}(s) = u_n(s)$. Furthermore, by \cref{lm:VIconvergencetovalue}, the sequence of value iterations is non-decreasing, so $u_{\ell(S_k)-1}(t) \leq u_n(t)$ for all states $t$. Substituting these into the inequality yields:
    \[
        \mathbb{E}[u_n(S_{k+1}) \mid S_k = s] \geq u_n(s)
    \]
    Because $u_n$ is bounded by \cref{lm:VIconvergencetovalue}, the sequence $M_k = u_n(S_k)$ is a bounded submartingale. By the Martingale Convergence Theorem, $M_k$ converges almost surely to a limit $M_\infty$. 

    Because the state space is finite, any play almost surely gets trapped in a set of states that share the same limit valuation $v = u_n(S_k)$. 
    Suppose that the play remains in this level set where $u_n(t) = u_n(s) = v$. From our earlier inequality, we know $\mathbb{E}[u_{\ell(s)-1}(S_{k+1}) \mid S_k = s] \geq v$. Since $u_{\ell(s)-1}(t) \leq u_n(t) = v$ for all $t$ in this level set, the only way the expectation can be at least $v$ is if $u_{\ell(s)-1}(t) = v$ for all next states $t$ that have positive probability. 
    
    By the definition of rank, if $u_{\ell(s)-1}(t) = u_n(t)$, then the rank of $t$ must be smaller than the rank of $s$. This means that, conditional on staying in the same valuation level, the rank $\ell$ strictly decreases at every step. 

    Since the rank is a natural number, it cannot decrease forever. Therefore, the play cannot remain indefinitely in a non-target level set. It must almost surely either reach a strictly higher valuation level or reach the target set $\Terminals$. 
    
    Thus, the limit $M_\infty$ is either $1$ (the play reaches $\Terminals$) or $0$. The expected value of $M_\infty$ is the probability of reaching $\Terminals$. 
    By the definition of a submartingale, the expected value of the limit is at least the initial value:
    \[
        \prob_{\bstrat}(\Diamond \Terminals)(s) = \mathbb{E}[M_\infty] \geq M_0 = u_n(s)
    \]
    By \cref{lm:VIconvergencetovalue}, there exists $n$ such that $u_n(s) \geq w(s) - \epsilon$. Therefore, the memoryless collective strategy $\bstrat$ satisfies $\prob_{\bstrat}(\Diamond \Terminals)(s) \geq w(s) - \epsilon$.
\qed\end{proof}

By \cref{lemm:memoryless-epsilon} it is true that
for every $\epsilon$, there exists a memoryless strategy 
for team players, which guarantees value $w(s) -\epsilon$. 
This shows the correctness of 
\cref{thm:memmless-optimal}.
\qed\end{proof}

\subsection{Appendix for \cref{thm:memlessOpp}}\label{app:memlessOpp}
\memlessOpp*
\begin{proof}
    We show this on a three-player example.
    
    There are three players: $\opp, P_1, P_2$.
    For the player $P_1$, 
    we define the set of actions $A_{P_1} =\{a, b, wait\}$. 
    For $\opp$ and $P_2$ the action sets are $A_\opp = A_{P_2} = \{a,b\}$.
    There are two stages in the game:
    
    \textbf{(1) Waiting stage:} This stage is active while $P_1$ chooses action \emph{wait}. 
    Based on the action of $\opp$, the game moves to $S_a$ or $S_b$.
    On $S_a$ and $S_b$ for every set of actions, the game returns to $S$.
    
    \textbf{(2) Payoff stage:} This stage occurs if $P_1$
    does not choose an action \emph{wait}. 
    The game then terminates. If both $P_1, P_2$ choose the same action
    as $\opp$ the game finishes at $\top$ (the target state), otherwise at $\perp$.

    \begin{center}
    \begin{tikzpicture}[>=latex, node distance=1.3cm, auto,
      state/.style={draw, circle, minimum width=1.2cm, font=\small},
      terminal/.style={draw, rectangle, minimum width=1.2cm, minimum height=0.8cm, font=\small},
      lab/.style={font=\scriptsize}]
    
      \node[state] (s) {$S$};
      \node[state, left=of s] (sa) {$S_a$};
      \node[state, right=of s] (sb) {$S_b$};
    
      \node[terminal, below=of s, xshift=-1.5cm] (t) {$\top$};
      \node[terminal, below=of s, xshift=1.5cm] (trap) {$\perp$};
    
      \draw[->] (s) to[bend right=20] node[lab, swap] {$(\textit{wait},*), a$} (sa);
      \draw[->] (s) to[bend left=20] node[lab] {$(\textit{wait},*), b$} (sb);
      
      \draw[->] (sa) to[bend right=20] node[lab, above] {$(*, *), *$} (s);
      \draw[->] (sb) to[bend left=20] node[lab, above] {$(*, *), *$} (s);
    
      \draw[->] (s) -- node[lab, left, pos=0.4] {$(a,a), a \text{ or } (b,b), b$} (t);
    
      \draw[->] (s) -- node[lab, right, pos=0.4] {$\text{otherwise}$} (trap);
    
    \end{tikzpicture}
    \end{center}

    During the Waiting Stage,
    the game remains in the loop
    $S \rightarrow \{S_a,S_b\} \rightarrow S$. 
    At state $S$, the next state depends only on the opponent’s action.  
    From $S_a$ and $S_b$, the game always returns to $S$. 
    During the Payoff Stage,
    the game terminates and the reward is determined.
    
    Assume the opponent is forced to play memoryless strategies and chooses action $a$ with probability $p_{\opp,a}$,
    and action $b$ with probability $1-p_{\opp,a}$ at state $S$. For computing the max-min value, the opponent fixes $p_{\opp,a}$ after knowing the strategies of the team players.

    \textbf{Case 1: The team players all play memoryless strategies.}

    If $P_1$ chooses action \emph{wait} with probability $1$, 
    the game never leaves the waiting loop, 
    so the probability of reaching the target will be $0$.
    
    If $P_1$ chooses action $\textit{wait}$ with probability
    $\alpha<1$, then at visit of the play to the state $S$, the
    probability of entering the payoff stage is  
    $1-\alpha$.
    As the following limit value is equal to 1:
    \[
    \prob(\text{ever enter payoff stage})
    \;=\;
    \lim_{k \to \infty} \bigl(1-\alpha^k\bigr)
    \;=\; 1,
    \]
    so the game reaches the payoff stage with   probability $1$, \emph{regardless of the value of $\alpha<1$}.
    
    At the payoff stage, suppose that each $P_1$ and $P_2$
    choose action $a$ with probability $p_{P_1,a},\ p_{P_2,a}$ respectively.
    
    As $P_1$ chooses action $\textit{wait}$ with probability $\alpha$, we normalise $p_{P_1,a}$ to have the actual probability of playing action $a$ at the payoff stage. Thus, we divide it by $1-\alpha$ (the probability of playing action $a$ or $b$ at the payoff stage). 
    
    As a counter strategy, 
    the opponent will choose $p_{\opp,a}$ 
    to minimise the team's chance of winning. 
    If the team is more likely to match $a$ than $b$, 
    $\opp$ will always play $b$ (and vice versa). 
    To maximise their worst-case guarantee, the team plays a strategy such that the probability of matching $a$ 
    is equal to the probability of matching $b$. 
    This requires 
    $p_{P_1,a} \cdot p_{P_2,a} = (1-p_{P_1,a}) \cdot (1-p_{P_2,a})$. 
    Thus, the best guarantee the team can have under this condition is found when they play uniformly ($p_{P_1,a}=p_{P_2,a}=1/2$). In this case, the probability of reaching $\top$ is    
    $$p_{\opp,a}\left(\frac{1}{4}\right) + (1-p_{\opp,a})\left(\frac{1}{4}\right) = \frac{1}{4}.$$

    The team players can therefore guarantee reaching $\top$ with probability $1/4$, no matter what does $\opp$ do. 
    If the team deviates from playing uniformly,
    $\opp$ can switch $p_{\opp,a}$ to make the probability even lower.
    Thus, under memoryless strategies, the team cannot guarantee a value strictly greater than $1/4$.

     \textbf{Case 2: The team players play arbitrary strategies}

     If team players use memory, 
     they can coordinate their actions based on the history of visited states. 
     Consider the following collective strategy $\bstrat_{\Players}$.
     \begin{itemize}
        \item In the first round, $P_1$ plays \emph{wait}. 
        This forces the game to transition to $S_a$ or $S_b$ based on the opponent's action, and then immediately return to $S$.
        \item In the second round, 
        the team players observe the history.
        If the previous state was $S_a$, 
        both $P_1$ and $P_2$ play action $a$, otherwise action $b$. 
    \end{itemize}
        Since team players always choose the same action in the second round,
        the probability of falling into the trap due to choosing different actions is $0$.
        We now calculate the probability of reaching $\Terminals$. 
        Since $\opp$ is memoryless, she plays action $a$ with the same probability
        $p_{\opp,a}$ independently in both the first and second round.
        
        The probability of reaching $\Terminals$ is the sum of the
        probabilities of two disjoint events:
        The history was $S_a$ \emph{and} the opponent plays $a$ again.
        Or the history was $S_b$ \emph{and} the opponent plays $b$ again.
        $$\prob(\Diamond \Terminals) = p_{\opp,a} \cdot p_{\opp,a} + (1-p_{\opp,a}) \cdot (1-p_{\opp,a}) = p_{\opp,a}^2 + (1-p_{\opp,a})^2$$
        To find the guaranteed value, we minimise this function with respect to $p_{\opp,a}$. The function $f(x) = x^2 + (1-x)^2$ achieves its global minimum at $x = 1/2$.$$\min_{p_{\opp,a}} \left( p_{\opp,a}^2 + (1-p_{\opp,a})^2 \right) = \left(\frac{1}{2}\right)^2 + \left(\frac{1}{2}\right)^2 = \frac{1}{4} + \frac{1}{4} = \frac{1}{2}$$
        Thus, the team guarantees reaching probability of at least~$1/2$.
        
        Since $1/2 > 1/4$, there exists a collective strategy with memory that strictly outperforms the best possible memoryless strategy. Which completes the proof.
        
\qed\end{proof}

\subsection{Proof of \Cref{thm:thmETRComplete}} \label{app:ETRComplete}

\thmETRComplete*
We first define an ETR sentence $\Psi$, then prove the correctness. This automatically shows the membership of the problem in $\exists\Rb$.

\subsubsection{Definition of $\Psi$}
Here, we define the formula $\Psi$.

\textbf{Variables.}
We introduce the following existentially quantified variables:
\begin{itemize}
    \item \textbf{Strategy variables:} For each team player $p_i \in \Players$, state $s \in \Vertices$, and action $a \in \Av_i(s)$, let $x_{s,i,a} \in [0,1]$ denote the probability of playing action $a$ by player $p_i$ on state $s$.
    \item \textbf{Valuation variables:} For each state $s \in \Vertices$, let $v_s \in [0,1]$ denote the value of the state $s$ (Which shows the expected payoff of the game, starting from state $s$).
    \item \textbf{Discount variable:} Let $\lambda \in (0,1)$ represent the discount factor.
\end{itemize}

\textbf{The formula $\Psi$.}
The formula is a conjunction of the following constraints:

\begin{enumerate}
    \item \textbf{Strategy constraints:}
    The variables $x$ must define a valid probability distribution on every state:
    \begin{equation}
        \Phi_{\mathsf{strategy}} := \bigwedge_{s \in S} \bigwedge_{i \in \Pi} \sum_{a \in \Av_i(s)} x_{s,i,a} = 1 \land \bigwedge_{a} x_{s,i,a} \ge 0
    \end{equation}

    \item \textbf{Discounted valuation constraints:}
    The values $v(s)$ must effectively be a sub-solution to the $\lambda$-discounted Bellman equations.
    \begin{itemize}
        \item \textit{Target states:} If $s \in T$, the value is fixed to 1:
        \begin{equation}
            \Phi_{\mathsf{target}} := \bigwedge_{s \in \Terminals} 
            v(s) = 1
        \end{equation}
        \item \textit{Non-target states:} If $s \notin T$, the value $v(s)$ must be supportable against \textit{any} action $b$ chosen by the opponent:
        \begin{equation}
            \Phi_{\mathsf{val}} := \bigwedge_{s \in S \setminus T} \bigwedge_{b \in \Av_{\opp}(s)} v(s) \le \lambda \cdot \sum_{\baction \in \Av_{\Pi}(s)} \left( \prod_{i \in \Pi} x_{s,i,a_i} \right) \cdot \sum_{s' \in S} \delta(s, \baction, b)(s') \cdot v(s')
        \end{equation}
    \end{itemize}

    \item \textbf{Threshold constraint:}
    We require the value at the initial state $\initial$ to strictly exceed $t$ and $\lambda$ to be in $(0,\ 1)$:
    \begin{equation}
        \Phi_{\mathsf{goal}} := (0 < \lambda < 1) \land (v(\initial) > t)
    \end{equation}
\end{enumerate}

The complete formula is defined as:
\[ \Psi := \exists \bar{x}, \bar{v}, \lambda : \Phi_{\mathsf{strategy}} \land \Phi_{\mathsf{target}} \land \Phi_{\mathsf{val}} \land \Phi_{\mathsf{goal}} \]

\subsubsection{Correctness of formula $\Psi$}

    The formula $\Psi$ involves polynomials of degree at most 
    $|\Pi|+1$ and the number of variables is polynomial in the size of the game $\mathcal{G}$.
    We establish the equivalence stated in the theorem via the following two lemmas. 
    Thus, the problem would be in the complexity class $\etr$.

\begin{lemma}{\label{lemm:ETR-Soundness}}
    (Soundness) If $\Psi$ is satisfiable,
    then there exists a collective strategy $\bstrat_\Players$ that guarantees
    reaching $\Terminals$ with a probability strictly greater than 
    $\threshold$.
\end{lemma}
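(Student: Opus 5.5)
Given a satisfying assignment $(\bar{x}, \bar{v}, \lambda)$ of $\Psi$, I take the memoryless collective strategy $\bstrat_\Players$ in which player $p_i$ plays action $a$ at state $s$ with probability $x_{s,i,a}$. By $\Phi_{\mathsf{strategy}}$ this is a well-defined profile of independent local distributions. The goal is to show that against every opponent strategy $\strat_\opp$ (possibly with memory and randomised), the probability of reaching $\Terminals$ from $\initial$ is at least $v(\initial)$. Together with $\Phi_{\mathsf{goal}}$, this gives probability strictly greater than $\threshold$.

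The main step is an induction on the horizon. I prove that for every $n$, every state $s$ and every opponent strategy $\strat_\opp$,
\[ \prob_{\bstrat_\Players,\strat_\opp}\bigl(\Diamond^{n} \Terminals\bigr)(s) \;\ge\; v(s) - \lambda^{n}, \]
or, more sharply, that $\lambda$-discounted reachability dominates $v$ up to an error of $\lambda^{n}$. The cleanest formulation is the following. Let $D_n(s)$ be the expected value of $\lambda^{k}$ when $\Terminals$ is first reached at step $k\le n$, and $0$ if it is not reached within $n$ steps. I show $D_n(s) \ge v(s) - \lambda^{n}$ by induction on $n$.

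For $n=0$ the claim is trivial. Indeed, $v(s)\le 1$ because $v$ ranges over $[0,1]$; if that bound is not written explicitly into $\Psi$, I add it, or I derive it from $\Phi_{\mathsf{val}}$ by taking the state maximising $v$. In the inductive step, take $s\notin\Terminals$ and condition on the opponent's (possibly randomised, history-dependent) first-step distribution. The team's action is drawn independently from the product distribution $\prod_i x_{s,i,\cdot}$. For each opponent action $b$, $\Phi_{\mathsf{val}}$ gives
\[ v(s) \le \lambda\sum_{\baction}\Bigl(\prod_i x_{s,i,a_i}\Bigr)\sum_{s'}\delta(s,\baction,b)(s')\,v(s'). \]
Averaging over $b$ preserves this inequality by linearity. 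The continuation strategy of the opponent after the first step is again some strategy, because the team is memoryless. Applying the induction hypothesis to each successor $s'$ yields
\[ D_{n+1}(s) \ge \lambda\,\mathbb{E}\bigl[v(s') - \lambda^{n}\bigr] \ge v(s) - \lambda^{n+1}. \]
Successors in $\Terminals$ contribute $\lambda\cdot 1 = \lambda\, v(s')$ by $\Phi_{\mathsf{target}}$ and absorption.

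Letting $n\to\infty$, and using $\lambda<1$, the undiscounted reachability probability satisfies $\prob(\Diamond\Terminals)(\initial) \ge \lim_n D_n(\initial) \ge v(\initial) > \threshold$, since $\lambda^{k}\le 1$. The point I expect to need the most care is the inductive step against a history-dependent opponent. The claim must be quantified over all opponent strategies and all starting states, so that the opponent's continuation after one step, which depends on the observed history, is covered by the hypothesis. It is exactly here that memorylessness of the team strategy is used. The discount factor $\lambda<1$ is what removes the spurious fixed points discussed in the proof sketch of \Cref{thm:thmETRComplete}: without it, the error term $\lambda^{n}$ would not vanish.
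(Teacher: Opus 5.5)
Your proof is correct and follows the paper's route: you read off a memoryless collective strategy from $\bar{x}$, and you use that the discounted payoff $\lambda^k$ is pointwise dominated by the reachability payoff. Your horizon induction $D_n(s) \ge v(s) - \lambda^n$, including the derivation of $v \le 1$ from $\Phi_{\mathsf{val}}$, supplies the step the paper only asserts: that a mere sub-solution $v$ of the discounted Bellman inequalities lower-bounds the discounted payoff of this strategy against every history-dependent opponent.
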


\begin{proof}
    Let $(\bar{x}, \bar{v}, \lambda)$ be a satisfying assignment for $\Psi$.
    First, the assignment to variables $\bar{x}$
    defines a valid memoryless collective strategy $\bstrat_\Players$ for the team.
    The constraints in $\Psi$ check that for every state, 
    the probabilities assigned to actions are non-negative and sum to 1.
    
    The variable $v_s$ represents the value of the 
    $\lambda$-discounted game starting at state $s$. 
    Since the assignment satisfies the constraint 
    $v_{\initial} > \threshold$, 
    the expected discounted reward from $\initial$ is greater than $\threshold$.

    We now argue that the collective strategy for the team $\bstrat_\Players$ guarantees
    reaching $\Terminals$ with a probability strictly greater than 
    $\threshold$ in the original game. Consider the payoff for any single play of the game:

    \begin{itemize}
        \item If the game reaches $\Terminals$ after $k$ steps, 
        the reward in the $\lambda$-discounted game is $\lambda^k$. 
        The reward in the original game is $1$.
        \item If the game never reaches $\Terminals$, the reward is 0 in both games.
    \end{itemize}

    In all cases, the payoff in the original game is greater than or equal to the payoff in the $\lambda$-discounted game. Thus, the expected value (probability of winning) in the reachability game is at least as large as the expected reward in the discounted game.

    Since the discounted value guaranteed by the collective strategy $\bstrat_\Players$
    is greater than $\threshold$, $\bstrat_\Players$ guarantees
    reaching $\Terminals$ with a probability strictly greater than 
    $\threshold$.   
\qed\end{proof}

\begin{lemma}{\label{lemm:ETR-Completeness}}
    (Completeness). If there exists a collective strategy guaranteeing a reachability probability greater than $t$, then $\Psi$ is satisfiable.
\end{lemma}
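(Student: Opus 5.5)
Assume a collective strategy guarantees reaching $\Terminals$ with probability strictly greater than $\threshold$. The plan is to produce a memoryless strategy, a discount factor $\lambda\in(0,1)$, and values $v$ satisfying $\Phi_{\mathsf{strategy}}\wedge\Phi_{\mathsf{target}}\wedge\Phi_{\mathsf{val}}\wedge\Phi_{\mathsf{goal}}$.

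\textbf{Step 1 (finite-horizon value).} By \Cref{lm:VIconvergencetovalue}, the max-min value $w(\initial)$ is at least the guaranteed probability, so $w(\initial) > \threshold$. Set $\eta = w(\initial)-\threshold > 0$. Choose $n$ with $u_n(\initial) > \threshold + \eta/2$, where $(u_k)_k$ is the value-iteration sequence.

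\textbf{Step 2 (discounted value iteration).} Rather than using the strategy from \Cref{thm:memmless-optimal} directly, I would run the discounted iteration $u^\lambda_0 = u_0$ and $u^\lambda_{k+1}(s) = \lambda\,\Pre(u^\lambda_k)(s)$ for $s\notin \Terminals$, keeping the value $1$ on $\Terminals$. By induction and monotonicity of $\Pre$, $\lambda^k u_k \le u^\lambda_k \le u_k$; the lower bound uses that $\Pre$ is positively homogeneous and that targets keep value $1 \ge \lambda^k$. The iteration is monotone non-decreasing, since $u^\lambda_0 \le u^\lambda_1$ and $\Pre$ is monotone. The map $v \mapsto \lambda\Pre(v)$ off $\Terminals$ is a $\lambda$-contraction in sup norm, so $u^\lambda_k$ converges to its unique fixed point $v^\lambda$ with $v^\lambda \ge u^\lambda_n \ge \lambda^n u_n$. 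Choosing $\lambda$ close to $1$ so that $\lambda^n u_n(\initial) > \threshold$ gives $v^\lambda(\initial) > \threshold$.

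\textbf{Step 3 (memoryless witness).} For each $s\notin\Terminals$, \Cref{lemm:sup-attain} applied to the valuation $v^\lambda$ yields a selector $\bxi^s_\Players$ attaining $\Pre(v^\lambda)(s)$. Since the opponent's infimum over mixed selectors is attained at pure actions (the objective is linear in $\xi_\opp$), for every $b$ we get $v^\lambda(s) = \lambda\Pre(v^\lambda)(s) \le \lambda \sum_{\baction} \bxi^s_\Players(\baction)\sum_{s'}\delta(s,\baction,b)(s')\,v^\lambda(s')$. Setting $x_{s,i,a} = \xi^s_i(a)$ and $v_s = v^\lambda(s)$ satisfies every conjunct; note that $\bxi^s_\Players$ is a product of individual distributions, which matches the product form in $\Phi_{\mathsf{val}}$.

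\textbf{Main obstacle.} The delicate part is Step 2: justifying that the discounted operator is a contraction while remaining compatible with the non-strict sub-solution form, and obtaining the bound $u^\lambda_k \ge \lambda^k u_k$ through homogeneity of the sup-inf. If the contraction argument is awkward, the fallback is to avoid fixed points entirely. Take $v = u^\lambda_N$ for large $N$; monotonicity gives $u^\lambda_N \le u^\lambda_{N+1} = \lambda\Pre(u^\lambda_N)$, so it is already a sub-solution. This suffices, since $\Phi_{\mathsf{val}}$ only demands the inequality.
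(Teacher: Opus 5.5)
Your proof is correct, and it takes a genuinely different route from the paper.

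The paper's argument runs in three steps:
\begin{itemize}
\item it invokes \Cref{thm:memmless-optimal} to obtain a memoryless collective strategy whose guarantee exceeds $\threshold$;
\item it fixes that strategy, so the opponent faces a finite MDP, and appeals to the classical fact that the $\lambda$-discounted values of a finite MDP converge to its reachability values as $\lambda \to 1^-$;
\item its witness valuation is the exact discounted value of that MDP for some $\lambda_0$ close to $1$, which satisfies the Bellman equation with a minimum over $b$ and hence $\Phi_{\mathsf{val}}$.
\end{itemize}

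You instead discount the game's own value iteration. You compare it with the undiscounted iteration via $\lambda^k u_k \le u^\lambda_k$, using monotonicity and positive homogeneity of $\Pre$. You then read the memoryless strategy off \Cref{lemm:sup-attain} at the discounted valuation, so the strategy and the valuation are produced together. Your fallback $v = u^\lambda_n$ is the cleanest version: it needs no contraction or fixed point, only $u^\lambda_n \le u^\lambda_{n+1}$. It also gives an explicit admissible $\lambda$, namely any $\lambda$ with $\lambda^n u_n(\initial) > \threshold$.

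What your route buys is self-containment. It uses neither \Cref{thm:memmless-optimal} nor the MDP limit theorem. Together with \Cref{lemm:ETR-Soundness}, it even re-derives that memoryless strategies suffice, bypassing the rank/submartingale argument. The paper's route is more modular: any memoryless witness can be plugged in, and $v$ has a direct meaning as a discounted value. The price is that it leans on those two results.

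One step is worth making explicit; the paper also leaves it tacit. The hypothesis only says that each opponent strategy yields probability $> \threshold$. To conclude $w(\initial) > \threshold$, you need the infimum over opponent strategies to be attained. This holds because the opponent's behavioural strategies over finite action sets form a compact space. The reachability probability is lower semicontinuous there, being a supremum of the continuous bounded-horizon probabilities.

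A minor point: your remark that the opponent's infimum is attained at pure actions is unnecessary. The inequality for each $b$ holds simply because a pure action is a particular selector.
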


\begin{proof}
    Assume the team can guarantee a reachability probability 
    $V^* > \threshold$. 
    By Theorem \ref{thm:memmless-optimal},
    we know there exists a memoryless collective strategy $\bstrat_\Players^*$ that achieves value $V' > V^*-\epsilon > \threshold$
    for every $\epsilon$ with $V^*-t>\epsilon > 0$. 
    Once this strategy is fixed, 
    the game can be seen as a Markov Decision Process (\emph{MDP}) 
    where the opponent minimises against the fixed team memoryless strategy.
    We briefly recall here that an MDP with reachability objective is the same as a concurrent game where the size of the team players is $0$. 
    
    It is a standard result for finite MDPs that the optimal
    reachability probability is the limit of the $\lambda$-discounted values as $\lambda \to 1^-$ \cite{B62}.
    In other words,
    the reward of reaching $\Terminals$ after $k$ steps
    is $\lambda^k$.
    As $\lambda$ goes to $1^-$, the payoff will reach 1.
    It was shown in \cite{B62}, 
    that (this intuition is actually correct) as $\lambda \to 1^-$, the expected reward is equal to the actual reachability probability.
    
    Let $V^\lambda_{\sigma^*}$ be the value of the game under collective memoryless strategy $\bstrat_\Players^*$ with discount factor $\lambda$. We have $V^*=\lim_{\lambda \to 1^-} V^\lambda_{\sigma^*} \ge V'$.
    
    Since $V'$ is strictly greater than $\threshold$,
    we have $V^*>\threshold$. 
    By the definition of the limit, 
    there must exist a discount factor $\lambda_0 \in (0,1)$ 
    sufficiently close to $1$ such that the discounted value 
    $V^{\lambda_0}_{\sigma^*}$ is also strictly greater than $t$. 
    The tuple $(\sigma^*, V^{\lambda_0}_{\sigma^*}, \lambda_0)$ 
    satisfies all constraints in $\Psi$: $\bstrat_\Players^*$ fits $\Phi_{\mathsf{strategy}}$, $V^{\lambda_0}_{\sigma^*}$ is the unique solution to the contraction mapping for discount $\lambda_0$ fitting $\Phi_{\mathsf{val}}$, and the value exceeds $t$ fitting $\Phi_{\mathsf{goal}}$. Thus, the formula $\Psi$ is satisfiable.
\qed\end{proof}

\subsection{Proof of \Cref{thm:NPhardnessThreshold}} \label{app:NPhardnessThreshold}

\thmNPhardnessThreshold*

We reduce the \textbf{$k$-clique problem} to the threshold problem.
In the $k$-clique problem, we are given a graph $G$ and an integer $k$. We have to determine if there is a set of vertices of size $k$ where every pair is connected by an edge.

We construct a game where the team players can win with probability greater than $\frac{3}{4}$ if and only if such a clique exists.

\subsubsection{Game definition from the clique problem}

Given the graph $G=(V,E)$ and the integer $k$, we construct a game $\mathcal{G}_{G,k}$ as follows.

\paragraph{Players.}
There are two team players, $P_1$ and $P_2$, and one opponent ($\opp$).

\paragraph{States.}
The game has three states: the starting state $s$, the target state $\top$, and the trap state $\bot$.

\paragraph{Actions.}
\begin{itemize}
    \item For the players $P_1$ and $P_2$ the set of available actions at the start state is a pair $(i, v)$, where $i$ is an index (an integer between $1$ and $k$) and $v \in V$ is a vertex:
    $$
        A_{P_1} = A_{P_2} = 
        \{(i, v)\ |\ i \in \{1, 2, ..., k\}, v\in V\}
    $$
    \item $\opp$ chooses a pair of indices $(i^*, j^*)$ between $1$ and $k$.
    $$
        A_\opp = \{(i^*, j^*)\ |\ i^*,j^* \in \{1, 2, ..., k\}\}
    $$
\end{itemize}

\paragraph{Transitions.}
The game starts at state $s$. The transitions are defined as follows:

\begin{enumerate}
    \item \textbf{The Index check (loop):} 
    First, we check if the team players choose the same indices as the opponent.
    Formally, if $P_1$ does not choose index $i^*$ or $P_2$ does not choose index $j^*$, the game stays at state $s$.
    
    \item \textbf{The validation (win/lose):} 
    If the indices match ($P_1$ chooses $i^*$ and $P_2$ chooses $j^*$), we check the vertices they chose ($u$ and $v$).
    \begin{itemize}
        \item \textbf{Consistency:} If the opponent chooses the same indices ($i^* = j^*$), the team passes if $u=v$. 
        If $u \neq v$, the game moves to $\bot$.
        \item \textbf{Edge:} If the opponent chooses different indices ($i^* \neq j^*$), the game moves to $\top$ if there is an edge between them ($(u, v) \in E$). 
        If there is no edge, the game moves to $\bot$.
    \end{itemize}
    
    If the team passes the check, the game moves to $\top$.
\end{enumerate}

Once the game reaches $\top$ or $\bot$, it stays there forever. The team wins if the game reaches $\top$.

We define the threshold $t = \frac{3}{4}$. We show that the team can win with probability greater than $t$ if and only if the graph has a clique of size $k$.

\begin{lemma}
    If the graph $G$ contains a clique of size $k$, the team has a memoryless strategy to reach $\top$ with probability $1$.
\end{lemma}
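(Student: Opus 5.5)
Let $C=\{c_1,\dots,c_k\}$ be a $k$-clique of $G$, with the vertices indexed in some fixed order. I propose the memoryless strategy in which, at state $s$, both $P_1$ and $P_2$ pick an index $i$ uniformly at random from $\{1,\dots,k\}$ and play the action $(i,c_i)$. Each player does this with private randomness: $P_1$ chooses index $i$, $P_2$ independently chooses index $j$, and each then deterministically plays the vertex $c_i$ (resp.\ $c_j$). On $\top$ and $\bot$ the actions are irrelevant, since those states are absorbing.

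The first step is to show that $\bot$ is never reached. Fix any opponent action $(i^*,j^*)$ in the current round. Leaving $s$ requires $P_1$ to play index $i^*$ and $P_2$ to play index $j^*$, so the team's actions are $(i^*,c_{i^*})$ and $(j^*,c_{j^*})$. There are two cases.
\begin{itemize}
\item If $i^*=j^*$, both players named the same vertex $c_{i^*}$, so the consistency check passes.
\item If $i^*\neq j^*$, the vertices $c_{i^*}$ and $c_{j^*}$ are distinct elements of the clique, so $(c_{i^*},c_{j^*})\in E$ and the edge check passes.
\end{itemize}
Hence under this strategy every transition out of $s$ goes to $\top$, for every opponent action. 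As a result, $\bot$ is reached with probability $0$, whatever the opponent does (including randomised or history-dependent play).

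The second step is to show that $s$ is left with positive probability in every round. Given any opponent action $(i^*,j^*)$, possibly drawn from a history-dependent distribution, the team's indices match it with probability exactly $\frac{1}{k}\cdot\frac{1}{k}=\frac{1}{k^2}$. This holds because the team's choices are independent of each other and of the history. Conditioning on the opponent's choice and averaging, each round at $s$ moves to $\top$ with probability at least $\frac{1}{k^2}$ and otherwise stays at $s$. So the probability of still being at $s$ after $n$ rounds is at most $(1-\frac{1}{k^2})^n$, which tends to $0$. It follows that $\top$ is reached with probability $1$ against every opponent strategy.

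There is no serious obstacle in this argument. The only point needing care is that the opponent may adapt to the history, but the per-round bound $\frac{1}{k^2}$ holds uniformly, conditionally on any history that stays in $s$. An easy formalisation of this is induction on the number of rounds.
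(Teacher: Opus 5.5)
Your proof is correct and follows the paper's argument essentially verbatim. You use the same uniform-index strategy playing $(i,c_i)$, the same case split $i^*=j^*$ versus $i^*\neq j^*$ to show every exit from $s$ goes to $\top$, and the same per-round matching probability $\frac{1}{k^2}$; your bound $(1-\frac{1}{k^2})^n$, uniform over history-dependent opponent play, just makes precise the paper's remark that the game eventually leaves the loop.
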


\begin{proof}
    Assume that the graph $G$ has a clique of size $k$. Let $\{c_1, c_2, \dots, c_k\}$ be the vertices of this clique.

    We define a memoryless collective strategy $\bstrat = (\sigma_{P_1}, \sigma_{P_2})$ for the team players in state $s$ as follows:

    \begin{itemize}
        \item Both players $P_1$ and $P_2$ choose an index $z$ from $1$ to $k$ uniformly at random.
        \item If a player chooses index $z$, they play the action $(z, c_z)$.
    \end{itemize}

    Consider that at any step, some action $(i^*, j^*)$ is chosen by the opponent. 
    In each step, the probability that $P_1$ plays $(i^*, c_{i^*})$ is $1/k$, and the probability that $P_2$ plays $(j^*, c_{j^*})$ is $1/k$. 
    Since the players randomise independently, the probability that they simultaneously match the opponent's indices is $1/k^2$. 
    This probability is strictly positive, so the game will eventually leave the loop.

    When the indices match, we check the transition:
    \begin{itemize}
        \item If $i^* = j^*$, both players play the vertex $c_{i^*}$. Thus, $u = v$, and the game moves to $\top$.
        \item If $i^* \neq j^*$, $P_1$ plays $c_{i^*}$ and $P_2$ plays $c_{j^*}$. Since $c_{i^*}$ and $c_{j^*}$ are distinct vertices in a clique, the edge $(c_{i^*}, c_{j^*})$ exists. Thus, the game moves to $\top$.
    \end{itemize}

    In both cases, the game moves to $\top$. Since the probability of moving to $\bot$ is 0, this strategy guarantees reaching $\top$ with probability 1.
\qed\end{proof}

\begin{lemma}
    If the team has a strategy to reach $\top$ with probability greater than $3/4$, then the graph $G$ contains a clique of size $k$.
\end{lemma}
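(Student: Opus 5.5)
We argue by contraposition: assuming $G$ has no clique of size $k$, we show that against every collective strategy the opponent can keep the winning probability at most $\frac{3}{4}$. By \Cref{thm:memmless-optimal}, if some collective strategy guarantees more than $\frac34$, then some memoryless one does. Since the game has only one non-absorbing state $s$, it therefore suffices to treat memoryless strategies $\sigma_{P_1}=p$ and $\sigma_{P_2}=q$, which are distributions over pairs $(i,u)$.

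Against such a team strategy, the opponent will play a constant action $(i^*,j^*)$ forever. In each round the play then goes to $\top$ with some probability $W$, to $\bot$ with probability $L$, and otherwise loops. The eventual winning probability is therefore $W/(W+L)$, or $0$ if $W+L=0$. Write $\alpha_i=p(\{i\}\times V)$ and $\beta_j=q(\{j\}\times V)$. If some $\alpha_i=0$ or $\beta_j=0$, the opponent fixes that index, the play loops forever, and the team wins with probability $0$. Otherwise we introduce the conditional vertex distributions $p_i(u)=p(i,u)/\alpha_i$ and $q_j(v)=q(j,v)/\beta_j$. The key observation is that $W+L=\alpha_{i^*}\beta_{j^*}$, so that
\[ W/(W+L)=\Pr_{u\sim p_{i^*},\,v\sim q_{j^*}}[\text{check passes}]. \]
The problem thus reduces to a one-shot combinatorial statement: if every pair $(i,j)$ has pass probability greater than $\frac34$, then $G$ has a $k$-clique.

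We first use the diagonal pairs $i=j$. There we have $\sum_u p_i(u)q_i(u)>\frac34$, and this sum is at most $\max_u p_i(u)$. Choose $c_i$ with $p_i(c_i)>\frac34$. The terms with $u\neq c_i$ contribute at most $\sum_{u\ne c_i}p_i(u)<\frac14$, so $p_i(c_i)q_i(c_i)>\frac12$, and hence $q_i(c_i)>\frac12$. Next we use the pairs $i\neq j$. If $(c_i,c_j)\notin E$, which includes the case $c_i=c_j$ since the graph has no self-loops, then the failure probability is at least $p_i(c_i)q_j(c_j)>\frac34\cdot\frac12=\frac38>\frac14$. This contradicts a pass probability greater than $\frac34$. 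Hence the vertices $c_1,\dots,c_k$ are pairwise distinct and pairwise adjacent, so they form a $k$-clique.

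The main obstacle is the diagonal step: we must extract a single common vertex $c_i$ that is heavy for $p_i$ and reasonably heavy for $q_i$, with constants that still leave the off-diagonal failure mass above $\frac14$. The choice of the maximiser of $p_i$ with the bounds $\frac34$ and $\frac12$ handles this. The remaining care lies in the reduction to the ratio $W/(W+L)$, in particular the zero-mass index case, and in justifying that a constant opponent action is a valid response against a memoryless team strategy.
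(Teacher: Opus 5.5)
Your proof is correct and follows essentially the same route as the paper. Both reduce to a memoryless collective strategy via \Cref{thm:memmless-optimal}, rule out zero-mass indices, use the diagonal pairs $(i,i)$ to extract heavy vertices $c_i$, and use the off-diagonal pairs $(i,j)$ to force $(c_i,c_j)\in E$. The only difference is a streamlining: you take $c_i$ as the maximiser of $p_i$, getting $p_i(c_i)>\frac34$ and $q_i(c_i)>\frac12$ in one step, and you make the $W/(W+L)$ conditioning explicit. The paper instead first gives each player a dominant vertex of conditional mass above $\frac12$ and then argues separately that the two dominant vertices coincide.
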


\begin{proof}
    We assume that the team players have a memoryless collective strategy $\bstrat = (\sigma_{P_1}, \sigma_{P_2})$ that guarantees reaching $\top$ with probability strictly greater than $3/4$. Note that by \cref{thm:memmless-optimal} such a strategy exists. We show that this implies the existence of a clique of size $k$.

    We define $\beta_{P_i}(j)$ as the probability that player $P_i$ chooses the index $j$:
    \[
        \beta_{P_i}(j) = \sum_{v \in V} \sigma_{P_i}(s)(j, v)
    \]

    For the ease of notation, we define $\beta_{P_i}(j, v)$ as the probability of player $P_i$ choosing the pair $(j, v)$. Note that $\beta_{P_i}(j, v) = \sigma_{P_i}(s)(j, v)$.

    First, we argue that for all players $P_i$ and indices $j \in \{1, \dots, k\}$, we must have $\beta_{P_i}(j) > 0$.
    If there exists some $P_i$ and $j$ such that $\beta_{P_i}(j) = 0$, the opponent can choose the index pair $(j, j)$. 
    In this case, the team never matches the opponent's chosen indices. Consequently, the game remains in the starting state $s$ forever. 
    Since the target $\top$ is never reached, the winning probability would be 0, which contradicts the assumption that the winning probability is greater than $3/4$. Thus, $\beta_{P_i}(j) > 0$ for all $i, j$.

    Next, we analyse the conditional probabilities. We define the conditional probability of playing vertex $v$ given index $j$ by player $P_i$ as:
    \[
       \prob(v | j, P_i) = \frac{\beta_{P_i}(j, v)}{\beta_{P_i}(j)}
    \]

    We claim that for every player $P_i$ and index $j$, there must exist a \emph{dominant} vertex $v$ such that $\prob(v | j, P_i) > 1/2$.

    Suppose for the sake of contradiction, for some player $P_1$ and index $j$, no such vertex exists:
    \[
        \max_{v \in V} \frac{\beta_{P_1}(j, v)}{\beta_{P_1}(j)} \leq \frac{1}{2}
    \]

    In this case, suppose the opponent chooses the pair $(j, j)$. The game proceeds to the validation phase only if both $P_1$ and $P_2$ choose action $j$. The probability of passing the consistency check (where $P_1$ and $P_2$ must play the same vertex $v$) is:
    \[
        \prob(\mathsf{Pass} \mid \mathsf{Match}) = \sum_{v \in V} \prob(v | j, P_1) \cdot \prob(v | j, P_2)
    \]

    Since we assumed $\prob(v | j, P_1) \le 1/2$ for every vertex $v$, we can bound this sum:$$    \sum_{v \in V} \prob(v | j, P_1) \cdot \prob(v | j, P_2) \leq \sum_{v \in V} \frac{1}{2} \cdot \prob(v | j, P_2) = \frac{1}{2} \sum_{v \in V} \prob(v | j, P_2)$$

    Since the sum of marginal probabilities for $P_2$ is 1, the probability of passing the consistency check, and reaching $\top$ is at most $1/2$.

    This means that whenever the loop ends, the total probability of winning the game is at most $1/2$. This contradicts our assumption that the team wins with probability greater than $3/4$.

    Therefore, for every player $P_i$ and every index $j$, there must be a dominant vertex $v$ where $\prob(v | j, P_i) > 1/2$. 
    Since these dominant vertices exist, we define a specific sequence of vertices for each player.
    \begin{itemize}
        \item For $P_1$, let $(v_1, \dots, v_k)$ be the sequence where $v_j$ is the dominant vertex for index $j$.
        \item For $P_2$, let $(u_1, \dots, u_k)$ be the sequence where $u_j$ is the dominant vertex for index $j$.
    \end{itemize}

    \textbf{Step 1: The players agree on vertices}.
    We argue that $v_j = u_j$ for all $j$. Suppose they are different ($v_j \neq u_j$). 
    Consider the deterministic strategy by the opponent that chooses $(j, j)$. Then, the probability that the team players play the specific pair $(v_j, u_j)$ is:
    $$    \prob(P_1=v_j, P_2=u_j \mid \text{Match}) > \frac{1}{2} \cdot \frac{1}{2} = \frac{1}{4}$$
    Since $v_j \neq u_j$, the game moves to $\bot$. 
    Thus, the probability of going to $\top$ would be less than $3/4$, which is a contradiction. Thus, $v_j$ must equal $u_j$. Let us call this vertex $c_j$.
    
    \textbf{Step 2: The vertices $(c_1, \ldots, c_k)$ form a clique}.
    We argue that for any two indices $i \neq j$, there is an edge between $c_i$ and $c_j$. Suppose there is no edge between them. 
    Consider the deterministic strategy of the opponent that always chooses $(i, j)$. The team plays the pair $(c_i, c_j)$ with probability greater than $1/4$ (since both are dominant vertices). Because the edge is missing (If $c_i=c_j$, the edge is still missing because the graph does not contain self-loop), the game moves to $\bot$. Again, this implies the winning probability is less than $3/4$, which is a contradiction. 
    Thus, vertices $(c_1, \ldots, c_k)$ form a clique.
\qed\end{proof}

Combining these two lemmas, we conclude the main theorem that the threshold problem is $\NP$-hard.

\section{Appendix for~\cref{sec:almostSure}}

\almostsurememless*
\subsection{Proof of \Cref{thm:almostsurememless}} \label{app:almostsurememless}

We start by defining $W$ as the set of states from which the team players can force the game with an arbitrary strategy to reach $\Terminals$ with probability 1. 
Our goal is to define a memoryless strategy on $W$ that wins almost-surely.

We structure the set $W$ into the sequence $(W_0, W_1, \ldots)$ based on the minimum number of steps that the team can guarantee to reach $\Terminals$ with a positive probability without leaving $W$ (later seen as ranks).
In words, $W_{k+1}$ contains the states that are already in $W_k$, in addition to the states in $W$, from which the team can choose a memoryless strategy that transitions to $W_k$ with positive probability, and remains in $W$ with probability 1, regardless of the opponent's action.
More formally, we define a sequence of sets $W_0, W_1, \dots$ as follows:

\begin{itemize}
    \item $W_0 = \Terminals$.
    \item for $k \geq 0:$ \begin{align*}
    W_{k+1} &= W_k \cup \Bigg\{ s \in W \;\Bigg|\; \exists \bstrat_{\Players} \in \Lambda_{\Players}, \forall b \in \Actions_{\opp} : \\
    &\qquad \left( \sum_{t \in W_k} \sum_{\baction_\Players \in \Av_\Players(s)} \left( \left( \prod_{i \in \Players} \bstrat_i(s)(a_i) \right) \cdot \delta(s, \baction, b)(t) \right) > 0 \right) \\
    &\qquad \land \left( \sum_{t \in W} \sum_{\baction_\Players \in \Av_\Players(s)} \left( \left( \prod_{i \in \Players} \bstrat_i(s)(a_i) \right) \cdot \delta(s, \baction, b)(t) \right) = 1 \right) \Bigg\}
\end{align*}
\end{itemize}

We now prove that this construction eventually includes the set $W$, defined as the set of states from which the team has a collective strategy to reach $T$ almost-surely.

\begin{restatable}{lemma}{almostsurewconvergence} \label{lemm: almost-sure-w-convergence}
    Let $W$ be the set of all almost-sure winning states. The sequence $W_k$ converges to $W$. That is, there exists $m$ such that $W_m = W$.
\end{restatable}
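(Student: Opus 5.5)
The plan is to show first that $(W_k)_k$ is an increasing chain inside the finite set $W$, hence stationary, and then to show that its limit cannot be a proper subset of $W$: from any state of $W$ that is never added, the opponent can keep the play away from $\Terminals$ forever.

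\textbf{Stabilisation.} By definition $W_k \subseteq W_{k+1}$. Every $W_k$ lies in $W$, since $W_0 = \Terminals \subseteq W$ and each new element is drawn from $W$. As $\Vertices$ is finite, there is an $m$ with $W_{m+1} = W_m$, and then $W_k = W_m$ for all $k \geq m$. It remains to prove $W \subseteq W_m$.

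\textbf{Setup for the contradiction.} Suppose $U = W \setminus W_m$ is nonempty. Since $W_{m+1} = W_m$, for every $s \in U$ and every collective selector $\bxi_\Players \in \Lambda_\Players(s)$, at least one of the two conditions defining $W_{m+1}$ fails for some opponent action $b$:
\begin{itemize}
\item either $\bxi_\Players$ together with $b$ reaches $W_m$ with probability $0$,
\item or it leaves $W$ with positive probability.
\end{itemize}
Only supports matter here. Also, at any history $hs$, the distribution $\bstrat_\Players(hs)$ of any collective strategy is a product of individual distributions. It is therefore exactly a collective selector in $\Lambda_\Players(s)$, which is where independent randomisation enters, and the fixpoint property applies to it verbatim.

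\textbf{Opponent strategy.} Fix $s_1 \in U$ and any collective strategy $\bstrat_\Players$ from $s_1$. We build an opponent strategy that makes $\bstrat_\Players$ reach $\Terminals$ with probability $<1$, contradicting $s_1 \in W$. At a history $hs$ with $s \in U$, apply the fixpoint property to the selector $\bstrat_\Players(hs)$.
\begin{itemize}
\item \emph{Case 1.} There is an action $b$ for which the next state lies outside $W$ with positive probability. The opponent plays $b$. If a state $t \notin W$ is reached, then the residual strategy $\bstrat_\Players[hs t]$ is a collective strategy from $t$. Since $t \notin W$, it cannot win almost surely, so some opponent continuation keeps its winning probability below $1$; the opponent switches to it. This event has positive probability, so the overall winning probability is $<1$.
\item \emph{Case 2.} Otherwise, the selector keeps the play inside $W$ against every $b$. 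The fixpoint property then yields some $b$ under which $W_m$ is hit with probability $0$, so the next state lies in $U$ almost surely. The opponent plays this $b$.
\end{itemize}

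\textbf{Conclusion.} If Case 1 ever occurs along the play with positive probability, we are done. Otherwise, Case 2 applies at every history, and by induction on the number of steps the play stays in $U$ forever with probability $1$. Since $\Terminals = W_0 \subseteq W_m$, this means $\Terminals$ is never reached, again contradicting $s_1 \in W$. Hence $U = \emptyset$ and $W_m = W$.

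\textbf{Main obstacle.} The delicate step is Case 1. One must check that the opponent's switching is well defined as a single history-dependent strategy, by gluing the continuation strategies chosen at the histories where the play leaves $W$. One must also check that a positive-probability event of leaving $W$, combined with a losing continuation from each state so reached, really yields an overall winning probability strictly below $1$. This needs a short conditioning argument: the winning probability is at most one minus the probability of leaving $W$ times the (positive) losing probability of the chosen continuation.
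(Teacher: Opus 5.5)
Your proof is correct and follows essentially the same route as the paper: the increasing chain $(W_k)_k$ stabilises inside the finite set $W$, and the fixpoint property on $W \setminus W_m$ then lets the opponent keep the play away from the target set. Your explicit Case 1 is more careful than the paper's version. There, the team's selector leaves $W$ with positive probability, and you glue on a losing continuation. The paper simply asserts that the opponent can keep the play outside the limit set at every step, which silently passes over that alternative.
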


\begin{proof}
    Since the set of states $\Vertices$ is finite, 
    the sequence $W_0 \subseteq W_1 \subseteq \dots$ 
    must eventually stop growing. 
    Let $W^*$ be the limit of this sequence. 
    By definition, $W^* \subseteq W$ because only states from $W$ are added to set $W_{k+1}$ at each step $k$.
    
    Suppose for the sake of contradiction, 
    the set $W^*$ is not equal to $W$. 
    This means there is a non-empty set of states 
    $R = W \setminus W^*$. 
    From any state in $R$, the team can win with probability 1. 
    However, to reach $\Terminals$, 
    the game must eventually leave $R$ and enter $W^*$ 
    (since $W^*$ contains $\Terminals$). 

    Since $R \subseteq W$, 
    there exists a collective strategy $\bstrat_{\Pi}$ 
    for the team that guarantees almost-sure reachability from any state $s \in R$.
    However, by the definition of the limit set $W^*$, 
    for every state $s \in R$, 
    the team cannot force a transition to $W^*$ with positive probability without leaving the set $W$, because if there existed a memoryless strategy in $s$ that forces entering $W^*$ with positive probability without leaving $W$, then $s$ would be added to $W_{k+1}$ once $W_k = W^*$, hence $s \in W^*$, which contradicts the fact that $s \in R$.
    Therefore, even if the team plays $\bstrat_{\Pi}$, 
    the opponent can force the game to remain outside $W^*$ at every step.
    Since $\Terminals \subseteq W^*$, this opponent strategy guarantees the game never reaches $\Terminals$, contradicting the assumption that $\bstrat_{\Pi}$ is an almost-sure winning strategy.
    
    Therefore, the set $R$ must be empty, 
    which proves that $W^* = W$.
\qed\end{proof}

Now, we construct the strategy $\eta$ based on the sequence $W_i$, which guarantees almost-sure reachability. 
For every state $s \in W \setminus \Terminals$,
we define $r_s$ the smallest index
such that $s \in W_{r_s}$. 
We call $r_s$ the rank of $s$.
Let $m$ be the maximum rank (i.e., $W_m = W$).

By the definition of the sequence $W_i$,
there exists a memoryless strategy 
for the team at state $s$ such that for any opponent action:

\begin{enumerate}
    \item The game moves to $W_{k-1}$ with a positive probability on the next turn.
    \item The game remains in $W$ with probability 1 on the next turn.
\end{enumerate}

Let $\eta$ be the strategy that plays this memoryless strategy at each state $s$. We show in the following lemma that for each state $s \in W$, strategy $\eta$ guarantees reaching $\Terminals$ with probability 1.

\begin{restatable}{lemma}{almostsurememlesss} \label{lemm:almost-sure-strat}
    If the team players play the memoryless collective strategy $\eta$, for every state $s \in W$, the probability of reaching $\Terminals$ is 1.
\end{restatable}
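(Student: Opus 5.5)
Fix an arbitrary opponent strategy $\strat_\opp$ (possibly with memory) and a starting state $s \in W$. The proof will combine two facts: the play never leaves $W$, and from every state of $W$ the target is reached within a bounded number of steps with probability bounded away from zero. A standard geometric argument then gives probability~$1$.

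\textbf{Step 1 (invariance of $W$).} By construction of $\eta$, at every state $s' \in W \setminus \Terminals$, for every opponent action $b$, the successor lies in $W$ with probability~$1$. Target states are absorbing and belong to $W$. By induction on the length of histories, every history with positive probability under $(\eta, \strat_\opp)$ stays in $W$. This holds whatever the opponent's mixture at that history is, since it is a convex combination of actions $b$ each satisfying the property.

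\textbf{Step 2 (uniform positive progress).} At each state $s'$ of rank $r \geq 1$, the selector $\eta(s')$ is fixed, and for each of the finitely many $b \in \Av_\opp(s')$ the probability of moving to $W_{r-1}$ is strictly positive. Let $p > 0$ be the minimum of these probabilities over all $s' \in W \setminus \Terminals$ and all $b$. Any opponent mixture is a convex combination of such actions, so from any history ending in $s'$ the rank strictly decreases at the next step with probability at least $p$. Chaining this at most $m$ times, where $m$ is the maximum rank from \cref{lemm: almost-sure-w-convergence}, shows that from any history ending in $W$ the target is reached within the next $m$ steps with probability at least $p^m$. This bound is uniform in the history, so it holds against the history-dependent $\strat_\opp$ as well.

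\textbf{Step 3 (conclusion).} Split the play into consecutive blocks of $m$ steps. By Step 1 each block starts in $W$. By Step 2, conditional on any history at the start of a block at which $\Terminals$ has not yet been reached, the probability of not reaching $\Terminals$ during that block is at most $1 - p^m$. Hence the probability of avoiding $\Terminals$ for $k$ blocks is at most $(1-p^m)^k$, which tends to $0$. So $\Terminals$ is reached with probability~$1$.

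The main obstacle I expect is Step 2: the bound must be uniform against a history-dependent opponent. This requires that positive progress holds for each pure opponent action, so that $p$ is a genuine minimum over a finite set, which the definition of $W_{k+1}$ provides. I would also check two points in the definition of $\eta$. First, $\eta$ should use the witness selector at level $r_{s'}$, so that the progress target is $W_{r_{s'}-1}$. Second, states of $W_{r-1}$ have rank at most $r-1$, so each progress step strictly decreases the rank.
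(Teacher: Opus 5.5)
Your proof is correct and follows the same route as the paper: a uniform lower bound $\gamma$ on the probability of reaching $\Terminals$ within $m$ steps from any state of $W$, combined with the geometric bound $(1-\gamma)^k$ over consecutive blocks of $m$ steps. You also make explicit what the paper leaves implicit: the invariance of $W$, the concrete choice $\gamma = p^m$, and why the bound is uniform against randomised, history-dependent opponents (through the per-action minimum $p$).
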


\begin{proof}
    Since $W$ is a finite set and the probability of reducing the rank is strictly positive at each step, there exists a value $\gamma > 0$ such that for any state $s \in W$, the probability of reaching $\Terminals$ (rank 0) within at most $m$ steps is at least $\gamma$ by using the strategy $\eta$.

    Now, let $p_{\neg \text{reach}}(i)$ 
    denote the probability that the strategy $\eta$ 
    fails to reach $\Terminals$ within the first $i$ steps.
    In the first $m$ steps, 
    the probability of reaching $\Terminals$ is at least $\gamma$. 
    If the target is not reached, 
    the game remains in $W$, 
    and the same argument applies to the next $m$ steps.
    Therefore, for $i = k \cdot m$, this probability is bounded by:
    \[
        p_{\neg \text{reach}}(k \cdot m) \leq (1 - \gamma)^k
    \]
    
    As $k \to \infty$, the probability
    $p_{\neg \text{reach}}(k \cdot m)$
    converges to 0. Consequently, the probability of eventually reaching $\Terminals$ is:
    \[
        1 - \lim_{k \to \infty} p_{\neg \text{reach}}(k \cdot m)  = 
        1 - \lim_{k \to \infty} (1 - \gamma)^k = 1.
    \]
    Thus, $\eta$ is an optimal memoryless almost-sure winning strategy.
\qed\end{proof}

Combining \cref{lemm: almost-sure-w-convergence} and \cref{lemm:almost-sure-strat}, we establish the main theorem for this section that shows memoryless strategies are optimal in the almost-sure problem.

\subsection{Proof of \Cref{thm:almostsureeasiness}} \label{app:almostsuresat}

\almostsureeasiness*

We construct the SAT formula $\Phi$ that is satisfiable if and only if there exists a winning set $W$, a ranking $r_s$, and a strategy support that guarantee almost-sure winning as follows.

\subsubsection{Variables}

Let $N = |\Vertices|$ be the number of states. We define the following Boolean variables:
\begin{itemize}
    \item \textbf{Winning set}: For each state $s \in \Vertices$, a variable $w_s$ is defined. This variable will be true if $s \in W$.
    \item \textbf{Ranking}: For each state $s \in \Vertices$ and rank $k \in \{0, \dots, N\}$, a variable $r_{s,k}$ is defined. 
    This variable will be true if $r_s$ is equal to $k$ 
    (i.e., $r_s$ is the smallest index which $s \in w_{r_s}$).
    \item \textbf{Strategy support}: For each state $s$, player $P_i \in \Players$, and action $a \in \Actions_{P_i}$, a variable $u_{s,i,a}$ is defined. This variable will be true if the action $a$ is in the support of the player $i$ at the state $s$ (i.e. $a$ is played with positive probability).
\end{itemize}

\subsubsection{Constraints.} We define the following constraints:

\textbf{1. Consistency of winning set and ranks.}
A state $s$ is in the winning set $W$ if and only if it has a rank.
Furthermore, we force that each state has \emph{at most} one rank. 
These two ensure that every state in $W$ has exactly one rank. 
Target states must have rank 0. 
We also force $\initial$ to be in $W$.
\begin{align*}
    \Phi_{\text{rank}} :=
    \bigwedge_{s \in \Vertices} \left( w_s \leftrightarrow \bigvee_{k=0}^N r_{s,k} \right) 
    \bigwedge_{s \in \Vertices} \bigwedge_{0 \le j < k \le N} \neg (r_{s,j} \land r_{s,k}) 
    \bigwedge_{s \in \Terminals} r_{s,0}
    \ \ \bigwedge w_{\initial}
\end{align*}

\textbf{2. Non-Empty strategy support.}
If a state is in $W$, every team player must have at least one valid action in their support.
\[
    \Phi_{\text{support}} := \bigwedge_{s \in \Vertices} \bigwedge_{P_i \in \Players} \left( w_s \to  \bigvee_{a \in \Actions_i} u_{s,i,a} \right)
\]

\textbf{3. Safety in $W$.}
The strategy must guarantee that the game never leaves $W$. 
If the team plays actions from their supports, 
then for any opponent action $b$, 
all possible next states must be in $W$.
\[
    \Phi_{\text{safe}} := \bigwedge_{s \in \Vertices} \bigwedge_{\baction \in \prod\limits_{i} Av_{i}(s)} \bigwedge_{b \in \Actions_{\opp}}
    \left( \left(\bigwedge_{P_i \in \Players} u_{s,i,a_i} \right) \to \bigwedge_{t \in \text{Supp}(\delta(s, \baction, b))} w_t \right)
\]

\textbf{4. Progress (rank reduction).}
From $s \in W$ with rank $k > 0$, the strategy must have a positive probability to move to a state with a lower rank. 
Formally, for every opponent action $b$, there must be a joint action in the team's support that leads to a state $t$ with rank strictly less than $k$. $\Phi_{\text{prog}}=$
\[
      \bigwedge_{s \in \Vertices} \bigwedge_{k=1}^N \bigwedge_{b \in \Actions_{\opp}} \left( r_{s,k} \to \bigvee_{\baction \in \prod\limits_{i} Av_{i}(s)}
    \left( 
       \left( \bigwedge_{P_i \in \Players} u_{s,i,a_i} \right) \land \bigvee_{t \in \text{Supp}(\delta(s, \baction, b))} \bigvee_{j=0}^{k-1} r_{t,j}
    \right) \right)
\]

\subsubsection{Proof of correctness}

Based on \cref{sec:almost-sure-memoryless}, we know that if an almost-sure winning strategy exists, there exists a memoryless strategy that guarantees almost-sure reachability. 

First, we argue in the following lemma that the almost-sure reachability depends only
on the \emph{support} of the memoryless strategies rather than their precise numeric values.
The support of a strategy on a state is defined as the set of actions that are played with a positive probability.

\begin{restatable}{lemma}{almostsuresupport} \label{lemm:almost-sure-support}
    Let $\sigma$ and $\sigma'$ be two memoryless collective strategies for the team. If $\sigma$ and $\sigma'$ have the same support (i.e., for all $s \in S$ and $P_i \in \Pi$, $\{a \in A_i \mid \sigma_i(s)(a) > 0\} = \{a \in A_i \mid \sigma'_i(s)(a) > 0\}$), then $\sigma$ guarantees almost-sure reachability to $\Terminals$ if and only if $\sigma'$ does.
\end{restatable}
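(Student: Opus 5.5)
Once the team fixes a memoryless collective strategy $\sigma$, the opponent faces a finite Markov decision process $M_\sigma$. Its states are $\Vertices$, its actions at $s$ are $\Av_\opp(s)$, and its transitions are
\[
P_\sigma(s,b)(t)=\sum_{\baction\in\Av_\Players(s)}\Big(\prod_{i\in\Players}\sigma_i(s)(a_i)\Big)\,\delta(s,\baction,b)(t).
\]
The plan is to show two things. First, $\sigma$ guarantees almost-sure reachability exactly when the opponent, as a minimiser in $M_\sigma$, cannot keep the probability of reaching $\Terminals$ below $1$. Second, that property depends only on the support graph of $M_\sigma$, meaning the set of triples $(s,b,t)$ with $P_\sigma(s,b)(t)>0$. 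Once both are established, it only remains to check that $\sigma$ and $\sigma'$ with equal supports induce the same support graph.

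\textbf{Step 1: equal supports give equal support graphs.} Fix $s$, $b$ and $t$. The quantity $P_\sigma(s,b)(t)$ is a sum of non-negative terms, so it is positive iff some joint action $\baction$ has $\prod_i\sigma_i(s)(a_i)>0$ and $\delta(s,\baction,b)(t)>0$. A product of non-negative numbers is positive iff every factor is positive, and $\sigma_i(s)(a_i)>0$ holds iff $a_i\in\mathrm{supp}(\sigma_i(s))$. So positivity depends only on the supports. This is exactly where independent randomisation helps: a product distribution's support is the product of the marginal supports.

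\textbf{Step 2: almost-sure reachability in an MDP is a graph property.} We know that in a finite MDP, if the minimiser cannot force failure, some optimal deterministic memoryless strategy exists, so it suffices to consider deterministic memoryless opponent strategies. A deterministic memoryless choice $\tau$ yields a finite Markov chain $M_{\sigma,\tau}$. In a finite Markov chain, $\Terminals$ (which is absorbing) is reached with probability $1$ from $\initial$ iff every state reachable from $\initial$ in the chain's underlying graph can itself reach $\Terminals$. That condition is purely graph-theoretic.

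Combining the two steps: $\sigma$ fails iff some deterministic memoryless $\tau$ makes the chain fail, and this is a condition on the support graph alone. By Step 1, $\sigma$ and $\sigma'$ have the same support graph, so they succeed or fail together.

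\textbf{Avoiding the reduction to memoryless opponents.} The one nontrivial ingredient is the reduction to deterministic memoryless opponents, and I want the argument to be self-contained. A direct route is to define $Z$ as the set of states from which some path in the support graph avoids $\Terminals$ forever, staying within states that have no escape. Equivalently, $Z$ is the greatest fixpoint $Z=\{s\notin\Terminals : \exists b,\ \mathrm{supp}\,P_\sigma(s,b)\subseteq Z\}$.

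If $\initial\in Z$, the opponent stays in $Z$ surely and so wins. Conversely, suppose $\initial\notin Z$, and let $Y$ be the complement of $Z$. From every state of $Y$, every opponent action leads with positive probability into a state of smaller attractor rank towards $\Terminals$. Taking the minimum over finitely many positive probabilities gives a uniform bound $\gamma>0$ on reaching $\Terminals$ within $|\Vertices|$ steps. The same $(1-\gamma)^k$ argument as in \cref{lemm:almost-sure-strat} then yields probability $1$.

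Both $Z$ and the attractor ranks are defined purely from the support graph. The main obstacle I expect is making this attractor/fixpoint characterisation airtight, in particular checking that staying inside $Y$ is automatic because the complement of $Z$ is closed under all opponent moves.
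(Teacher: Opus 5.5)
Your main argument (Steps 1 and 2) is correct, and it is essentially the paper's proof. Both fix $\sigma$, pass to the opponent's MDP $M_\sigma$, and observe that positivity of $P_\sigma(s,b)(t)$ depends only on the marginal supports. Your Step 1 makes this explicit; the paper only states it. Both then conclude that almost-sure reachability is a property of the support graph. The paper phrases this last step via non-target end components. You phrase it via optimal deterministic memoryless counter-strategies plus the finite Markov-chain criterion, which works equally well. Note that the fact you need holds unconditionally: in a finite MDP the minimiser always has an optimal deterministic memoryless strategy for reachability, not only ``if it cannot force failure''.

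The self-contained replacement, however, is wrong as stated. The dichotomy ``$\initial\in Z$ versus $\initial\notin Z$'' is not the right characterisation, and the closure property you were hoping for fails. The set $Y=\Vertices\setminus Z$ is the least fixpoint of $X\mapsto\Terminals\cup\{s:\forall b,\ \mathrm{supp}\,P_\sigma(s,b)\cap X\neq\emptyset\}$. So each opponent action keeps \emph{some} successor in $Y$, not all of them.

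Here is a concrete counterexample. Let the only joint action at $\initial$ lead with probability $\frac12$ to a target state and with probability $\frac12$ to an absorbing non-target state $z$. Then $Z=\{z\}$ and $\initial\notin Z$; indeed $\initial$ has attractor rank $1$. Yet $\Terminals$ is reached with probability only $\frac12$. Your $(1-\gamma)^k$ restart argument breaks because, after a failed round, the play may be sitting in $Z$.

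The fix is to replace membership of $\initial$ in $Z$ by reachability of $Z$. The opponent defeats $\sigma$ if and only if some state of $Z$ is reachable from $\initial$ in the support graph. For one direction, the opponent follows such a path, which succeeds with positive probability, and then stays in $Z$ surely. For the converse, let $Y'$ be the set of states from which $Z$ is not reachable. Then $Y'$ is closed under every move and $Y'\subseteq Y$. Inside $Y'$, every state of positive rank has, under every opponent action, a successor of lower rank, and that successor again lies in $Y'$. Let $p_{\min}$ be the least positive transition probability of $M_\sigma$. You then get the bound $\gamma=p_{\min}^{|\Vertices|}$ against arbitrary (randomised, history-dependent) opponents, and the restart argument goes through. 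Both conditions are still purely graph-theoretic, so the lemma also follows from this corrected version.
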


\begin{proof}
    Fix a memoryless collective strategy $\sigma$ for the team players. 
    Since $\sigma$ is fixed, the game reduces to a \emph{Markov Decision Process} (\emph{MDP}) where only the opponent chooses actions.

    In this MDP, the opponent's goal is to minimize the probability of reaching $\Terminals$. 
    The transitions are defined by the opponent's action $b \in \Actions_{\opp}$ and $\sigma$. Specifically, a transition from $s$ to $t$ under action $b$ is possible (has probability $>0$) if and only if the team plays a joint action $\baction$ with a positive probability in $\sigma$ such that $\delta(s, \baction, b)(t) > 0$.

    A standard result in MDPs is that the player who minimises (opponent in our setting) can prevent almost-sure reachability if and only if they can force the game into a set of non-target states called an \emph{end component} \cite{puterman1994markov}. An end component is a set of states where opponent can choose actions to keep the game inside the set forever. 
    More formally, 
    an end component is a set of states $EC$ such that for every state $s \in EC$, 
    there exists an opponent action $b$ where all possible successor states belong to $EC$.

    The existence of an end component depends only on the graph structure 
    (the set of edges with non-zero probability), 
    not on the exact transition probabilities \cite{puterman1994markov}. 
    Since $\sigma$ and $\sigma'$ have the same support, 
    they make the same transitions available for the opponent. 
    Thus, the graph structures of their induced MDPs are the same.
    Therefore, the opponent can avoid reaching $\Terminals$ with probability 1 under $\sigma$ if and only if they can do so under $\sigma'$.
\qed\end{proof}

By \cref{lemm:almost-sure-support}, the existence of an almost-sure winning strategy depends only on the support of the strategy. 
Therefore, the problem reduces to finding the support of a strategy that guarantees almost-sure reachability.

\begin{restatable}{lemma}{almostsurephi} \label{lemm:almostsurephi}
    The formula $\Phi$ is satisfiable if and only if
    there exists a memoryless strategy $\eta$ that guarantees
    reaching the target $\Terminals$  from state $\initial$ with probability 1.
\end{restatable}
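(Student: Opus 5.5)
We prove the two directions separately, using \cref{lemm:almost-sure-support} to move freely between supports and concrete probability values.

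\textbf{Soundness.} Suppose $\Phi$ is satisfied by an assignment. Take $W=\{s \mid w_s\}$, and for each $s\in W$ let $r_s$ be the unique $k$ with $r_{s,k}$ true; uniqueness follows from $\Phi_{\text{rank}}$. Define the memoryless collective strategy $\eta$ as follows: at each $s\in W$, every team player $P_i$ plays the uniform distribution over $\{a \mid u_{s,i,a}\}$. This set is nonempty by $\Phi_{\text{support}}$. Outside $W$, players act arbitrarily.

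Since the players randomise independently, every joint action $\baction$ with $u_{s,i,a_i}$ true for all $i$ is played with positive probability. Conversely, every joint action played with positive probability has this form.

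\begin{itemize}
\item By $\Phi_{\text{safe}}$, every successor of $s\in W$ under any opponent action $b$ lies in $W$. So the play never leaves $W$.
\item By $\Phi_{\text{prog}}$, from a state of rank $k>0$, for every $b$ the next state has rank $<k$ with positive probability.
\item Target states have rank $0$.
\end{itemize}

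We then rerun the argument of \cref{lemm:almost-sure-strat}. Against any opponent strategy, even a history-dependent one, from any state of $W$ the target is reached within $N$ steps with probability at least $\gamma>0$. Here $\gamma$ is the minimum over states and actions of the positive transition probabilities, raised to the power $N$; it is uniform because only finitely many states and actions occur. Hence the probability of never reaching $T$ is at most $\lim_k (1-\gamma)^k=0$. Since $w_{\initial}$ holds, $\initial\in W$.

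One detail needs care: a target state might also carry some rank $k>0$. This is excluded because $r_{s,0}$ is forced for $s\in T$, and at most one rank is allowed per state.

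\textbf{Completeness.} Suppose a memoryless strategy $\eta$ wins almost surely from $\initial$. By \cref{thm:almostsurememless}, we may assume it is the strategy constructed there, defined on the set $W$ with the ranks $r_s$ from the sequence $(W_k)$. We set
\begin{itemize}
\item $w_s$ true iff $s\in W$,
\item $r_{s,k}$ true iff $k=r_s$,
\item $u_{s,i,a}$ true iff $\eta_i(s)(a)>0$.
\end{itemize}

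Ranks are bounded by $N$ because the chain $W_0\subsetneq W_1\subsetneq\cdots$ grows strictly until it stabilises. The constraints $\Phi_{\text{rank}}$ and $\Phi_{\text{support}}$ then hold directly. $\Phi_{\text{safe}}$ holds because the strategy defining $W_{k+1}$ stays in $W$ with probability $1$ against every $b$. $\Phi_{\text{prog}}$ holds because that strategy reaches $W_{k-1}$ with positive probability, and $W_{k-1}$ consists exactly of the states of rank $\le k-1$.

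The supports must be chosen at states outside $W$ as well. There the constraints are vacuous, except $\Phi_{\text{safe}}$, whose premise only involves the $u$ variables. To satisfy it, I would restrict $\Phi_{\text{safe}}$ to states with $w_s$ true, since the safety constraint is intended for $s\in W$. Otherwise, for $s\notin W$, I would set all $u_{s,i,a}$ false. This is allowed because $\Phi_{\text{support}}$ only requires nonempty supports when $w_s$ holds.

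\textbf{Main obstacle.} The delicate step is making the per-state witnesses from the definition of $W_{k+1}$ into a single memoryless strategy $\eta$. In particular, its support at each state must simultaneously give safety and progress. This is exactly how $\eta$ was built in the proof of \cref{thm:almostsurememless}, so I would cite that construction directly rather than an arbitrary memoryless winning strategy.
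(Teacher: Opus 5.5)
Your route is the paper's, but the soundness direction has a genuine hole, and it lies just beside the subtlety you checked. The rank-decrease argument only shows that, within $N$ steps, the play reaches a state of rank $0$ with probability at least $\gamma$. To conclude that it reaches $\Terminals$, you need every rank-$0$ state to be a target. $\Phi$ gives only the opposite inclusion $\Terminals\subseteq\{s\mid r_{s,0}\}$. That is what your ``one detail needs care'' remark verifies, and soundness does not even need it, since hitting a target at any rank is already a win. Nothing in $\Phi_{\text{rank}}$ forbids $r_{s,0}$ for $s\notin\Terminals$, and $\Phi_{\text{prog}}$ is imposed only for $k\ge 1$, so a non-target state of rank $0$ has no progress obligation at all. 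Concretely, set every $w_s$, every $r_{s,0}$ and every $u_{s,i,a}$ to true, and $r_{s,k}$ to false for $k\ge 1$. This satisfies $\Phi$ in every game: $\Phi_{\text{safe}}$ holds because $W=\Vertices$, and $\Phi_{\text{prog}}$ is vacuous. Hence, for $\Phi$ as literally written, ``$\Phi$ satisfiable $\Rightarrow$ almost-sure win'' is false, and your step can only be fixed by changing the formula. The paper's own proof glosses over the same point (``reaching rank 0 ($\Terminals$)''). Since you already amend $\Phi_{\text{safe}}$, add $\bigwedge_{s\in\Vertices\setminus\Terminals}\neg r_{s,0}$ to $\Phi_{\text{rank}}$. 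Equivalently, require progress at every $s\in W\setminus\Terminals$, as the binary-encoded $\Phi_{\text{prog}}$ does. Your completeness assignment satisfies this constraint because $W_0=\Terminals$, and with it your soundness argument goes through. A smaller point: $\Phi_{\text{support}}$ ranges over $\Actions_i$ rather than $\Av_i(s)$, so you should take the uniform distribution over $\{a\in\Av_i(s)\mid u_{s,i,a}\}$. Its nonemptiness at non-target states of $W$ then comes from $\Phi_{\text{prog}}$, not from $\Phi_{\text{support}}$.

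Your completeness direction is fine and in fact more careful than the paper's. The paper reads the supports off the given memoryless winning $\eta$ and asserts progress with respect to the layers $W_k$. That can fail: if $s$ has rank $1$ because some move leads directly to $\Terminals$, a winning $\eta$ may instead move to another rank-$1$ state. Replacing $\eta$ by the layer-wise strategy constructed in the proof of \cref{thm:almostsurememless} avoids this. Setting all $u_{s,i,a}$ to false outside $W$ correctly discharges the unguarded $\Phi_{\text{safe}}$ without modifying the formula.
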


\begin{proof}
    If $\Phi$ is satisfiable, 
    the variables $u_{s,i,a}$ 
    define the support of a memoryless collective strategy $\eta$. 
    The Safety constraint ensures the team stays in $W$, 
    and the Progress constraint ensures that from any state in $W$, the game moves to a lower rank with a positive probability. 
    As stated in \cref{sec:almost-sure-memoryless}, 
    this guarantees reaching rank 0 ($\Terminals$) with probability 1.
    
    On the other hand, suppose there exists a memoryless collective strategy $\eta$ that wins almost-surely. We construct a satisfying assignment for $\Phi$ as follows:
    \begin{enumerate}
        \item Let $W$ be the set of almost-sure winning states under $\eta$. Set $w_s = \text{true}$ for all $s \in W$.
        \item Define the ranks based on the layers $W_k$ constructed in \cref{sec:almost-sure-memoryless}. For each $s \in W$, let $k$ be the smallest index such that $s \in W_k$. Set $r_{s,k} = \text{true}$ and all other rank variables for $s$ to false.
        \item Set $u_{s,i,a} = \text{true}$ if and only if $\eta_i(s)(a) > 0$.
    \end{enumerate}
    This assignment satisfies all constraints by construction. 
    The Consistency constraints hold because every state in $W$ belongs to exactly one layer $W_k$.
    The Non-Empty Support holds because $\eta$ is a valid strategy.
    The Safety constraint is satisfied because an almost-sure winning strategy cannot leave the winning set $W$ (otherwise there would be a non-zero probability of never reaching $\Terminals$).
    The Progress constraint is satisfied by the definition of the sequence $(W_0, W_1, \ldots)$: for any $s \in W_k$ ($k>0$), the strategy $\eta$ guarantees a transition to $W_{k-1}$ with positive probability against any opponent action.
\qed\end{proof}

From \cref{lemm:almostsurephi} we conclude with the main theorem.

\subsubsection{Binary rank encoding}

The encoding presented above uses $\Ocomplexity(N)$ unary variables to represent the rank of each state, leading to a total of $\Ocomplexity(N^2)$ variables. To improve scalability in the experiments, we replace this with a binary rank encoding. We use $L = \Ocomplexity(\log{N})$ Boolean variables per state to represent its rank $r_s$ in binary form. This reduces the total number of rank variables to $\Ocomplexity(N \log N)$.

We adapt the progress constraint $r_t < r_s$ using a bitwise comparator circuit. Let $\bar{b}_t$ and $\bar{b}_s$ be the binary vectors for the ranks of states $t$ and $s$. We encode the strictly-less-than relation as:
\[
    \text{BSlt}(\bar{b}_t, \bar{b}_s) := \bigvee_{i=0}^{L-1} \left( \neg b_{t,i} \land b_{s,i} \land \bigwedge_{j=i+1}^{L-1} (b_{t,j} \leftrightarrow b_{s,j}) \right)
\]
This predicate represents the condition that at the most significant bit position $i$ where the vectors differ, the bit in $\bar{b}_t$ is 0 and the bit in $\bar{b}_s$ is 1. This new constraint uses $\Ocomplexity(\log N)$ literals, providing an $\tilde{\Ocomplexity}(N)$ improvement in the number of variables.

This formulation simplifies the progress constraint significantly. In the unary encoding, $\Phi_{\text{prog}}$ required an explicit disjunction over all possible ranks $k \in \{1, \dots, N\}$. With the binary comparator, this disjunction is replaced by a single symbolic evaluation.

We rewrite the progress constraint as follows:

\begin{align*}
\Phi_{\text{prog}} = \bigwedge_{s \in \Vertices \setminus \Terminals} \bigwedge_{b \in \Actions_{\opp}} \bigg( w_s \to & \bigvee_{\baction \in \prod_{i} Av_{i}(s)} \Big( \big( \bigwedge_{P_i \in \Players} u_{s,i,a_i} \big) \\
& \land \bigvee_{t \in \text{Supp}(\delta(s, \baction, b))} \text{BSlt}(\bar{b}_t, \bar{b}_s) \Big) \bigg)
\end{align*}

By eliminating the iteration over $k$, the size of the formula for each transition is reduced by a factor $\tilde{\Ocomplexity}(N)$. We use this optimisation in the experiments.

\section{Appendix for~\cref{sec:logic}} \label{app:logic}

\begin{theorem}\label{thm:safetyAS}
    Given a game $\Game$, if there is a collective strategy $\bstrat_\Pi$ that ensures that the target set $T$ is avoided with probability $1$, then there is a deterministic one. 
\end{theorem}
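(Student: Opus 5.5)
The plan is to show that almost-sure avoidance collapses to sure avoidance, using a standard ``safe set'' (greatest fixed point) argument. Let $W$ be the set of states from which the team has some collective strategy avoiding $T$ with probability $1$. First I will check that $W$ is closed in the following sense. For every $s \in W$ there is a joint action profile $\baction_\Players \in \Av_\Players(s)$ such that for every opponent action $b$, the support of $\delta(s,\baction_\Players,b)$ is contained in $W$.

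Given closure, the rest is immediate. We define the memoryless deterministic collective strategy that plays, at each $s \in W$, such a profile $\baction_\Players$; each team player simply plays their own component, so no shared randomness is needed. Every play from $\initial \in W$ compatible with this strategy then stays in $W$ forever. We also have $W \cap T = \emptyset$, since from a target state avoidance fails with probability $1$. Hence $T$ is surely, in particular almost surely, avoided.

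The main step is proving closure, by contradiction. Let $\bstrat_\Players$ avoid $T$ almost surely from $s$, and suppose every deterministic joint profile $\baction_\Players$ admits an opponent action $b_{\baction}$ under which some successor outside $W$ has positive probability. At history $s$, the team's first-step product distribution gives some profile $\baction^*$ probability $p>0$. The opponent plays $b_{\baction^*}$, or better, a mixture putting positive weight on all the actions $b_{\baction}$. Then with positive probability the play reaches a state $t \notin W$ after one step.

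At such a $t$ the residual strategy $\bstrat_\Players[st]$ is a collective strategy from $t$. Since $t \notin W$, the opponent has a response making the probability of hitting $T$ positive. The opponent concatenates these responses after each possible history $st$; it must choose them history-dependently, which is allowed because opponent strategies have memory. This yields a positive overall probability of reaching $T$ from $s$, contradicting almost-sure avoidance.

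The delicate points are two. First, the first-step distribution is a product of individual distributions, so some joint profile does receive positive mass; this holds because each individual distribution has nonempty support. Second, I must handle the measurability of the concatenated opponent strategy, and the fact that $\bstrat_\Players$ may be history-dependent, so $W$ must be defined via arbitrary strategies from each state. I expect no real obstacle beyond this bookkeeping. Individual versus shared randomisation plays no role, since the witness at each state is a single pure profile.
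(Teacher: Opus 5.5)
Your proposal is correct and follows essentially the same route as the paper. Both define the set of almost-surely safe states, choose at each safe state a pure joint profile $\baction^*$ in the support of the witness strategy's first-step product distribution, and argue by contradiction (via the residual strategy and an opponent response from an unsafe successor) that every successor stays safe; this yields a deterministic memoryless strategy that surely avoids $T$, and your argument in fact shows that any profile in that support works, which is exactly the paper's choice.
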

\begin{proof}
    Let $S_{\text{safe}}$ be the set of states in the game from which there is a collective strategy that ensures that the target $T$ is avoided with probability $1$. Clearly, observe that $S_{\text{safe}}\cap T = \emptyset$.

    We construct a deterministic memoryless collective strategy $\bstrat^\text{det}_\Players$ for the team on the set $S_{\text{safe}}$ as follows.
Let $\bstrat_\Players$ be the collective strategy witnessing that a state $s \in S_{\text{safe}}$ is safe. 
At any state $s \in S_{\text{safe}}$, the collective strategy defines a joint distribution over actions $\mathcal{D}_s = \prod_{p \in \Players} \strat_p(s)$. Informally, this is the distribution proposed by the strategy assuming the play starts at the state $s$.   
Since this collective strategy ensures safety with probability~$1$, the support of this distribution must contain actions that also lead to the set~$S_{\text{safe}}$.

We choose a joint action profile $\baction^* = (a^*_p)_{p \in \Players} \in \prod_{p \in \Players} \Av_p(s)$ such that:
\[
    \mathcal{D}_s(\baction^*) > 0.
\]
We then define the deterministic memoryless strategy $\bstrat^\text{det}_\Players$ such that for every player $p \in \Players$ and state $s \in S_{\text{safe}}$, $\strat^\text{det}_p(s)$ assigns probability $1$ to the action $a^*_p$.

We now show that this strategy $\bstrat^\text{det}_\Players$ ensures $T$ is avoided with probability~$1$. 
Consider any state $s \in S_{\text{safe}}$ and the chosen joint action $\baction^*$. 
Let $b \in\Av_\opp(s)$ be any action profile for the opponent.
Let $t$ be any state in the support of the transition function $\delta(s, (\baction^*, b))$.

We claim that $t \in S_{\text{safe}}$. 
Suppose for the sake of contradiction that $t \notin S_{\text{safe}}$. By the definition of $S_{\text{safe}}$, this would imply that from $t$, the opponents have a strategy to force the game into $T$ with non-zero probability.
However, since $\baction^*$ is played with positive probability under the original safe strategy $\bstrat_\Players$, and $t$ is reached with positive probability given $\baction^*$ and $\mathbf{b}$, it follows that the original strategy $\bstrat_\Players$ allows the game to reach a state ($t$) from which $T$ is reached with positive probability. This implies that the probability of avoiding $T$ from $s$ under $\bstrat_\Players$ is strictly less than $1$, contradicting the assumption that $s \in S_{\text{safe}}$.

Thus, all successors of $s$ under the strategy $\bstrat^\text{det}_\Players$ (regardless of the opponent's move) remain in $S_{\text{safe}}$. Since the initial state is in $S_{\text{safe}}$ and $S_{\text{safe}} \cap T = \emptyset$, the team avoids $T$ forever with probability $1$. Observe that since the strategy is deterministic, no play of the game on this strategy avoids the target set $T$ and not just plays of measure $1$. 
\qed\end{proof}
\begin{corollary}\label{cor:safetyAS-is-Sure}
     Given a game $\Game$, if there is a collective strategy $\bstrat_\Pi$ that ensures that the set $T$ is avoided with probability $1$, then there is a deterministic strategy that ensures \emph{all} plays compatible with $\bstrat_\Pi$ avoid the set $T$. 
\end{corollary}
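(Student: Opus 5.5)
The corollary follows almost immediately from the proof of \Cref{thm:safetyAS}. I would not reprove anything probabilistic. I would only observe that the deterministic memoryless collective strategy $\bstrat^{\text{det}}_\Players$ built there gives a \emph{sure} guarantee, not merely an almost-sure one. The plan is as follows.

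First, fix a collective strategy that avoids $T$ with probability~$1$ from $\initial$. Then $\initial \in S_{\text{safe}}$, so \Cref{thm:safetyAS} yields $\bstrat^{\text{det}}_\Players$. By construction, at every $s \in S_{\text{safe}}$ it plays a single joint action $\baction^*_s$. The proof of that theorem establishes the key invariant: for every opponent action $b \in \Av_\opp(s)$ and every $t$ in the support of $\delta(s,(\baction^*_s,b))$, we have $t \in S_{\text{safe}}$.

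Second, I would prove by induction on $n$ that every history of length $n$ compatible with $\bstrat^{\text{det}}_\Players$ and an arbitrary opponent strategy stays inside $S_{\text{safe}}$. The base case is $\initial \in S_{\text{safe}}$. For the inductive step, from the last state $s \in S_{\text{safe}}$, a compatible continuation uses some $b$ in the support of the opponent's (possibly randomised, history-dependent) choice together with the joint action $\baction^*_s$. It then moves to a state in the support of $\delta(s,(\baction^*_s,b))$, which lies in $S_{\text{safe}}$ by the invariant. Since compatibility of a play is a prefix-wise condition, every compatible infinite play lies in $S_{\text{safe}}$. Because $S_{\text{safe}} \cap T = \emptyset$, every such play avoids $T$.

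The only point needing care is the meaning of ``compatible''. I would use the notion from the $\sure$ semantics: every finite prefix has positive probability under $\bstrat$. With that notion the induction is exact, and opponent randomisation is harmless, since it only enlarges the set of available $b$'s, each of which is covered by the invariant. As a consequence, $\teami{C}{\almost}\Box\psi$ and $\teami{C}{\sure}\Box\psi$ coincide: taking $T$ to be the set of states violating $\psi$ gives one direction, and the other is trivial.
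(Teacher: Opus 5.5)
Your proposal is correct and follows the paper's argument: the deterministic memoryless strategy $\bstrat^{\text{det}}_\Players$ from the proof of \Cref{thm:safetyAS} sends every one-step successor of a state in $S_{\text{safe}}$ back into $S_{\text{safe}}$, whatever the opponent plays, so every compatible play stays in $S_{\text{safe}}$ and avoids $T$. Your induction over prefixes, with compatibility taken as ``every finite prefix has positive probability'', simply spells out the paper's one-line step ``all plays therefore stay in $S_{\text{safe}}$''.
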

\begin{proof}
    This follows from the fact that in one-step, all one-step successors of the state $ S_{\text{safe}}$, chosen by the deterministic strategy constructed above also land in the set $ S_{\text{safe}}$. All plays therefore stay in $ S_{\text{safe}}$. 
\qed\end{proof}

\section{Appendix for~\cref{sec:experiments}}\label{app:experiments}
\subsection{Optimisation strategy for local reachability games}\label{app:slsqp}

This section explains how we solve the one-shot games in the value iteration algorithm heuristically. 
In the value iteration methods,
the main challenge is calculating the $Pre$ operator at each step.
This operator tells us the best probability of winning the team can guarantee in a single step, 
assuming the opponent plays optimally.

Because the team players randomise independently, we cannot simply use linear programming, which is the
typical strategy to solve a zero-sum two-player concurrent game.
Instead, we have to solve a nonlinear optimisation problem, which is explained below. 
We use the Sequential Least Squares Programming (SLSQP) algorithm for this problem. 
Below, we explain the formal logic behind the problem and why SLSQP can work well.

\subsubsection{Optimisation problem}
To find the value $Pre(u_k)(s)$ at a specific step $k$, 
we have to solve the following single maximisation problem.
We want to find the best probability distributions for the team ($\mathbf{p}$) and a value $v$ such that the team wins with at least probability $v$, 
no matter what the opponent does. 
We formulate this as:

$$\begin{aligned}
\text{maximize} \quad & v \\
\text{subject to} \quad & \sum_{a \in A_i} \mathbf{p}_i(a) = 1, \quad \forall i \in \Pi \\
& \mathbf{p}_i(a) \geq 0, \quad \forall i \in \Pi, \forall a \in A_i \\
& \sum_{\vec{a} \in A_\Pi} \left( \prod_{i \in \Pi} \mathbf{p}_i(a_i) \right) \cdot \mathcal{V}_{s,k}(\vec{a}, b) \geq v, \quad \forall b \in A_O
\end{aligned}$$

where $\mathcal{V}_{s,k}(\vec{a}, b) = \sum_{s' \in S} \delta(s, \vec{a}, b)(s') \cdot u_k(s')$ is the expected next-step value given joint team action $\vec{a}$ and opponent action $b$ based on the previous values of the value iteration algorithm.

In this setup, the objective is simple 
(just maximise $v$). 
The constraints on the probabilities are also simple linear equations. 
The only hard part is the last constraint: $\mathcal{V}_{s,k}(\vec{a}, b)$ depends on the product of the team's probabilities, which makes it a polynomial function. Thus, we cannot solve it simply with a Linear Programming (LP) solver. 
Instead, we try to solve the optimisation problem with a heuristic approach instead of using slow SMT solvers (the VI-ETR algorithm). 

\subsubsection{Smoothness and differentiability}

To use fast optimisation algorithms, we need the problem to be smooth. In mathematical terms, this means the functions should be infinitely differentiable ($C^\infty$).

Our problem satisfies this. The constraints are just polynomials (sums and products of probabilities). Polynomials are smooth everywhere. Thus, we can calculate their slopes (gradients) and curvatures (Hessians) exactly. This allows us to use advanced algorithms that rely on this curvature information to find the answer quickly, rather than guessing blindly.

\subsubsection{Algorithm Selection: SLSQP}

We use the SLSQP algorithm to solve this optimisation problem. While there are many non-linear solvers available, we chose SLSQP for the following specific reasons, motivated by our problem structure and results from similar fields:

\paragraph{1. Efficient handling of linear constraints.}
Our basic constraints are that probabilities must sum to 1 and be non-negative. 
These are linear constraints. 
Unlike penalty-based methods that may explore infeasible regions during intermediate steps, 
SLSQP exploits the structure of linear constraints to satisfy them exactly at every iteration~\cite{diehl2009efficient}. 
Since the geometry of our search space is defined largely by the probability simplex, 
this ensures the solver never wastes computational effort evaluating undefined or invalid probability distributions 
(e.g., negative probabilities or not with a sum equal to 1).

\paragraph{2. Efficiency with sparse strategies.}
In concurrent games, the best strategies are often simple: players sometimes rely on just a small number of actions rather than mixing many options. This means the probability for most actions should be exactly zero. SLSQP employs an \textbf{active-set strategy} which is designed to quickly identify and lock onto these zero values. This contrasts with Interior Point methods, which approach zero only gradually and can face numerical precision issues when the optimal solution requires exact zeros~\cite{diehl2009efficient}.

\paragraph{3. Proven performance in dynamic games.}
SQP-based methods are a standard choice for solving games that evolve over time. For instance, Zhu and Borrelli~\cite{Zhu2022ASQ} showed that SQP acts much faster than generic solvers when finding solutions to dynamic games. They found that SQP is particularly good at handling the tight interactions between players. This performance improves further when the solver uses exact gradients instead of estimating them, which is the technique used in SLSQP.

\paragraph{4. Sound under-approximation.}
Since our optimisation problem is non-convex, 
SLSQP behaves as a local optimiser. 
This guarantees that the value $v^*$ 
it returns is a valid \textbf{under-approximation} 
of the true maximum ($v^* \leq Pre(u_k)(s)$). 
In the context of formal verification, this is a ``safe'' error: we might underestimate the team's winning chance, 
but we will never falsely claim a higher probability than is actually achievable. 
Our \texttt{VI-Hybrid} algorithm balances this error by using the SLSQP result as a fast, 
high-quality starting point to help the exact SMT solver find the global optimum.

\subsection{Benchmarks}
\label{app:experiments_benchmarks}

In this section, we provide a formal definition for each of the benchmarks in the paper. 

\subsubsection{Pursuit-Evasion with rendezvous.} We analyze a variant of the pursuit-evasion game \cite{parsons2006pursuit} played on a directed graph $G=(V, E)$. The game contains cooperative team players $\Players = \{P_1, \dots, P_k\}$ against a single opponent $\opp$. A global state is defined by the position vector $s = (l_1, \dots, l_k, l_{\opp})$, and all players have perfect information.

\textbf{Dynamics.} Rounds are concurrent. From a current node $u$, each player independently chooses a next step from the closed neighbourhood $N(u) = \{v \mid (u, v) \in E\} \cup \{u\}$. This allows agents to either move to an adjacent node or wait at their current position. Players can play a mixed strategy over their actions.

\textbf{Objectives.} The team aims to meet at a single node while avoiding the opponent. The conditions are:
\begin{enumerate}
    \item \textbf{Rendezvous (win):} All team members meet at the same node ($l_1 = \dots = l_k$).
    \item \textbf{Capture (loss):} The opponent intercepts any team member ($\exists i \text{ s.t. } l_i = l_{\opp}$).
\end{enumerate}

Capture strictly takes precedence over rendezvous. If the team meets at a node also occupied by the opponent, it counts as a loss. Consequently, the target set $T$ consists only of safe rendezvous states (where $l_1 = \dots = l_k \neq l_{\opp}$). The team maximises the probability of reaching $T$, while the opponent minimises it.

We evaluated six scenarios shown in Figure \ref{fig:gamegraphs}. The results are presented in Table \ref{table:graphgame}.

\begin{figure}
\begin{center}
\begin{tikzpicture}[>=stealth, node distance=2cm, scale=0.60, transform shape]
  \tikzstyle{state}=[circle, draw, minimum size=0.8cm, fill=white]
  \begin{scope}[shift={(0,0)}]
    \node[font=\bfseries] at (0.00, 3.0) {Scenario 1};
    \node[state, fill=green!20, label={[fill=white, fill opacity=0.8, inner sep=1pt]-146:$T_{1}$}] (Scenario1_0) at (-1.50, -1.50) {0};
    \node[state, fill=green!20, label={[fill=white, fill opacity=0.8, inner sep=1pt]-33:$T_{2}$}] (Scenario1_1) at (1.50, -1.50) {1};
    \node[state, fill=red!20, label={[fill=white, fill opacity=0.8, inner sep=1pt]90:$Opp$}] (Scenario1_2) at (0.00, 1.50) {2};
    \draw[->] (Scenario1_0) edge[bend right=15] (Scenario1_2);
    \draw[->] (Scenario1_1) edge[bend right=15] (Scenario1_0);
    \draw[->] (Scenario1_1) edge[bend right=15] (Scenario1_2);
    \draw[->] (Scenario1_2) edge[bend right=15] (Scenario1_0);
    \draw[->] (Scenario1_2) edge[bend right=15] (Scenario1_1);
  \end{scope}
  \begin{scope}[shift={(6,0)}]
    \node[font=\bfseries] at (0.00, 3.0) {Scenario 2};
    \node[state] (Scenario2_0) at (-1.50, 1.50) {0};
    \node[state, fill=green!20, label={[fill=white, fill opacity=0.8, inner sep=1pt]-135:$T_{1}$}] (Scenario2_1) at (-1.50, -1.50) {1};
    \node[state, fill=red!20, label={[fill=white, fill opacity=0.8, inner sep=1pt]-45:$Opp$}] (Scenario2_2) at (1.50, -1.50) {2};
    \node[state, fill=green!20, label={[fill=white, fill opacity=0.8, inner sep=1pt]45:$T_{2}$}] (Scenario2_3) at (1.50, 1.50) {3};
    \draw[->] (Scenario2_0) edge[bend right=15] (Scenario2_1);
    \draw[->] (Scenario2_0) edge[bend right=15] (Scenario2_2);
    \draw[->] (Scenario2_0) edge[bend right=15] (Scenario2_3);
    \draw[->] (Scenario2_1) edge[bend right=15] (Scenario2_0);
    \draw[->] (Scenario2_1) edge[bend right=15] (Scenario2_2);
    \draw[->] (Scenario2_2) edge[bend right=15] (Scenario2_0);
    \draw[->] (Scenario2_2) edge[bend right=15] (Scenario2_1);
    \draw[->] (Scenario2_2) edge[bend right=15] (Scenario2_3);
    \draw[->] (Scenario2_3) edge[bend right=15] (Scenario2_0);
    \draw[->] (Scenario2_3) edge[bend right=15] (Scenario2_2);
  \end{scope}
  \begin{scope}[shift={(12,0)}]
    \node[font=\bfseries] at (0.00, 3.0) {Scenario 3};
    \node[state] (Scenario3_0) at (0.00, 0.00) {0};
    \node[state, fill=green!20, label={[fill=white, fill opacity=0.8, inner sep=1pt]143:$T_{1}$}] (Scenario3_1) at (-1.50, 1.50) {1};
    \node[state, fill=red!20, label={[fill=white, fill opacity=0.8, inner sep=1pt]-90:$Opp$}] (Scenario3_2) at (0.00, -1.50) {2};
    \node[state, fill=green!20, label={[fill=white, fill opacity=0.8, inner sep=1pt]36:$T_{2}$}] (Scenario3_3) at (1.50, 1.50) {3};
    \draw[->] (Scenario3_1) -- (Scenario3_0);
    \draw[->] (Scenario3_1) -- (Scenario3_2);
    \draw[->] (Scenario3_2) -- (Scenario3_0);
    \draw[->] (Scenario3_3) -- (Scenario3_0);
    \draw[->] (Scenario3_3) -- (Scenario3_2);
  \end{scope}
  \begin{scope}[shift={(0,-6)}]
    \node[font=\bfseries] at (0.00, 3.0) {Scenario 4};
    \node[state, fill=green!20, label={[fill=white, fill opacity=0.8, inner sep=1pt]147:$T_{1}$}] (Scenario4_0) at (-1.50, 1.20) {0};
    \node[state, fill=red!20, label={[fill=white, fill opacity=0.8, inner sep=1pt]90:$Opp$}] (Scenario4_1) at (0.00, 1.20) {1};
    \node[state, fill=green!20, label={[fill=white, fill opacity=0.8, inner sep=1pt]32:$T_{2}$}] (Scenario4_2) at (1.50, 1.20) {2};
    \node[state] (Scenario4_3) at (-0.90, -1.20) {3};
    \node[state] (Scenario4_4) at (0.90, -1.20) {4};
    \draw[->] (Scenario4_0) -- (Scenario4_3);
    \draw[->] (Scenario4_0) -- (Scenario4_4);
    \draw[->] (Scenario4_1) -- (Scenario4_3);
    \draw[->] (Scenario4_1) -- (Scenario4_4);
    \draw[->] (Scenario4_2) -- (Scenario4_3);
    \draw[->] (Scenario4_2) -- (Scenario4_4);
  \end{scope}
  \begin{scope}[shift={(6,-6)}]
    \node[font=\bfseries] at (0.00, 3.0) {Scenario 5};
    \node[state, fill=green!20, label={[fill=white, fill opacity=0.8, inner sep=1pt]135:$T_{1}$}] (Scenario5_0) at (-1.50, 1.50) {0};
    \node[state, fill=green!20, label={[fill=white, fill opacity=0.8, inner sep=1pt]45:$T_{2}$}] (Scenario5_1) at (1.50, 1.50) {1};
    \node[state, fill=green!20, label={[fill=white, fill opacity=0.8, inner sep=1pt]-135:$T_{3}$}] (Scenario5_2) at (-1.50, -1.50) {2};
    \node[state, fill=red!20, label={[fill=white, fill opacity=0.8, inner sep=1pt]-45:$Opp$}] (Scenario5_3) at (1.50, -1.50) {3};
    \draw[->] (Scenario5_0) edge[bend right=15] (Scenario5_1);
    \draw[->] (Scenario5_0) edge[bend right=15] (Scenario5_2);
    \draw[->] (Scenario5_0) edge[bend right=15] (Scenario5_3);
    \draw[->] (Scenario5_1) edge[bend right=15] (Scenario5_0);
    \draw[->] (Scenario5_1) edge[bend right=15] (Scenario5_2);
    \draw[->] (Scenario5_1) edge[bend right=15] (Scenario5_3);
    \draw[->] (Scenario5_2) edge[bend right=15] (Scenario5_0);
    \draw[->] (Scenario5_2) edge[bend right=15] (Scenario5_1);
    \draw[->] (Scenario5_2) edge[bend right=15] (Scenario5_3);
    \draw[->] (Scenario5_3) edge[bend right=15] (Scenario5_0);
    \draw[->] (Scenario5_3) edge[bend right=15] (Scenario5_1);
    \draw[->] (Scenario5_3) edge[bend right=15] (Scenario5_2);
  \end{scope}
  \begin{scope}[shift={(12,-6)}]
    \node[font=\bfseries] at (0.00, 3.0) {Scenario 6};
    \node[state] (Scenario6_4) at (0.00, 1.95) {4};
    \node[state, fill=green!20, label={[fill=white, fill opacity=0.8, inner sep=1pt]141:$T_{1}$}] (Scenario6_0) at (-1.50, 1.20) {0};
    \node[state, fill=green!20, label={[fill=white, fill opacity=0.8, inner sep=1pt]38:$T_{2}$}] (Scenario6_1) at (1.50, 1.20) {1};
    \node[state, fill=green!20, label={[fill=white, fill opacity=0.8, inner sep=1pt]-141:$T_{3}$}] (Scenario6_2) at (-1.50, -1.20) {2};
    \node[state, fill=red!20, label={[fill=white, fill opacity=0.8, inner sep=1pt]-38:$Opp$}] (Scenario6_3) at (1.50, -1.20) {3};
    \node[state] (Scenario6_5) at (0.00, -1.95) {5};
    \draw[->] (Scenario6_0) -- (Scenario6_4);
    \draw[->] (Scenario6_0) -- (Scenario6_5);
    \draw[->] (Scenario6_1) -- (Scenario6_4);
    \draw[->] (Scenario6_1) -- (Scenario6_5);
    \draw[->] (Scenario6_2) -- (Scenario6_4);
    \draw[->] (Scenario6_2) -- (Scenario6_5);
    \draw[->] (Scenario6_3) -- (Scenario6_4);
    \draw[->] (Scenario6_3) -- (Scenario6_5);
  \end{scope}
\end{tikzpicture}
\end{center}
\caption{Visualizations of the six scenarios used for evaluation. Scenarios 1--4 involve two team players. Scenarios 5 and 6 involve three team players. Green nodes indicate the team's starting positions, while the red node indicates the opponent's. The results are available in Table \ref{table:graphgame}.}
\label{fig:gamegraphs}
\end{figure}
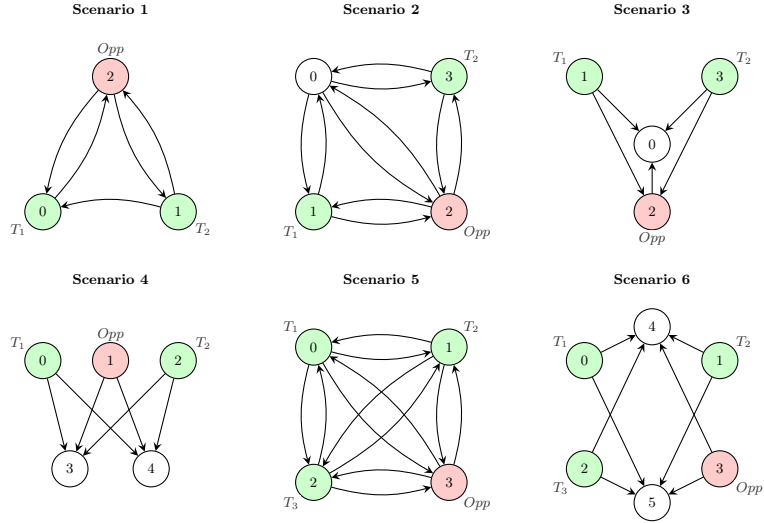

\input{ExpeirmentFigures/almost_sure_graph_scenarios}

\subsubsection{Robot coordination on a grid.}
This benchmark is adapted from the robot coordination case study in~\cite{kwiatkowska2021automatic}. The game occurs on an $H \times W$ grid. The cooperative team consists of $k$ robots $\Players = \{r_1, \dots, r_k\}$, while the opponent $\opp$ controls the environment. A state $s = (l_1, \dots, l_k)$ represents the coordinate positions of all robots.

\paragraph{Dynamics.} In each round, every robot independently chooses a move from $A = \{\text{N}, \text{S}, \text{E}, \text{W}, \text{Wait}\}$. Simultaneously, the opponent sets the wind condition from $A_{\opp} = \{\text{N}, \text{S}, \text{E}, \text{W}, \text{Calm}\}$.

\paragraph{Transitions.} The game includes a time penalty to motivate the team to reach the target quickly. At the end of each round, with probability $p=0.5$, the game transitions to a sink loss state. If the game continues, transitions depend on the interaction between the robot's action and the wind. If any two robots occupy the same cell, the team collides and loses immediately. Otherwise, movement occurs as follows:
\begin{enumerate}
    \item \textbf{Calm conditions:} If the wind is \textit{Calm}, all robots move deterministically to their chosen target (or remain stationary if they chose \textit{Wait}).
    \item \textbf{Waiting in wind:} If a robot chooses \textit{Wait} during a storm, the wind pushes it one step in the wind's direction with probability 1.
    \item \textbf{Moving in wind:} If a robot attempts to move, we compare its direction to the wind:
    \begin{itemize}
        \item \textbf{Tailwind:} If the directions match, the wind assists the robot. It moves to the target cell with probability 1.
        \item \textbf{Headwind or crosswind:} If the robot fights the wind, it reaches its target with probability 0.5. With probability 0.5, it is overpowered and pushed one step in the wind's direction.
    \end{itemize}
\end{enumerate}

\paragraph{Objective.} The team aims to maximise the probability of reaching a target configuration $T$ without collisions and before the random termination occurs. The opponent minimises this probability.

We evaluated four scenarios shown in Figure \ref{fig:robot_scenarios}. The results are presented in Table \ref{table:robotgame}.

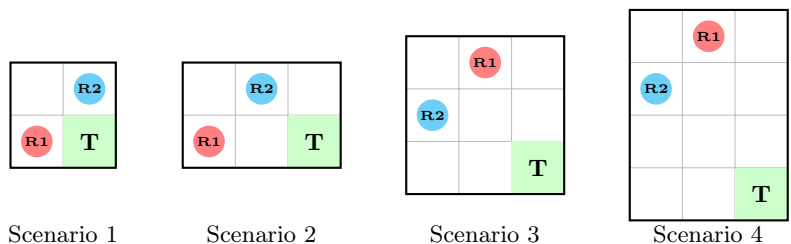
\begin{figure*}[h!]
\centering
\setlength{\tabcolsep}{12pt}
\begin{tabular}{cccc}
\begin{tikzpicture}[scale=0.7, baseline=(current bounding box.center)]
  \draw[step=1.0,gray!50,thin] (0,0) grid (2,2);
  \fill[green!20] (1.0,0.0) rectangle ++(1,1);
  \node[font=\footnotesize] at (1.5,0.5) {\textbf{T}};
  \fill[red!50] (0.5,0.5) circle (0.3);
  \node[font=\tiny] at (0.5,0.5) {\textbf{R1}};
  \fill[cyan!50] (1.5,1.5) circle (0.3);
  \node[font=\tiny] at (1.5,1.5) {\textbf{R2}};
  \draw[thick] (0,0) rectangle (2,2);
\end{tikzpicture} & \begin{tikzpicture}[scale=0.7, baseline=(current bounding box.center)]
  \draw[step=1.0,gray!50,thin] (0,0) grid (3,2);
  \fill[green!20] (2.0,0.0) rectangle ++(1,1);
  \node[font=\footnotesize] at (2.5,0.5) {\textbf{T}};
  \fill[red!50] (0.5,0.5) circle (0.3);
  \node[font=\tiny] at (0.5,0.5) {\textbf{R1}};
  \fill[cyan!50] (1.5,1.5) circle (0.3);
  \node[font=\tiny] at (1.5,1.5) {\textbf{R2}};
  \draw[thick] (0,0) rectangle (3,2);
\end{tikzpicture} & \begin{tikzpicture}[scale=0.7, baseline=(current bounding box.center)]
  \draw[step=1.0,gray!50,thin] (0,0) grid (3,3);
  \fill[green!20] (2.0,0.0) rectangle ++(1,1);
  \node[font=\footnotesize] at (2.5,0.5) {\textbf{T}};
  \fill[red!50] (1.5,2.5) circle (0.3);
  \node[font=\tiny] at (1.5,2.5) {\textbf{R1}};
  \fill[cyan!50] (0.5,1.5) circle (0.3);
  \node[font=\tiny] at (0.5,1.5) {\textbf{R2}};
  \draw[thick] (0,0) rectangle (3,3);
\end{tikzpicture} & \begin{tikzpicture}[scale=0.7, baseline=(current bounding box.center)]
  \draw[step=1.0,gray!50,thin] (0,0) grid (3,4);
  \fill[green!20] (2.0,0.0) rectangle ++(1,1);
  \node[font=\footnotesize] at (2.5,0.5) {\textbf{T}};
  \fill[red!50] (1.5,3.5) circle (0.3);
  \node[font=\tiny] at (1.5,3.5) {\textbf{R1}};
  \fill[cyan!50] (0.5,2.5) circle (0.3);
  \node[font=\tiny] at (0.5,2.5) {\textbf{R2}};
  \draw[thick] (0,0) rectangle (3,4);
\end{tikzpicture} \\
\footnotesize Scenario 1 & \footnotesize Scenario 2 & \footnotesize Scenario 3 & \footnotesize Scenario 4 \\
\end{tabular}
\caption{The four benchmark scenarios for the robot game. Robots are shown as circles ($R_i$), and the target is marked with T.}
\label{fig:robot_scenarios}
\end{figure*}

\subsubsection{Jamming multi-channel radio systems.} Adapted from \cite{kwiatkowska2021automatic}, this benchmark models $k$ sensors $\Players = \{x_1, \dots, x_k\}$ transmitting packets over $C$ frequency channels against a jammer $\opp$. A state is defined by buffer levels $s = (b_1, \dots, b_k)$. Initially, each sensor $x_i$ holds $B_i$ packets; $b_i \in \{0, \dots, B_i\}$ tracks the remaining packets.

\textit{Dynamics.} In each round, sensors independently choose an action from $A = \{1, \dots, C, \text{Wait}\}$, while the jammer targets a channel from $A_{\opp} = \{1, \dots, C, \text{Idle}\}$. Transition rules for sensor $x_i$ on channel $c$:

\begin{enumerate}
    \item \textbf{Successful transmission}: The buffer decrements ($b'_i = b_i - 1$) if and only if the opponent does not jam channel $c$ ($a_{\opp} \neq c$) and no other sensor selects $c$ (no collision).
    \item \textbf{Transmission failure}: If channel $c$ is jammed by $\opp$ or selected by another sensor (collision), the team players lose.
    \item \textbf{Wait}: If the sensor chooses \textit{Wait}, it makes no transmission attempt, and the buffer remains unchanged ($b'_i = b_i$).
\end{enumerate}

\textbf{Objective}. The team maximises the probability of reaching the target $T = (0, \dots, 0)$ where all buffers are empty, while the opponent minimises this probability. 

The results for this benchmark are presented in Table \ref{table:jamminggame}.

\subsection{Results}

\label{app:experiment_results}

This section contains the full results for each of the benchmarks. They are presented in Tables \ref{table:graphgame}, \ref{table:robotgame}, \ref{table:jamminggame}, and \ref{table:almostsure} in the following pages.
\newpage

\begin{table}[h]
\centering
\small 
\begin{tabular}{|c|c|c|c|c|c|c|c|c|c|}
\hline
\multirow{2}{*}{\textbf{Scenario}} & \multirow{2}{*}{\textbf{Team}} & \textbf{States} & \multirow{2}{*}{\textbf{Algorithm}} & \multicolumn{3}{c|}{\textbf{Individual Randomness}} & \multicolumn{3}{c|}{\textbf{Shared Randomness}} \\
 & & \textbf{Trans} & & \textbf{Time} & \textbf{Val} & \textbf{Iter} & \textbf{Time} & \textbf{Val} & \textbf{Iter} \\ \hline
\multirow{5}{*}{1} & \multirow{5}{*}{2} & \multirow{5}{*}{\shortstack{27\\512}} & ETR-Direct & T.O. & N/F & N/A & T.O. & N/F & N/A \\
 & & & VI-ETR & 4.42 & 0.290 & 6 & 0.40 & 0.500 & 2 \\
 & & & VI-OPT & 0.54 & 0.290 & 6 & 0.02 & 0.500 & 2 \\
 & & & VI-Hybrid & 0.58 & 0.290 & 6 & 0.05 & 0.500 & 2 \\
 & & & PRISM & - & - & N/A & 1.58 & 0.500 & N/A \\ \hline
\multirow{5}{*}{2} & \multirow{5}{*}{2} & \multirow{5}{*}{\shortstack{64\\4096}} & ETR-Direct & T.O. & N/F & N/A & T.O. & N/F & N/A \\
 & & & VI-ETR & T.O. & 0.000 & 1 & 3.83 & 0.750 & 2 \\
 & & & VI-OPT & 0.39 & 0.300 & 9 & 0.27 & 0.750 & 2 \\
 & & & VI-Hybrid & T.O. & 0.000 & 1 & 0.56 & 0.750 & 2 \\
 & & & PRISM & - & - & N/A & 1.58 & 0.750 & N/A \\ \hline
\multirow{5}{*}{3} & \multirow{5}{*}{2} & \multirow{5}{*}{\shortstack{64\\729}} & ETR-Direct & T.O. & N/F & N/A & 0.48 & 1.000 & N/A \\
 & & & VI-ETR & 14.38 & 0.296 & 6 & 47.41 & 0.990 & 98 \\
 & & & VI-OPT & 0.23 & 0.296 & 6 & 2.27 & 0.990 & 100 \\
 & & & VI-Hybrid & 1.77 & 0.296 & 6 & 6.12 & 0.990 & 100 \\
 & & & PRISM & - & - & N/A & 1.34 & 0.999 & N/A \\ \hline
\multirow{5}{*}{4} & \multirow{5}{*}{2} & \multirow{5}{*}{\shortstack{125\\1331}} & ETR-Direct & T.O. & N/F & N/A & T.O. & N/F & N/A \\
 & & & VI-ETR & 35.21 & 0.348 & 4 & 76.80 & 0.993 & 70 \\
 & & & VI-OPT & 0.27 & 0.348 & 4 & 3.66 & 0.993 & 72 \\
 & & & VI-Hybrid & 2.06 & 0.348 & 4 & 10.19 & 0.993 & 72 \\
 & & & PRISM & - & - & N/A & 1.36 & 0.999 & N/A \\ \hline
\multirow{5}{*}{5} & \multirow{5}{*}{3} & \multirow{5}{*}{\shortstack{256\\65536}} & ETR-Direct & T.O. & N/F & N/A & T.O. & N/F & N/A \\
 & & & VI-ETR & T.O. & 0.000 & 1 & 57.89 & 0.750 & 2 \\
 & & & VI-OPT & 1.98 & 0.075 & 8 & 3.00 & 0.750 & 2 \\
 & & & VI-Hybrid & T.O. & 0.000 & 1 & 15.49 & 0.750 & 2 \\
 & & & PRISM & - & - & N/A & 7.32 & 0.750 & N/A \\ \hline
\multirow{5}{*}{6} & \multirow{5}{*}{3} & \multirow{5}{*}{\shortstack{1296\\38416}} & ETR-Direct & T.O. & N/F & N/A & T.O. & N/F & N/A \\
 & & & VI-ETR & T.O. & 0.125 & 2 & T.O. & 0.976 & 22 \\
 & & & VI-OPT & 4.38 & 0.173 & 3 & 138.64 & 0.993 & 72 \\
 & & & VI-Hybrid & T.O. & 0.125 & 2 & 312.18 & 0.993 & 72 \\
 & & & PRISM & - & - & N/A & 2.29 & 0.999 & N/A \\ \hline
\end{tabular}
\caption{Experimental results for the six scenarios illustrated in Figure \ref{fig:gamegraphs}. For each algorithm, the table lists the execution time in seconds, the computed value, and the number of iterations required for convergence. The label ``T.O.'' denotes a timeout after 600 seconds. In cases where an algorithm timed out, we report the value obtained from the final completed iteration. ``N/F'' indicates Not Found.}
\label{table:graphgame}
\end{table}

\begin{table}
\centering
\footnotesize
\setlength{\tabcolsep}{3pt} 
\begin{tabular}{|c|c|c|c|c|c|c|c|c|c|}
\hline
\multirow{2}{*}{\textbf{Scenario}} & \multirow{2}{*}{\textbf{Team Size}} & \textbf{States} & \multirow{2}{*}{\textbf{Algorithm}} & \multicolumn{3}{c|}{\textbf{Individual Randomness}} & \multicolumn{3}{c|}{\textbf{Shared Randomness}} \\
 & & \textbf{Trans} & & \textbf{Time} & \textbf{Value} & \textbf{Iters} & \textbf{Time} & \textbf{Value} & \textbf{Iters} \\ \hline
\multirow{5}{*}{1} & \multirow{5}{*}{2} & \multirow{5}{*}{\shortstack{16\\2000}} & ETR-Direct & T.O. & N/F & N/A & T.O. & N/F & N/A \\
 & & & VI-ETR & 7.86 & 0.293 & 6 & 4.86 & 0.324 & 6 \\
 & & & VI-OPT & 0.56 & 0.274 & 6 & 0.72 & 0.323 & 6 \\
 & & & VI-Hybrid & 5.35 & 0.293 & 6 & 2.27 & 0.324 & 6 \\
 & & & PRISM & - & - & N/A & 1.47 & 0.324 & N/A \\ \hline
\multirow{5}{*}{2} & \multirow{5}{*}{2} & \multirow{5}{*}{\shortstack{36\\4500}} & ETR-Direct & T.O. & N/F & N/A & T.O. & N/F & N/A \\
 & & & VI-ETR & T.O. & 0.094 & 3 & 23.31 & 0.115 & 8 \\
 & & & VI-OPT & 1.33 & 0.113 & 8 & 2.81 & 0.115 & 8 \\
 & & & VI-Hybrid & T.O. & 0.000 & 2 & 13.60 & 0.115 & 8 \\
 & & & PRISM & - & - & N/A & 1.74 & 0.115 & N/A \\ \hline
\multirow{5}{*}{3} & \multirow{5}{*}{2} & \multirow{5}{*}{\shortstack{81\\10125}} & ETR-Direct & T.O. & N/F & N/A & T.O. & N/F & N/A \\
 & & & VI-ETR & T.O. & 0.000 & 3 & 68.28 & 0.038 & 9 \\
 & & & VI-OPT & 3.61 & 0.033 & 9 & 9.54 & 0.036 & 9 \\
 & & & VI-Hybrid & T.O. & 0.000 & 3 & 42.59 & 0.038 & 9 \\
 & & & PRISM & - & - & N/A & 2.56 & 0.038 & N/A \\ \hline
\multirow{5}{*}{4} & \multirow{5}{*}{2} & \multirow{5}{*}{\shortstack{144\\18000}} & ETR-Direct & T.O. & N/F & N/A & T.O. & N/F & N/A \\
 & & & VI-ETR & T.O. & 0.000 & 3 & 143.59 & 0.011 & 10 \\
 & & & VI-OPT & 6.09 & 0.004 & 9 & 17.10 & 0.010 & 10 \\
 & & & VI-Hybrid & T.O. & 0.000 & 3 & 96.53 & 0.011 & 10 \\
 & & & PRISM & - & - & N/A & 3.64 & 0.011 & N/A \\ \hline
\end{tabular}
\caption{Experimental results for the four scenarios illustrated in Figure 
\ref{fig:robot_scenarios}. For each algorithm, the table lists the execution 
time in seconds, the computed value, and the number of iterations 
required for convergence. The label "T.O." denotes a timeout after 600 seconds. 
In cases where an algorithm timed out, we report the value obtained 
from the final completed iteration.}
\label{table:robotgame}
\end{table}

\begin{table}
\centering
\footnotesize
\setlength{\tabcolsep}{2.5pt} 
\begin{tabular}{|c|c|c|c|c|c|c|c|c|}
\hline
\multirow{2}{*}{\textbf{Parameters}} & \textbf{States} & \multirow{2}{*}{\textbf{Algorithm}} & \multicolumn{3}{c|}{\textbf{Individual Randomness}} & \multicolumn{3}{c|}{\textbf{Shared Randomness}} \\
 & \textbf{Trans} & & \textbf{Time} & \textbf{Value} & \textbf{Iters} & \textbf{Time} & \textbf{Value} & \textbf{Iters} \\ \hline
\multirow{5}{*}{C=2, B=[1, 1]} & \multirow{5}{*}{\shortstack{5\\135}} & ETR-Direct & T.O. & N/F & N/A & T.O. & N/F & N/A \\
 & & VI-ETR & 0.33 & 0.250 & 3 & 0.25 & 0.250 & 3 \\
 & & VI-OPT & 0.62 & 0.246 & 46 & 0.02 & 0.250 & 3 \\
 & & VI-Hybrid & 0.08 & 0.250 & 3 & 0.03 & 0.250 & 3 \\
 & & PRISM & - & - & N/A & 1.38 & 0.250 & N/A \\ \hline
\multirow{5}{*}{C=2, B=[2, 2]} & \multirow{5}{*}{\shortstack{10\\270}} & ETR-Direct & T.O. & N/F & N/A & T.O. & N/F & N/A \\
 & & VI-ETR & 1.35 & 0.062 & 5 & 0.96 & 0.062 & 5 \\
 & & VI-OPT & 1.23 & 0.061 & 46 & 0.07 & 0.063 & 5 \\
 & & VI-Hybrid & 0.31 & 0.063 & 5 & 0.14 & 0.063 & 5 \\
 & & PRISM & - & - & N/A & 1.29 & 0.062 & N/A \\ \hline
\multirow{5}{*}{C=3, B=[3, 3]} & \multirow{5}{*}{\shortstack{17\\1088}} & ETR-Direct & T.O. & N/F & N/A & T.O. & N/F & N/A \\
 & & VI-ETR & T.O. & 0.000 & 2 & 6.18 & 0.088 & 7 \\
 & & VI-OPT & 2.67 & 0.084 & 39 & 0.50 & 0.088 & 7 \\
 & & VI-Hybrid & T.O. & 0.000 & 1 & 1.15 & 0.088 & 7 \\
 & & PRISM & - & - & N/A & 1.37 & 0.088 & N/A \\ \hline
\multirow{5}{*}{C=3, B=[4, 4]} & \multirow{5}{*}{\shortstack{26\\1664}} & ETR-Direct & T.O. & N/F & N/A & T.O. & N/F & N/A \\
 & & VI-ETR & T.O. & 0.000 & 2 & 11.53 & 0.039 & 9 \\
 & & VI-OPT & 4.34 & 0.039 & 39 & 0.83 & 0.039 & 9 \\
 & & VI-Hybrid & T.O. & 0.000 & 1 & 2.31 & 0.039 & 9 \\
 & & PRISM & - & - & N/A & 1.39 & 0.039 & N/A \\ \hline
\multirow{5}{*}{C=4, B=[5, 5]} & \multirow{5}{*}{\shortstack{37\\4625}} & ETR-Direct & T.O. & N/F & N/A & T.O. & N/F & N/A \\
 & & VI-ETR & T.O. & 0.000 & 1 & 34.32 & 0.056 & 11 \\
 & & VI-OPT & 8.24 & 0.050 & 40 & 2.62 & 0.056 & 11 \\
 & & VI-Hybrid & T.O. & 0.000 & 1 & 9.41 & 0.056 & 11 \\
 & & PRISM & - & - & N/A & 1.44 & 0.056 & N/A \\ \hline
\multirow{5}{*}{C=4, B=[6, 6]} & \multirow{5}{*}{\shortstack{50\\6250}} & ETR-Direct & T.O. & N/F & N/A & T.O. & N/F & N/A \\
 & & VI-ETR & T.O. & 0.000 & 1 & 55.06 & 0.032 & 13 \\
 & & VI-OPT & 10.24 & 0.024 & 40 & 4.59 & 0.030 & 14 \\
 & & VI-Hybrid & T.O. & 0.000 & 1 & 17.41 & 0.032 & 13 \\
 & & PRISM & - & - & N/A & 1.54 & 0.032 & N/A \\ \hline
\end{tabular}
\caption{Performance comparison of algorithms on the jamming multi-channel radio system 
benchmark. Each scenario is defined by the number of channels $C$ followed by 
the initial buffer sizes $[B_1, \dots, B_k]$ for $k$ players. We report the 
number of states, number of transitions, execution time in seconds, computed 
game value $V$, and the number of iterations for the value iteration algorithms. ``T.O.'' indicates the algorithm exceeded 
the 600s time limit. For instances that timed out, we provide the value 
from the last completed iteration, where available.}
\label{table:jamminggame}
\end{table}

\begin{table}
\centering
\footnotesize
\setlength{\tabcolsep}{2.5pt} 
\begin{tabular}{|c|c|c|c|c|c|c|c|}
\hline
\multirow{2}{*}{\textbf{Scenario}} & \multirow{2}{*}{\begin{tabular}[c]{@{}c@{}}\textbf{States} \\ \textbf{Trans}\end{tabular}} & \multirow{2}{*}{\textbf{Method}} & \multicolumn{2}{c|}{\textbf{Individual Randomness}} & \multicolumn{2}{c|}{\textbf{Shared Randomness}} \\ \cline{4-7}
 & & & \textbf{Time} & \textbf{Result} & \textbf{Time} & \textbf{Result} \\ \hline
\multirow{2}{*}{1} & \multirow{2}{*}{\begin{tabular}[c]{@{}c@{}} 27 \\ 729 \end{tabular}} & SAT-Direct & 0.02 & UNSAT & 0.02 & UNSAT \\
 & & PRISM-Qual & -- & -- & 1.62 & UNSAT \\ \hline
\multirow{2}{*}{2} & \multirow{2}{*}{\begin{tabular}[c]{@{}c@{}} 64 \\ 1728 \end{tabular}} & SAT-Direct & 0.09 & SAT & 0.09 & SAT \\
 & & PRISM-Qual & -- & -- & 1.42 & SAT \\ \hline
\multirow{2}{*}{3} & \multirow{2}{*}{\begin{tabular}[c]{@{}c@{}} 125 \\ 4913 \end{tabular}} & SAT-Direct & 0.29 & UNSAT & 0.33 & UNSAT \\
 & & PRISM-Qual & -- & -- & 1.39 & UNSAT \\ \hline
\multirow{2}{*}{4} & \multirow{2}{*}{\begin{tabular}[c]{@{}c@{}} 216 \\ 5832 \end{tabular}} & SAT-Direct & 0.67 & SAT & 0.60 & SAT \\
 & & PRISM-Qual & -- & -- & 1.43 & SAT \\ \hline
\multirow{2}{*}{5} & \multirow{2}{*}{\begin{tabular}[c]{@{}c@{}} 512 \\ 21952 \end{tabular}} & SAT-Direct & 5.26 & SAT & 8.46 & SAT \\
 & & PRISM-Qual & -- & -- & 2.94 & SAT \\ \hline
\multirow{2}{*}{6} & \multirow{2}{*}{\begin{tabular}[c]{@{}c@{}} 1000 \\ 46656 \end{tabular}} & SAT-Direct & 20.54 & UNSAT & 19.93 & UNSAT \\
 & & PRISM-Qual & -- & -- & 5.98 & UNSAT \\ \hline
\multirow{2}{*}{7} & \multirow{2}{*}{\begin{tabular}[c]{@{}c@{}} 1728 \\ 97336 \end{tabular}} & SAT-Direct & 93.70 & UNSAT & 58.45 & UNSAT \\
 & & PRISM-Qual & -- & -- & 42.27 & UNSAT \\ \hline
\end{tabular}
\caption{Experimental results for the almost-sure reachability problem on the Pursuit-Evasion benchmark. The scenarios are illustrated in Figure \ref{fig:almost_sure_graph}. For each scenario, the table lists the execution time in seconds and the outcome. We compare our SAT-Direct algorithm (under both individual and shared randomness) against the PRISM-games baseline (which supports only shared randomness). A result of \textbf{SAT} indicates that the target is almost-surely reachable, while \textbf{UNSAT} indicates that it is not.}
\label{table:almostsure}
\end{table}

\pagebreak

\end{document}